\documentclass[11pt]{amsart}

\usepackage{amsfonts}
\usepackage{amsmath}
\usepackage{amssymb}
\usepackage{graphicx}%
\usepackage{comment}
\usepackage{epsf}
\usepackage{epsfig}
\usepackage{xypic}
\setcounter{MaxMatrixCols}{30} \theoremstyle{plain}

\newtheorem{corollary}{Corollary}[section]

\newtheorem{definition}{Definition}[section]
\newtheorem{example}{Example}[section]

\newtheorem{lemma}{Lemma}[section]

\newtheorem{proposition}{Proposition}[section]

\newtheorem{theorem}{Theorem}[section]

\numberwithin{equation}{section}

\makeatletter
\newcommand\twoheaddownarrow{%
\mathrel{\mathchoice
  {\raise2pt\hbox{%
  \ooalign{\hss$\downarrow$\hss\cr\lower2pt\hbox{%
  $\downarrow$}}}}
  {\raise2pt\hbox{%
  \ooalign{\hss$\downarrow$\hss\cr\lower2pt\hbox{%
  $\downarrow$}}}}
  {\raise1.5pt\hbox{%
  \ooalign{\hss$\scriptstyle\downarrow$\hss\cr\lower1.5pt\hbox{%
  $\scriptstyle\downarrow$}}}}
  {\raise1.1pt\hbox{%
  \ooalign{\hss$\scriptscriptstyle\downarrow$\hss\cr\lower1.1pt\hbox{%
  $\scriptscriptstyle\downarrow$}}}}
}}

\begin{document}

\vspace{0.5in}

\title[The Framework, Causal and Co-Compact Structure of Space-time]
{The Framework, Causal and Co-Compact  \\ Structure of Space-time}
\author{Martin Kov\'{a}r, Alena Chernikava}

\date{November 13, 2013}
\subjclass[2000]{} \keywords{Causal site, co-compact topology, formal context, Lorentzian manifold,
quantum gravity.}

\begin{abstract} We introduce a canonical, compact topology, which we call {\it weakly causal},
naturally generated by the causal site of J. D. Christensen and L. Crane, a pointless algebraic
structure motivated by certain problems of quantum gravity.  We show that for every
four-dimensional globally hyperbolic Lorentzian manifold there exists an associated causal site,
whose weakly causal topology is co-compact with respect to the manifold topology and vice versa.
Thus, the causal site has the full information about the topology of space-time, represented by the
Lorentzian manifold. In addition, we show that there exist also non-Lorentzian causal sites (whose
causal relation is not a continuous poset) and so the weakly causal topology and its de Groot dual
extends the usual manifold topology of space-time beyond topologies generated by the traditional,
smooth model. As a source of inspiration in topologizing the studied causal structures, we use some
methods and constructions of general topology and formal concept analysis.
\end{abstract}


\maketitle

\renewcommand\theenumi{\roman{enumi}}
\renewcommand\theenumii{\arabic{enumii}}
\font\eurb=eurb9 \font\seurb=eurb7
\def\integral{\int}
\def\cl{\operatorname{cl}}
\def\iff{if and only if }
\def\sup{\operatorname{sup}}
\def\clt{\operatorname{cl}_\theta}
\def\cli#1{\operatorname{cl}_{#1}}
\def\inti#1{\operatorname{int}_{#1}}
\def\Int{\operatorname{int}}
\def\intt{\operatorname{int}_\theta}
\def\id#1{\operatorname{\text{\sl id}}_{#1}}
\def\fr{\operatorname{fr}}
\def\ord{\operatorname{ord}}
\def\SIGMA{\operatorname{\Sigma}}
\def\cf{\operatorname{cf}}
\def\d{\operatorname{d}}
\def\diag{\operatorname{\Delta}}
\def\Mezera{\vskip 15 mm}
\def\mezera{\bigskip}
\def\Mezerka{\medskip}
\def\mezerka{\smallskip}
\def\iff{if and only if }
\def\G{\frak G}
\def\A{\Bbb A}
\def\I{\Bbb I}
\def\C{\mathcal C}
\def\F{\mathcal F}
\def\L{\mathcal L}
\def\P{\mathcal P}
\def\B{\frak B}
\def\O{\mathcal O}
\def\T{\frak T}
\def\X{\frak X}
\def\S{\mathcal S}
\def\K{\mathcal K}
\def\Q{\mathcal Q}
\def\R{\Bbb R}
\def\N{\Bbb N}
\def\Z{\Bbb Z}
\def\D{\Bbb D}
\def\Ds{\mathcal D}
\def\T{\mathcal T}
\def\zero{\bold 0}
\def\m{\frak m}
\def\n{\frak n}
\def\ts{  space }
\def\nbd{neighborhood }
\def\nbds{neighborhoods }
\def\card{cardinal }
\def\implies{\Rightarrow }
\def\map{\rightarrow }
\def\ekv{\Leftrightarrow }
\def\gre{\succcurlyeq}
\def\ngre{\not\succcurlyeq}
\def\gr{\succ}
\def\lre{\preccurlyeq}
\def\Ty{$T_{3.5}$ }
\def\Tyk{$T_{3.5}$}
\def\Slc{\text{\eurb Slc}}
\def\Top{\text{\eurb Top}}
\def\top{\text{\seurb Top}}
\def\Comp{\text{\eurb Comp}}
\def\TReg{\text{\eurb $\Theta$-Reg}}
\def\eK{\text{\eurb K}}
\def\M{\Bbb M}
\def\up{\uparrow\!\!}
\def\down{\downarrow\!\!}
\def\meet{\wedge}
\def\join{\vee}
\def\causeq{\preccurlyeq}
\def\caus{\prec}
\def\doubleprec{\prec\!\!\prec}

\section{Introduction}

\medskip

The belief that the causal structure of space-time is one  of its  most fundamental underlying
structures is almost as old as the idea of the relativistic space-time itself, but how is it
related to the topology of space-time? By tradition, there is no doubt regarding the topology of
space-time at least locally. It is usually considered to be locally homeomorphic with the Cartesian
power of the real line, equipped with the Euclidean topology. More recently, however, there
appeared ideas of discrete and pointless models of space-time in which the causal structure is
introduced axiomatically and so independently on the locally Euclidean models. Is, in these cases,
the axiomatic causal structure rich enough to carry also the full topological information? In
addition, after all, how the topology that we perceive around us and which is essentially and
implicitly at the background of many physical phenomena, may arise? The perceived topology of the
universe belongs to our reality similarly  as light, matter or gravitation, but the process of its
generation, certainly a fascinating, uninvestigated phenomenon, unfortunately has been ignored for
a long time.

The previous observation about $\R$ can be easily extended to any locally compact Hausdorff
topological space $(X, \tau)$, covering the most of the usual relativistic models of space-time. If
the original topology $\tau$ is compact Hausdorff, then both topologies coincide, so we can
restrict our further consideration to the non-compact case only. Then its counterpart -- the dual
topology $\tau^G$, is compact, T$_1$, superconnected, and what is also important, it is strictly
weaker than $\tau$ but it coincides with $\tau$ on every compact subset -- physically speaking, at
finite distances. Explaining the more complicated objects and phenomena by the simpler ones, is a
widely accepted and fundamental principle of science. In this sense, the co-compact topology
$\tau^G$ may be regarded as more fundamental for the natural world than the Euclidean topology of
Minkowski space or than the usual topology  $\tau$ of Lorentzian manifolds, since although it is
simpler, it still contains the full information about the usual topology in its compact subsets.

Since for a locally compact Hausdorff space we have $\tau=\tau^{GG}$, the preference of $\tau^G$
over $\tau$  seems to be rather a matter of the topologist's taste, or a modification of the
well-known dilemma {\it ``which came first -- chicken or the egg"}. However, there is an indication
that the compact T$_1$ topologies could be more fundamental for explanation of the physical
phenomenon of ``generating the topology" than the locally compact Hausdorff topologies. The reason
lies in the properties of the structure called a {\it causal site}, introduced by J. D. Christensen
and L. Crane in \cite{CC}. Causal site is a very general, pointless algebraic structure, capturing
the behavior of the light cones in relativity,  motivated by certain problems and developments in
quantum gravity. It is a natural generalization of R. Sorkin's causal sets, based on a modification
of the Grothendieck topology \cite{Ar}. In the next part of the paper we will find out that every
causal site generates, in a very natural way, a compact T$_1$ topology (and, in a contrast, not a
non-compact locally compact Hausdorff topology).

However, we will show even more than that -- the causal site can be selected in  such a way that
the generated topology will be the co-compact topology of the topology of any four-dimensional
globally hyperbolic Lorentz\-ian manifold. Considering this construction, it should be noted that
usually it is not a problem to choose certain regions of space-time as the topological subbase for
closed or open sets and reconstruct from them the de Groot dual of the original topology or even
the original topology itself, as well as the underlying set of points. The difficult part consists
in selecting appropriately these regions (with the corresponding binary relations) and proving,
that they satisfy all the axioms of a compatible causal site.

Nevertheless, the main purpose and sense of the algebraic structure of causal sites do not consist
in another, perhaps modified formulation of general relativity. The structure was designed in order
to formalize some alternative, very general (and perhaps hypothetical) models of space-time, very
different from the models that one can meet in classical general relativity. Our construction is
then a contribution in a topologization of these models of the generalized space-time, lying beyond
of the traditional, smooth model.

As a tool for topologization of various models of reality we will introduce a general construction,
suitable for equipping a set of objects with a topology-like structure, using the inner, natural
and intuitive relationships between them.

\bigskip

\section{Topological Prerequisites}\label{prerequisites}

Throughout this paper, we mostly use the usual terminology of general topology, for which the
reader is referred to \cite{Cs} or \cite{En}, with one exception -- in a consensus with a modern
approach to general topology, we no longer assume the Hausdorff separation axiom as a part of the
definition of compactness. This is especially affected by some recent motivations from computer
science, but also the contents of the paper \cite{HPS} confirms that such a modification of the
definition of compactness is a relevant idea. Thus, we say that a topological space is {\it
compact}, if every open cover of the space has a finite subcover, or equivalently, if every
centered system of closed sets or a closed filter base has a non-empty intersection. Note that by
the well-known Alexander's subbase lemma, the general closed sets may be replaced by more special
elements of any closed subbase for the topology.

We have already  mentioned the co-compact or the de Groot dual topology, which was first
systematically studied probably at the end of the 60's by  J. de Groot and his coworkers, J. M.
Aarts, H. Herrlich, G. E. Strecker and E. Wattel. The initial paper is \cite{Gro}. About 20 years
later, motivated by research in domain theory, the co-compact topology again came to the center of
interest of some topologists and theoretical computer scientists. During discussions in the
community the original definition due to de Groot was slightly changed to its current form,
inserting a word ``saturated" to the original definition (a set is saturated, if it is an
intersection of open sets, so in a T$_1$ space, all sets are saturated). Let $(X,\tau)$ be a
topological space. The topology, generated by the family of all compact saturated sets, used as the
base for the closed sets, we denote by $\tau^G$ and call it {\it co-compact} or {\it de Groot} dual
with respect to the original topology $\tau$. In \cite{LM} J. Lawson and M. Mislove stated
question, whether the sequence, containing the iterated duals of the original topology, is infinite
or the process of taking duals terminates after finitely many steps with topologies that are dual
to each other. In 2001 the first author solved the question and proved that only four topologies
may arise (see \cite{Kov1}).

The following theorem summarizes the previously mentioned facts important for understanding the
main results, contained in Section~\ref{causal}. The theorem itself is not new, under slightly
different terminology the reader can essentially find it in \cite{Gro}. A more general result,
equivalently characterizing the topologies satisfying $\tau=\tau^{GG}$, the reader may find in the
second author's paper \cite{Kov3}.  The proof we present here only for the reader's convenience,
without any claims of originality. For the proof we need to use the following notion. Let $\psi$ be
a family of sets. We say that $\psi$ has the finite intersection property, or briefly, that $\psi$
has {\it f.i.p.}, if for every $P_1, P_2,\dots, P_k\in\psi$ it follows $P_1\cap P_2\cap \dots\cap
P_k\ne\varnothing$. In some literature (for example, in \cite{Cs}), a collection $\psi$ with this
property is called {\it centered}.

\bigskip

\begin{theorem}\label{degroot} Let $(X,\tau)$ be a non-compact, locally compact Hausdorff topological space. Then
\begin{enumerate}
\item $\tau^{G}\subseteq \tau$, \item $\tau=\tau^{GG}$, \item $(X,\tau^{G})$ is compact and
superconnected, \item the topologies induced from $\tau$ and $\tau^G$ coincide on every compact
subset of $(X,\tau)$.
\end{enumerate}
\end{theorem}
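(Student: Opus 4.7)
My plan is to dispatch (i), (iii), and (iv) as fairly direct consequences of two basic facts — that $\tau$-compact sets are $\tau$-closed in a Hausdorff space, and that saturation in a T$_1$ space is automatic, so the defining subbase of $\tau^G$-closed sets reduces to the family of $\tau$-compact sets — and then to concentrate on (ii), which will carry the bulk of the work. For (i), each subbase element is already $\tau$-closed, and $\tau$-closedness is preserved by finite unions and arbitrary intersections, hence $\tau^G\subseteq\tau$. For the compactness half of (iii), I will invoke Alexander's subbase lemma: a subbase cover $\{X\setminus K_\alpha\}$ of $X$ without finite subcover would force $\{K_\alpha\}$ to have f.i.p.; fixing any $K_{\alpha_0}$ and intersecting the remaining $K_\alpha$ with it produces a centered family of $\tau$-closed subsets of the $\tau$-compact space $K_{\alpha_0}$, whose total intersection must be nonempty, contradicting that $\{X\setminus K_\alpha\}$ was a cover.

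Superconnectedness of $(X,\tau^G)$ will then follow at once, since any two nonempty basic $\tau^G$-open sets have the form $X\setminus L_1$ and $X\setminus L_2$ with $L_i$ $\tau$-compact, and their intersection $X\setminus(L_1\cup L_2)$ is nonempty because $L_1\cup L_2$ is $\tau$-compact while $X$ itself is not. For (iv), I take $K$ $\tau$-compact; the inclusion $\tau^G|_K\subseteq\tau|_K$ is inherited from (i), while any $\tau|_K$-closed set is a $\tau$-closed subset of the $\tau$-compact $K$, hence itself $\tau$-compact, hence a basic $\tau^G$-closed set, and so $\tau^G|_K$-closed. Thus the two relative topologies agree on every $\tau$-compact subset.

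For (ii), I plan to establish the two inclusions separately. The forward direction $\tau\subseteq\tau^{GG}$ reduces to showing that every $\tau$-closed set $F$ is $\tau^G$-compact: once this is in hand, $F$ is automatically $\tau^G$-saturated (the T$_1$-ness of $\tau^G$ being secured by singletons being $\tau$-compact, hence basic $\tau^G$-closed), so $F$ qualifies as a basic $\tau^{GG}$-closed set, forcing $X\setminus F$ to be $\tau^{GG}$-open. The $\tau^G$-compactness of $F$ itself will reduce, via Alexander, to the statement that if $\{F\cap K_\alpha\}$ has f.i.p. with each $K_\alpha$ $\tau$-compact, then the total intersection is nonempty; this will yield to the same $K_{\alpha_0}$-restriction trick as in (iii), since $F\cap K_\alpha\cap K_{\alpha_0}$ is $\tau$-closed in the $\tau$-compact space $K_{\alpha_0}$.

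The step I expect to be the main obstacle is the converse $\tau^{GG}\subseteq\tau$, which amounts to showing that every $\tau^G$-compact set is $\tau$-closed. I plan a proof by contradiction: supposing $C$ is $\tau^G$-compact with some $x\in X\setminus C$ lying in the $\tau$-closure of $C$, I will use local compactness together with the regularity of locally compact Hausdorff spaces to select a neighborhood base $\mathcal{B}_x$ of $\tau$-open sets $V\ni x$ whose closures $\overline{V}$ are $\tau$-compact and satisfy $\bigcap_{V\in\mathcal{B}_x}\overline{V}=\{x\}$. Then $\{X\setminus\overline{V}:V\in\mathcal{B}_x\}$ is a $\tau^G$-open cover of $X\setminus\{x\}\supseteq C$; however, any finite subselection leaves uncovered the intersection $\overline{V_1}\cap\cdots\cap\overline{V_n}$, which contains the $\tau$-open neighborhood $V_1\cap\cdots\cap V_n$ of $x$ and hence meets $C$ (as $x$ is a $\tau$-limit point of $C$). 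This produces a $\tau^G$-open cover of $C$ with no finite subcover, contradicting $\tau^G$-compactness. Once this is settled, the $\tau^{GG}$-closed subbase lies inside the lattice of $\tau$-closed sets, and closure under finite unions and arbitrary intersections gives $\tau^{GG}\subseteq\tau$.
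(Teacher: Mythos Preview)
Your proposal is correct and follows essentially the same strategy as the paper: parts (i), (iii), (iv), and the inclusion $\tau\subseteq\tau^{GG}$ all rest on the same ``fix one compact set and use f.i.p.\ of closed subsets'' maneuver, just phrased through Alexander's lemma rather than directly via centered families. The only noticeable variation is in the reverse inclusion $\tau^{GG}\subseteq\tau$: the paper takes $x\notin H$ and directly builds a separating $\tau$-neighborhood by covering $H$ with sets $X\smallsetminus\cl U_y$ indexed by points $y\in H$, whereas you argue by contradiction, fixing a compact-closure neighborhood base at $x$ and showing that if $x$ were a $\tau$-limit point of $C$ the resulting $\tau^G$-cover of $C$ would admit no finite subcover. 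Both arguments exploit exactly the same resource (compact closures of neighborhoods furnish basic $\tau^G$-closed sets), so the difference is one of packaging rather than substance.
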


\begin{proof} The topology $\tau^G$ has a closed base  which consists of  compact sets.
Since in a Hausdorff space all compact sets are closed, we have (i).

Let $C\subseteq X$ be a closed set with respect to $\tau$, to show that $C$ is compact with respect
to $\tau^G$, let us take a non-empty family $\Phi$ of compact subsets of $(X,\tau)$, such that the
family $\{C\}\cup\Phi$ has f.i.p. Take some $K\in\Phi$. Then the family $\{C\cap K\}\cup\{C\cap
F|\, F\in \Phi\}$ also has f.i.p. in a compact set $K$, so it has a non-empty intersection. Hence,
also the intersection of $\{C\}\cup\Phi$ is non-empty, which means that $C$ is compact with respect
to $\tau^G$. Consequently, $C$ is closed in $(X,\tau^{GG})$, which means that $\tau\subseteq
\tau^{GG}$. The topology $\tau^{GG}$ has a closed base consisting of sets which are compact in
$(X,\tau^{G})$. Take such a set, say $H\subseteq X$. Let $x\in X\smallsetminus H$. Since $(X,\tau)$
is locally compact and Hausdorff, for every $y\in H$ there exist $U_y, V_y\in \tau$ such that $x\in
U_y$, $y\in V_y$ and $U\cap V=\varnothing$, with $\cl U_y$ compact. Denote $W_y=X\smallsetminus\cl
U_y$. We have $y\in V_y\subseteq W_y$, so the sets $W_y$, $y\in H$ cover $H$. The complement of
$W_y$ is compact with respect to $\tau$, so $W_y\in \tau^G$. The family $\{W_y|\,y\in H\}$ is an
open cover of the compact set $H$ in $(X,\tau^G)$, so it has a finite subcover, say
$\{W_{y_1},W_{y_2},\dots,W_{y_k}\}$. Denote $U=\bigcap_{i=1}^k U_{x_i}$. Then $U\cap
H=\varnothing$, $x\in U\subseteq X\smallsetminus H$, which means that $X\smallsetminus H\in\tau$
and $H$ is closed in $(X,\tau)$. Hence, $\tau^{GG}\subseteq \tau$, an together with the previously
proved converse inclusion, it gives (ii).

Let us show (iii). Take any collection $\Psi$ of compact subsets of $(X,\tau)$ having f.i.p. They
are both compact and closed in $(X,\tau)$, so $\bigcap\Psi\ne\varnothing$. Then $(X,\tau^G)$ is
compact. Let $U,V\in\tau^G$ and suppose that $U\cap V=\varnothing$. The complements of $U$, $V$ are
compact in $(X,\tau)$ as intersections of compact closed sets in a Hausdorff space. Then $(X,\tau)$
is compact as a union of two compact sets, which is not possible. Hence, it holds (iii).

Finally, take a compact subset $K$ and a closed subset $C$ of $(X,\tau)$. Then $K\cap C$ is compact
in $(X,\tau)$ and hence closed  in $(X,\tau^G)$. Thus, the topology on $K$ induced from $\tau^G$ is
finer than the topology induced from $\tau$. Together with (i), we get (iv).
\end{proof}

\bigskip

\section{Causal Sites and Their Properties}\label{causal}

\medskip

In this section we will study the axiomatized causality relationships. Philosophically, the section
is inspired by the work of R. Sorkin and his notion of {\it causal set}, \cite{So}. However, we
will use a slightly different notion, due to J. D. Christensen and L. Crane, motivated by certain
problems in quantum gravity.

\medskip

Recall that a {\it causal site} $(\S,\sqsubseteq, \prec)$ defined by J. D. Christensen and L. Crane
in \cite{CC} is a set $\S$ of {\it regions} equipped with two binary relations $\sqsubseteq$,
$\prec$, where $(\S,\sqsubseteq)$ is a partial order having the binary suprema $\sqcup$ and the
least element $\bot\in \S$, and $(\S\smallsetminus\{\bot\},\prec)$ is a strict partial order (i.e.,
anti-reflexive and transitive), linked together by the following axioms, which are satisfied for
all regions $a, b, c\in \S$:
\begin{enumerate}

\item $b\sqsubseteq a$ and $a\prec c$ implies $b\prec c$,

\item $b\sqsubseteq a$ and $c\prec a$ implies $c\prec b$,

\item $a\prec c$ and $b\prec c$ implies $a\sqcup b\prec c$.

\item There exits $b_a\in \S$, called {\it cutting of $a$ by $b$}, such that {
     \begin{enumerate}

     \item $b_a\prec a$ and $b_a\sqsubseteq b$;

     \item if $c\in \S$, $c\prec a$ and $c\sqsubseteq b$ then $c\sqsubseteq b_a$.

     \end{enumerate}

}
\end{enumerate}

\medskip

\begin{lemma}\label{Linear} Let $(\S,\sqsubseteq, \prec)$ be a causal site with $L\subseteq \S\smallsetminus\{\bot\}$
linearly ordered by the relation $\sqsubseteq$, $\left|L\right|\geq 2$. Then $L$ is an anti-chain
with respect to $\prec$.
\end{lemma}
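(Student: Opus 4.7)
The plan is a short argument by contradiction that exploits the compatibility axioms (i) and (ii) together with the anti-reflexivity of $\prec$. Pick any two distinct elements $a,b \in L$. Since $L$ is linearly ordered by $\sqsubseteq$, we may assume without loss of generality that $a \sqsubseteq b$. The goal is to rule out both $a \prec b$ and $b \prec a$.

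Suppose first that $a \prec b$. Axiom (ii) says: if $x \sqsubseteq y$ and $c \prec y$, then $c \prec x$. Applying this with $x=a$, $y=b$, and $c=a$ (legitimate because $a \sqsubseteq b$ and, by our hypothesis, $a \prec b$), we obtain $a \prec a$, contradicting the anti-reflexivity of $\prec$ on $\S\smallsetminus\{\bot\}$ (note that $a \neq \bot$ since $L \subseteq \S \smallsetminus \{\bot\}$). Suppose instead that $b \prec a$. Now axiom (i) says: if $x \sqsubseteq y$ and $y \prec c$, then $x \prec c$. Applying it with $x=a$, $y=b$, and $c=a$, we deduce $a \prec a$, again contradicting anti-reflexivity.

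Since neither $a \prec b$ nor $b \prec a$ can hold for distinct $a,b \in L$, the set $L$ is an anti-chain with respect to $\prec$. There is no real obstacle here; the only thing to be careful about is that anti-reflexivity of $\prec$ is only assumed on $\S \smallsetminus \{\bot\}$, which is exactly why the hypothesis $L \subseteq \S \smallsetminus \{\bot\}$ is invoked. The assumption $|L| \geq 2$ simply ensures that there is something nontrivial to verify.
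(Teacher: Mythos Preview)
Your proof is correct and follows essentially the same approach as the paper: both pick two distinct elements, assume one is $\sqsubseteq$-below the other, and derive a violation of anti-reflexivity from axioms (i) and (ii). The only cosmetic difference is that the paper writes the WLOG assumption as $b\sqsubseteq a$ (so axioms (i) and (ii) appear in the opposite order), but the argument is identical.
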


\begin{proof}  Let $a,b\in L\smallsetminus\{\bot\}$, $a\ne b$. Without loss of generality, we may choose the denotation of the
elements $a,b$ such that $b\sqsubseteq a$. Suppose for a moment, that $a\prec b$. Then by the axiom
(i) it should be $b\prec b$, Similarly, if $b\prec a$, by the axiom (ii) we get again that $b\prec
b$. But this is impossible, since $\prec$ is anti-reflexive on $\S\smallsetminus\{\bot\}$ by
definition. Hence, the subset $L\subseteq \S\smallsetminus\{\bot\}$ is an anti-chain with respect
to $\prec$.
\end{proof}

\begin{lemma}\label{Weakest} Let $(\S,\sqsubseteq, \prec)$ be a causal site. Then $\bot$ is the least element of $S$ with
respect to $\prec$, and $\bot\prec\bot$.
\end{lemma}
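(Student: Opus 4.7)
The plan is to exploit axiom (iv) with the specific choice $b=\bot$. For an arbitrary $a\in \S$, the axiom guarantees the existence of a cutting element $\bot_a\in \S$ satisfying $\bot_a\prec a$ and $\bot_a\sqsubseteq \bot$. Since $\bot$ is by hypothesis the least element of $(\S,\sqsubseteq)$, the condition $\bot_a\sqsubseteq \bot$ forces $\bot_a=\bot$. Substituting this equality into $\bot_a\prec a$ yields $\bot\prec a$. As $a$ was arbitrary, this shows simultaneously that $\bot$ is the least element of $\S$ with respect to $\prec$, and, by taking $a=\bot$, that $\bot\prec\bot$.

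No obstacle is really expected here: the argument is essentially a one-line application of (iv)(a), combined with the minimality of $\bot$ in the partial order $(\S,\sqsubseteq)$. The only point worth emphasising is that the anti-reflexivity of $\prec$ is postulated only on $\S\smallsetminus\{\bot\}$, so the conclusion $\bot\prec\bot$ does not contradict the axioms; indeed, the lemma shows that the definition of a causal site forces $\bot$ to be the unique element violating anti-reflexivity. I would also note explicitly that the proof does not use axioms (i)--(iii) at all — the result is built into the cutting axiom together with the existence of a $\sqsubseteq$-least element.
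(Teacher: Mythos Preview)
Your argument is correct, and it is in fact slightly more direct than the paper's. The paper applies axiom (iv) with $b=a$ rather than $b=\bot$: it produces the self-cutting $a_a\in\S$ with $a_a\prec a$ and $a_a\sqsubseteq a$, and then invokes Lemma~\ref{Linear} to rule out $a_a\in\S\smallsetminus\{\bot\}$ (since $\{a_a,a\}$ would be a $\sqsubseteq$-chain that is not a $\prec$-antichain). Your choice $b=\bot$ bypasses Lemma~\ref{Linear} entirely, using only antisymmetry of $\sqsubseteq$ and the minimality of $\bot$ to force $\bot_a=\bot$. The trade-off is minor: the paper's route extracts the additional observation that the cutting of any region by itself is $\bot$, while yours isolates exactly what is needed for the lemma and makes explicit that axioms (i)--(iii) play no role.
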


\begin{proof}
Let $a\in S$ be an arbitrary element. By the axiom (iv), there exists $a_a\in \S$ such that
$a_a\prec a$ and $a_a\sqsubseteq a$. By Lemma~\ref{Linear}, this is not possible if $a_a\in
\S\smallsetminus\{\bot\}$ since then $L=\{a_a, a\}$ would be a two-element linearly ordered subset
of $\S\smallsetminus\{\bot\}$. Hence $a_a=\bot$ and so $\bot\prec a$.
\end{proof}

\begin{corollary}\label{CausalSite1} Let $(\S,\sqsubseteq)$ be a linearly ordered set with the least element
$\bot$. Then there exists a unique binary relation $\prec$ on $\S$, such that $(\S,\sqsubseteq,
\prec)$ is a causal site. In this relation, $\S\smallsetminus\{\bot\}$ is an anti-chain and $\bot$
is the least element of $(\S,\prec)$.
\end{corollary}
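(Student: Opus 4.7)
The plan is to let the two preceding lemmas narrow down the shape of $\prec$ essentially completely, and then to verify directly that the candidate they suggest does define a causal site.

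For existence, I would define $\prec$ on $\S$ by $a\prec b$ if and only if $a=\bot$. The partial order $(\S,\sqsubseteq)$ has binary suprema (since it is linear, $a\sqcup b=\max_{\sqsubseteq}\{a,b\}$) and the given least element $\bot$. The restriction of $\prec$ to $\S\smallsetminus\{\bot\}$ is the empty relation, hence vacuously a strict partial order. Axioms (i)--(iii) are then immediate: any hypothesis of the form $x\prec y$ forces $x=\bot$, so in (i) the hypothesis $b\sqsubseteq a$ propagates to $b=\bot$ and the conclusion $b\prec c$ collapses to $\bot\prec c$, which holds; the same observation handles (ii), and together with $\bot\sqcup\bot=\bot$ also (iii). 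For axiom (iv) I take the cutting $b_a:=\bot$ in every case: it satisfies $b_a\prec a$ tautologically and $b_a\sqsubseteq b$ because $\bot$ is the $\sqsubseteq$-least element, and the maximality clause is automatic since $c\prec a$ forces $c=\bot$.

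For uniqueness, suppose $\prec'$ is any other relation making $(\S,\sqsubseteq,\prec')$ a causal site. Lemma~\ref{Weakest} yields $\bot\prec' a$ for every $a\in\S$. Lemma~\ref{Linear} applied to $L=\S\smallsetminus\{\bot\}$ (the cases $|L|\le 1$ being handled directly), together with the anti-reflexivity of $\prec'$ on $\S\smallsetminus\{\bot\}$, gives $a\not\prec' b$ for all $a,b\in\S\smallsetminus\{\bot\}$. Thus $\prec'$ coincides with the candidate $\prec$ except possibly on pairs of the form $(a,\bot)$ with $a\in\S\smallsetminus\{\bot\}$.

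The step I expect to be the main obstacle is ruling out such spurious pairs $a\prec'\bot$ with $a\neq\bot$, since this case is not addressed by either preceding lemma. My plan is to combine axiom (i) -- which propagates any such $a\prec'\bot$ to every $\sqsubseteq$-predecessor of $a$ -- with the maximality clause of axiom (iv) applied to the cutting $b_\bot$ of $\bot$ by $b\sqsupseteq a$, reading the corollary's phrase ``$\bot$ is the least element of $(\S,\prec)$'' in the strict sense that no $a\neq\bot$ stands $\prec$-below $\bot$. Once $\prec'=\prec$ is established, the closing assertions -- that $\S\smallsetminus\{\bot\}$ is a $\prec$-antichain and that $\bot$ is the $\prec$-least element -- are immediate from the definition of $\prec$.
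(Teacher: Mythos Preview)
Your existence argument is correct and coincides with what the paper does: the result is stated as a bare corollary of Lemma~\ref{Linear} and Lemma~\ref{Weakest}, and the candidate relation you write down is exactly the one described in the paragraph following the corollary.

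For uniqueness you go further than the paper, and you have correctly isolated the one case the two lemmas do not touch: whether $a\prec'\bot$ can occur for some $a\ne\bot$. Your proposed plan to exclude this via axioms (i) and (iv), however, cannot succeed. Take $\S=\{\bot,1\}$ with $\bot\sqsubseteq 1$ and define $\prec'$ by $x\prec' y$ iff $x=\bot$ or $y=\bot$; in particular $1\prec'\bot$ but $1\not\prec'1$. All four axioms hold: (i)--(iii) are routine, and for (iv) the only nontrivial cutting is $1_\bot$, where $1_\bot=1$ works since $1\prec'\bot$, $1\sqsubseteq 1$, and every $c$ with $c\prec'\bot$, $c\sqsubseteq 1$ already satisfies $c\sqsubseteq 1$. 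So read literally the uniqueness clause is false; what Lemmas~\ref{Linear} and~\ref{Weakest} genuinely determine is $\prec$ on $(\S\smallsetminus\{\bot\})\times\S$ together with $\bot\prec a$ for all $a$. The paper presumably intends either this reading or an implicit global transitivity of $\prec$ (under which $a\prec\bot\prec a$ would force $a\prec a$ and hence $a=\bot$); from the axioms as stated, no argument will eliminate $a\prec\bot$.
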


It is not difficult to show, that for every inclusion relation $\sqsubseteq$, the corresponding
causal relation $\prec$ always exists. One can simply define $\prec$ on $\S\smallsetminus\{\bot\}$
as anti-chain and take $\bot$ as the least element of $(\S,\prec)$ similarly as in the previous
example. It follows from Lemma~\ref{Weakest} that such a causal relation is the weakest possible
one; on the other hand, if one takes the causal relation $\prec$ as the primary relation, an
appropriate inclusion relation may not exist.

\begin{corollary}\label{CausalSite2} Let $(\S, \prec)$ be a strictly linearly ordered set, $\left|\S\right|\geq 3$.
Then there is no binary relation $\sqsubseteq$ such that $(\S,\sqsubseteq, \prec)$ is a causal
site.
\end{corollary}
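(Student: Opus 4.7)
The plan is to argue by contradiction. I would assume that some binary relation $\sqsubseteq$ turns $(\S,\sqsubseteq,\prec)$ into a causal site, and then manufacture a two-element $\sqsubseteq$-chain lying entirely in $\S\smallsetminus\{\bot\}$; Lemma~\ref{Linear} will then declare this chain a $\prec$-antichain, which clashes with strict $\prec$-linearity, under which any two distinct elements must be $\prec$-comparable.

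Since $\bot\in\S$ is the $\sqsubseteq$-least element of a causal site and $|\S|\geq 3$, I would first pick two distinct elements $b,c\in\S\smallsetminus\{\bot\}$. Next I would invoke the binary supremum $b\sqcup c$ afforded by the axioms. A short verification gives $b\sqcup c\ne\bot$: since $b\sqsubseteq b\sqcup c$ and $b\ne\bot$, the supremum cannot collapse to $\bot$, as this would force $b=\bot$.

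The argument then splits into two easy cases. If $b\sqcup c\ne b$, the set $\{b,b\sqcup c\}$ is a two-element $\sqsubseteq$-chain in $\S\smallsetminus\{\bot\}$; by Lemma~\ref{Linear} it is a $\prec$-antichain, whereas strict $\prec$-linearity on these distinct elements forces $\prec$-comparability, a contradiction. If instead $b\sqcup c=b$, then $c\sqsubseteq b$, so $\{c,b\}$ is itself the desired chain in $\S\smallsetminus\{\bot\}$, and the same application of Lemma~\ref{Linear} combined with the strict linearity of $\prec$ closes the argument.

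The hard part, if any, is simply the bookkeeping that every chain handed to Lemma~\ref{Linear} really lies in $\S\smallsetminus\{\bot\}$; the hypothesis $|\S|\geq 3$ exists precisely to guarantee two non-$\bot$ elements with which to launch the argument. I note in passing that one could also dispatch the statement in a single line via Lemma~\ref{Weakest}, which forces $\bot\prec\bot$ and directly contradicts the irreflexivity built into strict linearity; but the route through the supremum and Lemma~\ref{Linear} is closer in spirit to Corollary~\ref{CausalSite1}.
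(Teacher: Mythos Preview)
Your proof is correct and follows essentially the same route as the paper: pick two distinct non-$\bot$ elements, form their $\sqsubseteq$-supremum, extract a two-element $\sqsubseteq$-chain in $\S\smallsetminus\{\bot\}$, and invoke Lemma~\ref{Linear} to contradict strict $\prec$-linearity. Your aside that Lemma~\ref{Weakest} already forces $\bot\prec\bot$, clashing with irreflexivity of a strict order, is a valid shortcut and in fact shows the bound $|\S|\geq 3$ is not sharp for this conclusion.
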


\begin{proof} Suppose conversely, that $(\S,\sqsubseteq, \prec)$ is a causal site with $\prec$
linear, $\left|\S\right|\geq 3$. There exist $b,c\in \S\smallsetminus\{\bot\}$, $b\ne c$. Since
$(\S,\sqsubseteq)$ has binary suprema by the definition of a causal site, there exist $a=b\sqcup
c\in \S\smallsetminus\{\bot\}$. Clearly, $b, c\sqsubseteq a$. Since $b\ne c$, at least one of the
elements $b, c$ differs from $a$. Let $L$ be the two-element set containing $a$ and that element
from $\{b,c\}$, different from $a$. Then $L$ is a chain with respect to $\sqsubseteq$ as  well as
$\prec$, which is not possible by Lemma~\ref{Linear}.
\end{proof}

\begin{theorem} Let $(\S,\sqsubseteq, \prec)$ be a causal site and let
$N(x)=\{y|\, y\in \S, y\nprec x\nprec y\}$. Then for every distinct $x_1, x_2,\dots x_k\in
\S\smallsetminus\{\bot\}$, where $k\ge 2$,
$$\bigcap_{i=1}^k N(x_i)\ne\varnothing.$$
\end{theorem}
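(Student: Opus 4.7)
The plan is to exhibit an explicit element of the intersection, namely the $\sqsubseteq$-supremum $a = x_1 \sqcup x_2 \sqcup \cdots \sqcup x_k$. This exists because $(\S,\sqsubseteq)$ has binary suprema by the definition of a causal site, and by the universal property $x_i \sqsubseteq a$ for every $i$.

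First I would verify that $a \in \S\smallsetminus\{\bot\}$. If $a = \bot$, then $x_i \sqsubseteq \bot$ for each $i$, which forces $x_i = \bot$ (since $\bot$ is the $\sqsubseteq$-least element), contradicting $x_i \in \S\smallsetminus\{\bot\}$. Hence $a \neq \bot$, and therefore anti-reflexivity of $\prec$ on $\S\smallsetminus\{\bot\}$ gives in particular $a \nprec a$.

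The core step is to check that $a \in N(x_i)$ for every $i$. Since $x_i \sqsubseteq a$, axiom (i) applied with $b = x_i$ and $c = x_i$ shows that $a \prec x_i$ would entail $x_i \prec x_i$, which is impossible. Symmetrically, axiom (ii) applied with $b = x_i$ and $c = x_i$ shows that $x_i \prec a$ would entail $x_i \prec x_i$, again impossible. Hence $a \nprec x_i \nprec a$, that is, $a \in N(x_i)$, and combining over all $i$ yields $a \in \bigcap_{i=1}^{k} N(x_i)$.

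There is no real obstacle — the key observation, already implicit in Lemma \ref{Linear}, is that any element $\sqsubseteq$-above a non-bottom element is automatically $\prec$-incomparable with it, because the $\prec$-transport along $\sqsubseteq$ provided by axioms (i) and (ii) would otherwise collapse the strict relation onto itself. The only mild subtlety is the bookkeeping step of confirming $a \neq \bot$, which is needed before anti-reflexivity can be invoked.
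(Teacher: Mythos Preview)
Your proof is correct and takes essentially the same approach as the paper: both exhibit the supremum $s=x_1\sqcup\cdots\sqcup x_k$ as the common element of $\bigcap N(x_i)$. The only cosmetic difference is that the paper quotes Lemma~\ref{Linear} to obtain the $\prec$-incomparability of $x_i$ and $s$, whereas you inline that lemma by applying axioms (i) and (ii) directly; your handling of the case $a\neq\bot$ is also slightly cleaner than the paper's detour through $x_i=s$.
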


\begin{proof} Let $x_1, x_2,\dots x_k\in \S$ and denote $s=x_1\sqcup x_2 \sqcup \dots \sqcup x_k$.
It is clear that $x_i\sqsubseteq s$ for every $i=1,2,\dots, k$. Suppose, for some $i=1,2,\dots, k$,
that it holds $x_i\prec s$ or $s\prec x_i$. Because of anti-reflexivity of the relation $\prec$ on
$\S\smallsetminus\{\bot\}$, the equality $x_i=s$ is possible only for $s=\bot$, which immediately
gives that $x_1=x_2=\dots=x_k=\bot$. However, this contradicts to the assumptions of the theorem.
Hence, $x_i\ne s$. Then $L=\{x_i,s\}\subseteq\S\smallsetminus\{\bot\}$, $\left|L\right|=2$ is
linearly ordered by $\sqsubseteq$, so by Lemma~\ref{Linear} $L$ is an anti-chain with respect to
$\prec$. But this is a contradiction to our assumption that $x_i\prec s$ or $s\prec x_i$. Hence,
$s\in\bigcap_{i=1}^k N(x_i)\ne\varnothing.$

\end{proof}

The previous theorem gives a number of examples illustrating, how a causal site does not look like.

\begin{corollary}
Let $M\subseteq\S\smallsetminus\{\bot\}$ be a finite set such that for each $x\in S$, there exists
some $m\in M$ with $x\prec m $ or $m\prec x$. Then there is no relation $\sqsubseteq$ such that
$(S,\sqsubseteq, \prec)$ is a causal site.
\end{corollary}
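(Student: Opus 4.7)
The plan is to derive a contradiction directly from the hypothesis together with the preceding theorem, which guarantees that the ``incomparability sets'' $N(x_i)$ have the finite intersection property on $\S\smallsetminus\{\bot\}$. Rewriting the hypothesis in terms of the sets $N(m)$, it says exactly that for every $x\in\S$ there is some $m\in M$ with $x\notin N(m)$, i.e.
\[
\bigcap_{m\in M} N(m)=\varnothing.
\]
So the task reduces to showing that this equality is incompatible with the causal site axioms.

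First I would dispose of the small cases on $|M|$. If $M=\varnothing$, the hypothesis forces $\S=\varnothing$, which contradicts $\bot\in\S$. If $|M|=1$, say $M=\{m_0\}$, apply the hypothesis to $x=m_0$: there must be $m\in M$ (necessarily $m_0$) with $m_0\prec m_0$, which is forbidden because $m_0\in\S\smallsetminus\{\bot\}$ and $\prec$ is anti-reflexive on $\S\smallsetminus\{\bot\}$.

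The main case is $|M|\ge 2$. Here I invoke the previous theorem directly: since the elements of $M$ are distinct members of $\S\smallsetminus\{\bot\}$ and $|M|\ge 2$, we get $\bigcap_{m\in M} N(m)\ne\varnothing$. Pick any $y$ in this intersection; then $y\nprec m$ and $m\nprec y$ for every $m\in M$. But the hypothesis says some $m\in M$ satisfies $y\prec m$ or $m\prec y$, which is a contradiction. Hence no relation $\sqsubseteq$ making $(\S,\sqsubseteq,\prec)$ a causal site can exist.

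There is no real obstacle; the only delicate point is the bookkeeping of where $\bot$ lives. We need $M\subseteq \S\smallsetminus\{\bot\}$ to invoke the previous theorem (whose hypothesis excludes $\bot$) and to invoke anti-reflexivity of $\prec$ in the singleton case; both are guaranteed by the assumption $M\subseteq\S\smallsetminus\{\bot\}$ already in the corollary, so the argument goes through cleanly.
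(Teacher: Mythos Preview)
Your proof is correct and follows exactly the route the paper intends: the corollary is stated without proof precisely because it is the immediate contrapositive of the preceding theorem, and your argument spells this out. Your explicit handling of the degenerate cases $|M|\le 1$ (which the theorem's hypothesis $k\ge 2$ does not cover) is a useful bit of care that the paper leaves implicit.
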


Because of anti-reflexivity of $\prec$, there may exist finite causal sites in spite of the
corollary (Corollary~\ref{CausalSite1} yields such a construction). However, the limitation is very
strong, as we will see later.

If not specified otherwise, let $\preccurlyeq$ be the denotation of the reflexive closure of the
binary relation $\prec$. Recall that an atom in a poset with the least element $\bot$ is a minimal
element among all elements different from $\bot$ (that is, atoms are the immediate successors of
the least element). Then, for instance, $(\S,\preccurlyeq)$ cannot be a finite lattice for
$\left|\S\right|>2$ or even any other poset with the top element, which is not an atom. It also
cannot be an infinite `ladder' whose Hasse diagram is on the Figure~1, since one can let, for
instance, $M=\{2,3\}$. There are also many other possibilities, which we leave to the reader as an
easy exercise.

\medskip

{\tiny
$$\diagram
&   \\
& 5 \udashed \\
4\udashed \urline & 3 \uline \\
2 \uline \urline & 1 \uline \\
\bot \uline \urline &
\enddiagram$$}
{\medskip \centerline{\scriptsize\rm Figure 1.} } \medskip

The interesting and very complex problem of characterization of those posets $(\S,\preccurlyeq)$
for which the causal site $(\S,\sqsubseteq, \prec)$ exists for some appropriate order relation
$\sqsubseteq$ (and, perhaps, have also some physical counterpart in the reality) exceeds the aim of
this paper. However, as we will see later,  for every four-dimensional globally hyperbolic
Lorentzian manifold there exists an appropriate causal site generated by its causal structure. So
it is natural to ask whether there exist also some `non-Lorentzian' causal sites -- those, which
cannot be properly associated with a Lorentzian manifold or its subset. In the following
proposition we will describe a construction which yields such examples. For any set $S$, by $S^F$
we denote the family of all finite subsets of~$S$.

\begin{proposition}\label{CausalSiteFromPoset} Let $(P,\le)$ be any poset. For every $K, L\in P^F$ we put
$$K\prec L \text{ if and only if }   k<l \text{ for every } (k,l)\in K\times L.$$
Then $(P^F, \subseteq, \prec)$ is a causal site.
\end{proposition}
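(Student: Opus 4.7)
The plan is to take $\bot = \varnothing$ as the least element, with $K \sqcup L = K \cup L$ providing binary suprema, and verify the causal-site axioms one at a time. The underlying poset structure is transparent: $(P^F, \subseteq)$ is clearly a partial order, the union of two finite subsets of $P$ is again finite, and $\varnothing \subseteq K$ for every $K \in P^F$. Observe that, under the given definition, $\varnothing \prec K$ and $K \prec \varnothing$ are both vacuously true for all $K$, which is consistent with Lemma~\ref{Weakest}.

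Next I would verify that $\prec$ restricted to $P^F \smallsetminus \{\varnothing\}$ is a strict partial order. Anti-reflexivity follows because for $K \ne \varnothing$ one can pick $k \in K$, and $K \prec K$ would force $k < k$. Transitivity on nonempty sets uses the nonemptiness of the \emph{middle} set: if $K \prec L$ and $L \prec M$ with $L \ne \varnothing$, then choosing any $\ell \in L$ and any $(k,m) \in K \times M$ yields $k < \ell < m$, hence $k < m$, so $K \prec M$.

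Axioms (i)--(iii) of a causal site fall out almost immediately from the universal quantifier built into the definition of $\prec$. For (i): if $b \subseteq a$ and $a \prec c$, then $b \times c \subseteq a \times c$, so $k < l$ for every $(k,l) \in b \times c$, giving $b \prec c$. Axiom (ii) is symmetric. For (iii): a pair $(x,z) \in (a \cup b) \times c$ has $x \in a$ or $x \in b$, and in either case the corresponding hypothesis produces $x < z$.

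The main obstacle, and the only part requiring a genuine construction, is axiom (iv). I would propose the cutting
$$b_a \;:=\; \{\, x \in b : x < y \text{ for every } y \in a \,\},$$
which is finite as a subset of $b$, so $b_a \in P^F$. By construction $b_a \subseteq b$, and $b_a \prec a$ holds by the very definition (vacuously if either $a$ or $b_a$ is empty, e.g.\ when $a = \varnothing$ one gets $b_a = b$ and the relation is vacuous). For the maximality clause, if $c \in P^F$ satisfies $c \prec a$ and $c \subseteq b$, then every $x \in c$ is in $b$ and satisfies $x < y$ for all $y \in a$, so $x \in b_a$; hence $c \subseteq b_a$. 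The only subtle point is checking these edge cases with empty sets, which I would mention explicitly; once that is done, all four axioms are satisfied and $(P^F, \subseteq, \prec)$ is a causal site.
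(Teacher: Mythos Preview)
Your proof is correct and follows essentially the same approach as the paper: the authors also dismiss axioms (i)--(iii) as obvious and verify (iv) with the identical cutting $B_A=\{x\in B: x<a\text{ for every }a\in A\}$. Your version is simply more explicit, spelling out the strict partial order on $P^F\smallsetminus\{\varnothing\}$ and the edge cases with $\varnothing$, which the paper leaves implicit.
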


\begin{proof} It is obvious that the conditions (i) -- (iii) from the definition of the causal site are
satisfied. We will verify (iv). Let $A,B \in P^F$. We put
$$B_A=\{b|\, b\in B, b<a \text{ for every } a\in A\}.$$
It is clear that $B_A\prec A$, $B_A\subseteq B$. Take any $C\in P^F$, such that $C\prec A$,
$C\subseteq B$. Let $c\in C$. Then $c\in B$, $c<a$ for all $a\in A$, so $C\subseteq B_A$.
\end{proof}

Varying the poset $(P,\le)$, one can generate causal sites which cannot be Lorentzian. The easiest
way it is to take $P$ with cardinality greater than it can have any connected Lorentzian manifold.
We will construct another, a more suitable example. At first, let us recall some notions. Let
$(P,\le)$ be a poset. We say that $x$ is {\it way below} $y$, and write $x\ll y$ for $x,y\in P$, if
for every directed set $D\subseteq P$ with a supremum $\sup D\in P$, the relation $y\le\sup D$
implies that there exists some $d\in D$ with $x\le d$. We say that the poset $(P,\le)$ is {\it
continuous}, if for every $x\in P$, the set $\twoheaddownarrow\{x\}=\{t|\, t\in P, t\ll x\}$  is
directed and $x=\sup\twoheaddownarrow\{x\}$. For more detail, the reader is referred to
\cite{GHKLMS}. A continuous poset $(P,\le)$ is called {\it bicontinuous}, if it is continuous also
with respect to the inverse order relation $\ge$, and the corresponding way-below relations are
inverse to each other. For a more precise definition, the reader is referred to \cite{MP1}.

\begin{corollary}\label{NonContinuous} Let $(P, \le)$ be a poset with the Hasse diagram as on the Figure~2, in which two infinitely
countable chains $\{1,3,\dots\}$ and $\{2,4,\dots\}$ have their common supremum $\omega$ and the
infimum $\bot$. Then $(P^F,\subseteq,\prec)$ is a causal site in which $(P^F, \preccurlyeq)$ is not
a continuous poset.
\end{corollary}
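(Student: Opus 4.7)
The first assertion that $(P^F, \subseteq, \prec)$ is a causal site is an immediate application of Proposition~\ref{CausalSiteFromPoset}, so the heart of the work is to show that $(P^F, \preccurlyeq)$ is not a continuous poset.

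My plan is to identify an element $x \in P^F$ whose set of way-below predecessors $\twoheaddownarrow\{x\}$ is too meager to yield $x$ as its supremum. The natural candidate is $x = \{\omega\}$, because $\omega$ is the common top of two genuinely incomparable chains in $P$. Specifically, I would consider the chains
\[
D_o = \{\{1\}, \{3\}, \{5\}, \dots\}, \qquad D_e = \{\{2\}, \{4\}, \{6\}, \dots\}
\]
in $P^F$. Both are linearly ordered by $\preccurlyeq$ (since $\{2k+1\} \prec \{2m+1\}$ whenever $k < m$, via $2k+1 < 2m+1$ in $P$, and analogously for $D_e$) and hence directed; the Hasse diagram of Figure~2 shows that $\{\omega\}$ is the only essential upper bound of each, giving $\sup D_o = \sup D_e = \{\omega\}$.

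Next I would use the definition of the way-below relation. If $t \ll \{\omega\}$, then applying the definition to $D_o$ and $D_e$ produces indices $n, m$ with $t \preccurlyeq \{2n+1\}$ and $t \preccurlyeq \{2m\}$. Ruling out the degenerate singleton cases $t = \{2n+1\}$ and $t = \{2m\}$ (which contradict each other since the odd and even branches of $P$ are $P$-incomparable), one concludes that every element of $t$ is strictly below both $2n+1$ and $2m$ in $P$; inspection of Figure~2 identifies $\bot$ as the only such element. Consequently $\twoheaddownarrow\{\omega\} \subseteq \{\varnothing, \{\bot\}\}$, whose supremum is (up to equivalence) $\{\bot\}$ and not $\{\omega\}$, contradicting the continuity condition at $\{\omega\}$.

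The main technical nuisance I expect is the peculiar, vacuous behaviour of $\prec$ at $\varnothing$: the relations $K \prec \varnothing$ and $\varnothing \prec K$ hold vacuously for every $K \in P^F$, so the empty set plays simultaneously the role of the $\subseteq$-bottom $\bot$ and of a vacuous top with respect to $\prec$. Care is therefore needed when pinning down the suprema of $D_o, D_e$, and when verifying that $\{\bot\}$ really lies in $\twoheaddownarrow\{\omega\}$ (which is handled by a short case analysis on arbitrary directed $D$ with $\{\omega\} \preccurlyeq \sup D$, distinguishing whether $D$ contains $\{\bot\}$ or some $\bot$-free element). These manipulations are routine once the two-chain picture is clear, but they are the places where the bookkeeping must be done most carefully.
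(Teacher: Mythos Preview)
Your proposal is correct and follows essentially the same route as the paper: both use the two singleton chains $\{\{1\},\{3\},\dots\}$ and $\{\{2\},\{4\},\dots\}$ with common supremum $\{\omega\}$, then argue that any $K\ll\{\omega\}$ must lie below both an odd and an even singleton, forcing $K\subseteq\{\bot\}$ and hence $\sup\twoheaddownarrow\{\omega\}\ne\{\omega\}$. Your explicit attention to the degenerate behaviour of $\varnothing$ under $\prec$ is a point the paper passes over silently (it simply writes ``Hence, $K=\{\bot\}$''), so your version is if anything more careful.
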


\medskip

{\tiny

$$\diagram
& \omega & \\
  \urdashed &  &  \uldashed \\
5 \udashed &  & 6 \udashed \\
3 \uline &  & 4 \uline \\
1 \uline &  & 2 \uline \\
& \bot \ulline  \urline &
\enddiagram$$}
{\medskip \centerline{\rm\scriptsize Figure 2.}} \medskip

\begin{proof}
Let $(P, \le)$ be a poset defined by the Hasse diagram on the Figure~2.
Proposition~\ref{CausalSiteFromPoset} ensures that $(P^F,\subseteq,\prec)$ is a causal site. It
remains to show that $(P^F, \preccurlyeq)$ is not a continuous poset. Indeed, $\{\omega\}$ is an
upper bound of the chains $\mathcal L_1=\{\{1\},\{3\},\dots\}$ and $\mathcal
L_2=\{\{2\},\{4\},\dots\}$. Let $D\in P^F$ be an an upper bound for $L_i$ for $i=1$ or $i=2$
respectively. Then for every $k=1,2,\dots$ and $x\in D$ it follows $2k-1<x$ or $2k<x$,
respectively. In the both cases, this is possible only for $x=\omega$, so $\sup\mathcal
L_1=\{\omega\}=\sup\mathcal L_2$.

Now, let $K\doubleprec\{\omega\}$ for some $K\in P^F$. By the definition of the way-below relation,
there exist $m,n\in\N$ such that $K\preccurlyeq\{2m-1\}$,  $K\preccurlyeq\{2n\}$. Then $K$ cannot
contain any element of $\N$, since otherwise it would be odd and even at the same time. Hence,
$K=\{\bot\}$. Then $\{\omega\}$ is clearly not the supremum of $\twoheaddownarrow\{\omega\}$, so
the poset $(P^F, \preccurlyeq)$, as well as $(P, \le)$, is not continuous.
\end{proof}

Corollary~\ref{NonContinuous} shows, that there exist causal sites which are not covered by the
theory of P. Panangaden and K. Martin, who studied causality via domain theory (see \cite{MP1} and
\cite{MP2}). In their approach, the causal relation is a bicontinuous poset and the corresponding
counterparts of regions are the intervals with respect to the way-below relation. Since P.
Panangaden and K. Martin proved that every globally hyperbolic Lorentzian manifold can be
represented in this way, Corollary~\ref{NonContinuous} also illustrates that there exist also
`non-Lorentzian'  causal sites. On the other hand, there is no evidence that the `regions'
constructed from the way-below intervals satisfy the axioms of a causal site.

\bigskip

The previous corollaries illustrate that in a causal site the `inclusion' relation $\sqsubseteq$
and the `causal' relation $\prec$ highly depend one on each other. Once a causal site is given, its
corresponding topology is fully determined by the `inclusion' relation $\sqsubseteq$, but as it is
shown by Corollary~\ref{CausalSite2}, for a given causal relation $\prec$, the relation
$\sqsubseteq$ cannot be arbitrary. In Corollary~\ref{CausalSite2}, the relation $\prec$ is too
strong to allow the existence of an appropriate inclusion relation $\sqsubseteq$.

These facts yield a reply for a certain criticism, complaining that the topology derived from
$\sqsubseteq$ completely ignores causality and so it is unphysical. This is certainly not true,
since for a given, existing causal site $(\S, \sqsubseteq, \prec)$, the relations $\sqsubseteq$ and
$\prec$ are not independent, and a properly chosen causal site has a real, physical sense. On the
contrary, our approach leaves open the possibility that also the other, uninvestigated topologies,
compatible with the causal relation, could be potentially useful.

\bigskip

\begin{example} {\rm

Consider a set $Q$ of particles. An interaction between two particles $p, q\in Q$ may be
interpreted as an element $(p,q)$ of the Cartesian product $Q\times Q$. But since it is natural to
admit that various particles may interact several times, we rather represent their interaction as
the triple $(p,q,n)\in Q\times Q\times \N$.

Suppose that for every particle $p\in Q$ there exists a poset $(Q(p),\le_p)$ such that
$Q(p)\subseteq Q\times Q\times \N$ and $\pi_1(Q(p))=p$, where $\pi_1:Q\times Q\times \N\map Q $ is
the projection to the first component of the product $Q\times Q\times \N$.

The poset $(Q(p),\le_p)$ represents the chronological order of the interactions of the particle $p$
with other  particles from $Q$. Note that $(Q(p),\le_p)$ need not linear if one admits possible
existence of alternative histories of particle interactions. Let $(P,\le)$ be the disjoint sum of
the posets $(Q(p),\le_p)$, $p\in Q$. By Proposition~\ref{CausalSiteFromPoset},
$(P^F,\subseteq,\prec)$ forms a causal site, which, in some cases, need not be Lorentzian. \hfill
$\square$

}
\end{example}

\bigskip

\section{How to Topologize Everything}\label{howto}

\medskip

As it has been recently noted in \cite{HPS},  nature or  physical universe, whatever it is, has
probably no existing, real points like in the classical Euclidean geometry (or, at least, we cannot
be completely sure of that). Points, as a useful mathematical abstraction, are infinitesimally
small and thus cannot be measured or detected by any physical way. However, what we can be sure
that really exists, there are various locations, containing concrete physical objects. In this
paper we will call these locations {\it places}. Various places can overlap, they can be merged,
embedded or glued together, so the theoretically understood virtual ``observer" can visit multiple
places simultaneously. For instance, the Galaxy, the Solar system, the Earth, (the territory of)
Europe, Brno (a beautiful city in the Czech Republic, the place of the first author's home),
Mogilev (a charming and scenic town in Belarus, the second author's home), the room in which the
reader is present just now, is a simple and natural example of places conceived in our sense.
Certainly, in this sense, one can be present at many of these places at the same time, and, also
certainly, there exist pairs of places (e.g. Brno and Mogilev), where the simultaneous presence of
any physical objects is not possible. Thus, the presence of various physical objects connects these
primarily free objects -- our places -- to the certain structure, which we call a {\it framework}.

Note that it does not matter that the places are, at the first sight, determined rather vaguely or
with some uncertainty. They are conceived as elements of some algebraic structure, without any
additional geometrical or metric structure and as we will see later, the ``uncertainty" could be
partially eliminated by the relationships between them. Let's now give the precise definition.

\begin{definition} Let $\P$ be a set, $\pi\subseteq 2^\P$. We say that $(\P,\pi)$ is a
framework. The elements of $\P$ we call {\it places}, the set $\pi$ we call {\it framology}.
\end{definition}

Although every topological space is a framework by the definition, the elementary interpretation of
a framework is very different from the usual interpretation of a topological space. The elements of
the framology are not primarily considered as neighborhoods of places, although it seems to be also
very natural. If $\P$ contains all the places that are or can be observed, the framology $\pi$
contains the list of observations of the fact that the virtual ``observer" or some physical object
that ``really exists" (whatever it means), can be present at some places simultaneously. The
structure which $(\P,\pi)$ represents arises from these observations.

\medskip

Let us introduce some other useful notions.

\begin{definition} Let $(\P,\pi)$ and $(\S,\sigma)$ be frameworks. A mapping $f:\P\map \S$ satisfying
$f(\pi)\subseteq\sigma$ we call a {\it framework morphism}.
\end{definition}

\begin{definition} Let $(\P,\pi)$ be a framework, $\sim$ an equivalence relation on $\P$. Let
$\P_\sim$ be the set of all equivalence classes and $g:\P\map \P_\sim$ the corresponding quotient
map. Then $(\P_\sim, g(\pi))$ is called the quotient framework of $(\P,\pi)$ (with respect to the
equivalence $\sim$).
\end{definition}

\begin{definition}A framework $(\P,\pi)$ is T$_0$ if for every $x,y\in \P$, $x\ne y$, there
exists $U\in\pi$ such that $x\in U$, $y\notin U$ or $x\notin U$, $y\in U$.
\end{definition}

\begin{definition} Let $(\P,\pi)$ be a framework. Denote $\P^d=\pi$ and $\pi^d=\{\pi(x)|\, x\in \P\}$,
where $\pi(x)=\{U|\, U\in\pi, x\in U\}$. Then $(\P^d, \pi^d)$ is the {\it dual} framework of
$(\P,\pi)$. The places of the dual framework $(\P^d, \pi^d)$ we call {\it abstract points} or
simply {\it points} of the original framework $(\P,\pi)$.
\end{definition}

The framework duality is a simple but handy tool for switching between the classical point-set
representation (like in topological spaces) and the point-less representation, introduced above.

\bigskip

{\bf Some Examples.} There are a number of  natural examples of mathematical structures satisfying
the definition of a framework, including non-oriented graphs, topological spaces (with open maps as
morphisms), measurable spaces or texture spaces  of M. Diker \cite{Di}. Among physically motivated
examples, we may mention the Feynman diagrams with particles in the role of places, and
interactions as the associated abstract points. Very likely, certain aspects of the string theory,
related to general topology, can also be formulated in terms of the framework theory.

\bigskip

It should be noted that the notion of a framework is a special case of the notion of the {\it
formal context}, due to B. Ganter and R. Wille \cite{GW}, sometimes also referred as the Chu space
\cite{ChL}. Recall that a formal context is a triple $(G,M, I)$, where $G$ is a set of objects, $M$
is a set of attributes and $I\subseteq G\times M$ is a binary relation. Thus, the framework
$(\P,\pi)$ may be represented as a formal context $(\P,\pi, \in)$, where objects are the places and
their attributes are the abstract points.

Even though the theory and methods of formal concept analysis may be a useful tool also for our
purposes, we prefer the topology-related terminology that we introduced in this section because it
seems to be more close to the way, how mathematicians and physicists usually understand to the
notion of  space-time. Nevertheless,  we completely share the opinion of R. Wille expressed in
\cite{Wi}, that the abstract mathematical disciplines, like the lattice theory and other, should be
returned more close to their potential users. We hope that this section will contribute to the same
purpose regarding general topology.

It also seems that frameworks are closely related to the notion of partial metric due to S.
Matthews \cite{Ma}, but these relationships will be studied in a separate paper.

\begin{proposition}Let $(\P,\pi)$ be a framework. Then $(\P^d,\pi^d)$ is T$_0$.
\end{proposition}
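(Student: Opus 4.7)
The plan is to unwind the definitions and observe that the statement reduces to the elementary fact that two distinct subsets of $\P$ must differ at some element. Let me be concrete about what needs to be checked.

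First I would write down what the T$_0$ condition demands for $(\P^d,\pi^d)$. The places of the dual are the elements of $\pi$, and the framology of the dual is the collection $\{\pi(x) \mid x \in \P\}$ where $\pi(x) = \{U \in \pi \mid x \in U\}$. So T$_0$ for $(\P^d,\pi^d)$ means: for any two distinct $U, V \in \pi$, there is some $x \in \P$ such that exactly one of $U, V$ belongs to $\pi(x)$.

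Next I would fix two distinct elements $U, V \in \pi$, viewed as subsets of $\P$. Since $U \neq V$ as sets, the symmetric difference $U \triangle V$ is nonempty, so I can pick an $x \in \P$ lying in exactly one of $U$ and $V$. Without loss of generality $x \in U$ and $x \notin V$. Then by the definition of $\pi(x)$ we get $U \in \pi(x)$ while $V \notin \pi(x)$, which is exactly the separating member of $\pi^d$ that T$_0$ requires.

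There is no real obstacle here: the definition of the dual framework has been crafted precisely so that abstract points distinguish distinct framology elements. The whole content of the proposition is the observation that extensional equality of subsets forces pointwise distinguishability, so the proof is a one-line verification after untangling $\P^d$ and $\pi^d$.
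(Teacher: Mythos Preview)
Your proof is correct and follows exactly the same approach as the paper: both pick distinct $U,V\in\pi$, use that their symmetric difference is nonempty to find an $x\in\P$ in one but not the other, and observe that $\pi(x)\in\pi^d$ then separates $U$ from $V$.
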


\begin{proof} Denote $\S=\pi$, $\sigma=\{\pi(x) |\, x\in \P\}$, so $(\S,\sigma)$ is the dual framework
of $(\P, \pi)$. Let $u, v\in \S$, $u\ne v$. Since $u, v\in 2^\P$ are different sets, either there
exists $x\in u$ such that $x\notin v$, or there exists $x\in v$, such that $x\notin u$. Then
$u\in\pi(x)$ and $v\notin\pi(x)$, or $v\in\pi(x)$ and $u\notin\pi(x)$. In both cases there exists
$\pi(x)\in\sigma$, containing one element of $\{u, v\}$ and not containing the other.
\end{proof}

\begin{theorem} Let $(\P,\pi)$ be a framework. Then $(\P^{dd}, \pi^{dd})$ is isomorphic to the
quotient of $(\P,\pi)$. Moreover, if $(\P,\pi)$ is T$_0$, then $(\P^{dd}, \pi^{dd})$ and $(\P,\pi)$
are isomorphic.
\end{theorem}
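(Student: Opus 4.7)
The plan is to show that the natural equivalence relation witnessing $T_0$ failure in $(\P,\pi)$ is exactly the fibre relation of the evaluation map $x\mapsto \pi(x)$, and that this evaluation map induces the desired isomorphism with $(\P^{dd},\pi^{dd})$. This is essentially the standard fact that for Chu spaces (formal contexts) the double-dual is the $T_0$-reflection; I just have to unwind the definitions carefully.

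First I would unfold the double-dual. By definition $\P^d=\pi$ and $\pi^d=\{\pi(x)\mid x\in\P\}$, where $\pi(x)=\{U\in\pi\mid x\in U\}$. Applying the duality again, $\P^{dd}=\pi^d=\{\pi(x)\mid x\in\P\}$ and $\pi^{dd}=\{\pi^d(U)\mid U\in\pi\}$, where $\pi^d(U)=\{V\in\pi^d\mid U\in V\}$. Since the elements of $\pi^d$ have the form $\pi(y)$, and $U\in\pi(y)$ precisely means $y\in U$, one gets the clean description
\[
\pi^d(U)=\{\pi(y)\mid y\in U\},\qquad \pi^{dd}=\bigl\{\{\pi(y)\mid y\in U\}\bigm|U\in\pi\bigr\}.
\]

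Next I would introduce the equivalence relation $\sim$ on $\P$ defined by $x\sim y$ iff $\pi(x)=\pi(y)$, and let $g:\P\to\P_\sim$ be the quotient map, so that the quotient framework is $(\P_\sim,g(\pi))$. I then define
\[
\phi:\P_\sim\longrightarrow \P^{dd},\qquad \phi([x])=\pi(x).
\]
Well-definedness and injectivity are immediate from the definition of $\sim$, and surjectivity is immediate from the description of $\P^{dd}$. To check that $\phi$ is a framework isomorphism I have to show that $\phi$ carries $g(\pi)$ bijectively onto $\pi^{dd}$; but a generic element of $g(\pi)$ is $g(U)=\{[x]\mid x\in U\}$ for some $U\in\pi$, and its image under $\phi$ is $\{\pi(x)\mid x\in U\}$, which is exactly the element $\pi^d(U)$ of $\pi^{dd}$ computed above. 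So $\phi$ and its inverse are framework morphisms.

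For the second statement, I would observe that if $(\P,\pi)$ is $T_0$, then for any $x\ne y$ there is some $U\in\pi$ containing exactly one of $x,y$; this forces $\pi(x)\ne \pi(y)$, so $\sim$ reduces to equality, $g$ is a bijection, and $g(\pi)$ can be identified with $\pi$. Composing with $\phi$ then gives the isomorphism $(\P,\pi)\cong(\P^{dd},\pi^{dd})$. There is no genuine obstacle here; the only real care needed is bookkeeping in the double-dual computation and making sure that ``isomorphism of frameworks'' is interpreted as a bijection of underlying sets that restricts to a bijection of framologies, which is what the above verification of $\phi(g(\pi))=\pi^{dd}$ establishes.
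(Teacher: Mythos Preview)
Your proof is correct and follows essentially the same approach as the paper: the paper also defines the evaluation map $f(x)=\pi(x)$, takes $\sim$ to be its fibre relation, and verifies that the induced bijection $h([x])=\pi(x)$ carries $g(U)$ to $\rho(U)=\pi^d(U)$, exactly as you do. Your presentation is slightly more streamlined in that you compute the double dual explicitly at the outset, but the argument is the same.
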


\begin{proof} We denote $\mathcal R=\P^d=\pi$, $\rho=\pi^d=\{\pi(x) |\, x\in \P \}$,
$\S=\mathcal R^d=\rho$, $\sigma=\rho^d=\{\rho(x) |\, x\in \mathcal R \}$. Then $(\S, \sigma)$ is
the double dual of $(\P,\pi)$. It remains to show, that $(\S, \sigma)$ is isomorphic to some
quotient of $(\P,\pi)$.

For every $x\in \P$, we put $f(x)=\pi(x)$. Then $f:\P\map \S$ is a surjective mapping. It is easy
to show, that $f$ is a morphism. Indeed, if $U\in\pi$, then $f(U)=\{\pi(x) |\, x\in U\}=\{\pi(x)
|\, x\in \P,  U\in\pi(x) \}=\{V |\, V\in\rho,  U\in V\}=\rho(U)\in\sigma$. Therefore,
$f(\pi)\subseteq\sigma$, which means that $f$ is an epimorphism of the framework $(\P,\pi)$ onto
$(\S,\sigma)$.

Now, we define $x\sim y$ for every $x, y\in \P$ \iff $f(x)=f(y)$. Then $\sim $ is an equivalence
relation on $\P$. For every equivalence class $[x]\in \P_\sim$ we put $h([x])=f(x)$. The mapping
$h: \P_\sim\map \S$ is correctly defined, moreover, it is a bijection. The verification that $h$ is
a framework isomorphism is standard, but, because of completeness, it has its natural place here.
The quotient framology on $\P_\sim$ is $g(\pi)$, where $g:\P\map \P_\sim$ is the quotient map. The
quotient map $g$ satisfies the condition $h\circ g=f$. Let $W\in g(\pi)$. There exists $U\in \pi$
such that $W=g(U)$. Then $h(W)=h(g(U))=f(U)\in\sigma$. Hence $h(g(\pi))\subseteq\sigma$, which
means that $h: \P_\sim\map \S$ is a framework morphism. Conversely, let $W\in\sigma=\{\rho(U) |\,
U\in\pi\}$. We will show that $h^{-1}(W)\in g(\pi)$. By the previous paragraph, $\rho(U)=f(U)$ for
every $U\in\pi$, so there exists $U\in\pi$, such that $W=f(U)=h(g(U))$. Since $h$ is a bijection,
it follows that $h^{-1}(W)=g(U)\in g(\pi)$. Hence, also $h^{-1}:\S \map \P_\sim$ is a framework
morphism, so the frameworks $(\P_\sim, g(\pi) )$ and $(\S,\sigma)$ are isomorphic.

Now let us consider the special case when $(\P,\pi)$ is T$_0$. Suppose that $f(x)=f(y)$ for some
$x,y\in \P$. Then $\pi(x)=\pi(y)$, which is possible only when $x=y$. Then the relation $\sim$ is
the diagonal relation, and the quotient mapping $g$ is an isomorphism.
\end{proof}

\begin{corollary} Every framework arise as dual \iff it is T$_0$.
\end{corollary}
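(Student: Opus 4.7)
The corollary is immediate from the two results just before it, so the plan is to split the biconditional and invoke them in order.

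For the ``only if'' direction, suppose $(\P,\pi)$ arises as a dual, i.e.\ there is some framework $(\mathcal Q,\chi)$ with $(\P,\pi)\cong(\mathcal Q^d,\chi^d)$. The preceding proposition asserts that the dual of any framework is T$_0$, hence $(\mathcal Q^d,\chi^d)$ is T$_0$; since the T$_0$ property is clearly preserved by framework isomorphism (it is formulated purely in terms of which places lie in which elements of the framology), $(\P,\pi)$ is T$_0$ as well.

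For the ``if'' direction, suppose $(\P,\pi)$ is T$_0$. Then the special case of the preceding theorem yields $(\P,\pi)\cong(\P^{dd},\pi^{dd})$. But $(\P^{dd},\pi^{dd})$ is by definition the dual of the framework $(\P^d,\pi^d)$, so $(\P,\pi)$ is isomorphic to a dual framework and therefore ``arises as a dual.''

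There is no real obstacle here; the only small point worth being explicit about is that ``arises as dual'' should be read up to framework isomorphism, which is harmless because the T$_0$ property is an isomorphism invariant. The proof is therefore at most two or three lines, combining the proposition (dual is always T$_0$) with the theorem ($(\P,\pi)\cong(\P^{dd},\pi^{dd})$ when $(\P,\pi)$ is T$_0$).
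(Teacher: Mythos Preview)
Your proof is correct and is exactly the argument the paper intends: the corollary is stated without proof because it follows immediately from the preceding proposition (every dual is T$_0$) and the second part of the preceding theorem ($(\P,\pi)\cong(\P^{dd},\pi^{dd})$ whenever $(\P,\pi)$ is T$_0$). Your remark that ``arises as dual'' must be read up to isomorphism, and that T$_0$ is an isomorphism invariant, is the only point that needs to be made explicit, and you handle it correctly.
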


\begin{corollary} For every framework $(\P,\pi)$, it holds  $(\P^d,\pi^d)\cong (\P^{ddd},
\pi^{ddd})$.
\end{corollary}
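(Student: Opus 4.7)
The plan is to derive this corollary as a direct consequence of the preceding proposition (stating that every dual framework is T$_0$) and the preceding theorem (stating that the double dual of a T$_0$ framework is isomorphic to itself). The key observation is that $(\P^{ddd},\pi^{ddd})$ is nothing other than the double dual of $(\P^d,\pi^d)$, so if we can show $(\P^d,\pi^d)$ is T$_0$, then the theorem finishes the job.

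First I would set $(\Q,\rho) = (\P^d,\pi^d)$ to reduce notation. By the proposition immediately preceding, every framework of the form $(\P^d,\pi^d)$ is T$_0$; in particular $(\Q,\rho)$ is T$_0$.

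Next I would invoke the preceding theorem, applied to the framework $(\Q,\rho)$ in place of $(\P,\pi)$. Since $(\Q,\rho)$ is T$_0$, the theorem guarantees that $(\Q^{dd},\rho^{dd})\cong (\Q,\rho)$. Unwinding the definitions, $(\Q^{dd},\rho^{dd}) = ((\P^d)^{dd},(\pi^d)^{dd}) = (\P^{ddd},\pi^{ddd})$, so we conclude $(\P^{ddd},\pi^{ddd})\cong (\P^d,\pi^d)$, which is exactly the claim.

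There is essentially no obstacle here beyond getting the bookkeeping of the duality right: the corollary is purely formal, since both of the ingredients have already been established. The only thing worth double-checking is that the theorem's hypothesis is really applicable to $(\P^d,\pi^d)$ — i.e.\ that the T$_0$ property is enough to conclude isomorphism with the double dual without any further assumption on $(\P,\pi)$ — but that is precisely the content of the second half of the theorem. No case analysis or explicit construction of the isomorphism is needed at this stage, because the isomorphism is already the one produced in the proof of the theorem (the map $x\mapsto \pi(x)$, here instantiated on $\Q=\P^d$).
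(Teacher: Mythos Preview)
Your proof is correct and is precisely the intended argument: the paper leaves this corollary without an explicit proof because it follows immediately by applying the preceding proposition (the dual is always T$_0$) and the T$_0$ case of the preceding theorem to the framework $(\P^d,\pi^d)$, exactly as you describe.
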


\medskip

Note, that the framework structure is suitable for addressing the compatibility  problem of various
scales in physics and their different models. Since the points of the Universe probably do not
exist in reality (although they are a useful mathematical abstraction), the abstract points of a
framework only express certain relationships between places, which -- in a contrast to points --
can be really observed and which exclusively exist in the physical reality. Then various
framologies and various topological models may peacefully coexist with help of the framework
duality on a given set $P$ of places.

\begin{definition} Let $(\P,\pi)$ be a framework, $(X,\tau)$ be a topological space with the family
$\C$ of closed sets. We say that $(X,\tau)$ is an {\it open} ({\it closed}, respectively) {\it
topological model} for $(\P,\pi)$, if there exist a framework $(\S,\sigma)$ isomorphic to
$(\P,\pi)$ and set $X^\prime\subseteq X$ such that $\S\subseteq \tau$ ($\S\subseteq \C$,
respectively) and $\sigma=\{\{A|\, A\in\S, x\in A\}|\, x\in X^\prime\}$.
\end{definition}

\begin{example} {\rm Let $p_i$, $i\in \Z$ be pairwise distinct elements. We put $\P=\{p_i|\, i\in\Z\}$,
$\pi=\{\{p_i,p_{i+1}\}|\,i\in\Z\}$. Then the framework $(\P, \pi)$  has many open as well as closed
topological models, including the real line $\R$ equipped with the Euclidean topology and the
Khalimsky line, that is, the set $\Z$ with the topology generated by its open base
$\tau_K=\{\dots,\{1\},\{1,2,3\},$ $\{3\},\{3,4,5\},\{5\},\dots\}$. For an easy proof, one can the
places $p_i$ identify with non-empty, open (or closed, respectively) overlapping sets in a
topological space, such that $p_i$ has a non-empty intersection only with $p_{i-1}$, $p_i$ and
$p_{i+1}$. In addition, in case of Khalimsky topology, one can simulate various scales by taking
$p_i$ with more or less elements, although the original framework $(\P, \pi)$ still remains the
same.

}
\end{example}

\bigskip

\section{Topologies from Causal Sites}\label{causal}

\medskip

In this section we show that the notion of a~framework, introduced and studied in the previous
section, has some real utility and sense. In a contrast to simple examples mentioned above, from a
properly defined  framework we will be able to construct a topological structure with a real,
physical meaning.

\medskip

Consider a causal site $(P,\sqsubseteq, \prec)$ and let us define appropriate framework structure
on $P$. We say that a subset $F\subseteq P$ set is centered, if for every $x_1, x_2, \dots, x_k\in
F$ there exists $y\in P$, $y\ne\bot$ satisfying $y\sqsubseteq x_i$ for every $i=1,2,\dots, k$. If
$\L\subseteq 2^P$ is a chain of centered subsets of $P$ linearly ordered by the set inclusion
$\subseteq$, then $\bigcup \L$ is also a centered set. Then every centered $F\subseteq P$ is
contained in some maximal centered $M\subseteq P$. Let $\pi$ be the family of all maximal centered
subsets of $P$. Now, consider the framework $(P,\pi)$ and its dual $(P^d, \pi^d)$. Let $(X,\tau)$
be the topological space with $X=P^d=\pi$ and the topology $\tau$ generated by its closed subbase
(that is, a subbase for the closed sets) $\pi^d$.

\begin{theorem}\label{comp} The topological space $(X,\tau)$, corresponding to the framework $(P^d,\pi^d)$ and the causal
site $(P,\sqsubseteq, \prec)$, is compact T$_1$.
\end{theorem}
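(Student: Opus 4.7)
The plan is to prove $T_1$-ness and compactness separately, both by exploiting the maximality of the centered subsets in $\pi$ together with the fact that the topology is generated by the closed subbase $\pi^d$.

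For the $T_1$ axiom, I would fix an arbitrary point $U\in X=\pi$ and show that the singleton $\{U\}$ is closed by writing it as an intersection of subbasic closed sets. Concretely, the set
\[
\bigcap_{x\in U}\pi(x)=\{W\in\pi\mid U\subseteq W\}
\]
is a closed set (arbitrary intersection of subbasic closed sets), so it suffices to observe that any centered $W\supseteq U$ must equal $U$ by the maximality of $U$ as a centered subset of $P$. This yields $\{U\}=\bigcap_{x\in U}\pi(x)$, so $\{U\}$ is closed and $(X,\tau)$ is $T_1$.

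For compactness, I would invoke the Alexander subbase lemma applied to the closed subbase $\pi^d$, reducing the problem to showing that every subfamily $\{\pi(x_i)\}_{i\in I}\subseteq\pi^d$ with the finite intersection property has non-empty total intersection. The key step is to translate the topological f.i.p.\ into the causal-site notion of centeredness on $F=\{x_i\mid i\in I\}\subseteq P$: given finitely many indices $i_1,\dots,i_k$, any $U\in\pi(x_{i_1})\cap\dots\cap\pi(x_{i_k})$ is a maximal centered subset of $P$ containing $x_{i_1},\dots,x_{i_k}$, and by the very definition of ``centered'' in the causal-site sense this furnishes some $y\ne\bot$ with $y\sqsubseteq x_{i_j}$ for each $j$. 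Therefore $F$ itself is centered in $P$, and the Zorn-style observation (already made in the paragraph introducing $\pi$, namely that chains of centered sets have centered unions) extends $F$ to some maximal centered $M\in\pi$. This $M$ lies in every $\pi(x_i)$, establishing the needed non-empty intersection and hence compactness.

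The main subtlety, and the only place where something more than bookkeeping happens, is the bridge between the two distinct uses of the word ``centered'' in the argument: the topological finite-intersection property on subbasic closed sets in $\pi^d$, and the order-theoretic centeredness of a subset of $P$ in the causal site. Once one recognises that membership of all $x_{i_j}$ in a common element of $\pi$ automatically forces a common non-bot lower bound (since elements of $\pi$ are by construction centered), the rest is a straightforward application of Alexander's subbase lemma and of maximality.
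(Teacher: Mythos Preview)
Your proposal is correct and follows essentially the same approach as the paper: Alexander's subbase lemma together with the translation of topological f.i.p.\ into order-theoretic centeredness for compactness, and maximality of elements of $\pi$ for the $T_1$ axiom. The only cosmetic difference is that you show $\{U\}$ is closed by writing it as $\bigcap_{x\in U}\pi(x)$, whereas the paper exhibits explicit separating subbasic open sets $X\smallsetminus\pi(x)$ and $X\smallsetminus\pi(y)$ for distinct $U,V\in\pi$; both rest on the same observation that maximal centered subsets cannot properly contain one another.
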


\begin{proof} By the well-known Alexander's subbase lemma, for proving the compactness of $(X,\tau)$ it is sufficient
to show, that any subfamily of $\pi^d$ having the f.i.p., has nonempty intersection. The subbase
for the closed sets of $(X,\tau)$ has the form $\pi^d=\{\pi(x)|\, x\in P\}$, so any subfamily of
$\pi^d$ can be indexed by a subset of $P$. Let $F\subseteq P$ and suppose that for every $x_1,
x_2,\dots, x_k\in F$ we have
$$\pi(x_1)\cap\pi(x_2)\cap\dots\cap\pi(x_k)\ne\varnothing.$$ Then there exists $U\in\pi$ such that
$U\in\pi(x_1)\cap\pi(x_2)\cap\dots\cap\pi(x_k)$, so $x_i\in U$ for every $i=1,2,\dots, k$. Since
$U$ is a (maximal) centered family, there exists $\bot\ne y\in P$ such that $y\sqsubseteq x_i$ for
every $i=1,2,\dots,k$. Thus, $F$ is a centered family, contained in some maximal centered family
$M\subseteq P$. Then we have $M\in\pi$, so $$M\in\bigcap_{x\in M}\pi(x)\subseteq\bigcap_{x\in
F}\pi(x)\ne\varnothing.$$ Hence, $(X,\tau)$ is compact.

Let $U,V\in X=\pi$, $U\ne V$. Since both are maximal centered subfamilies of $P$, none of them can
contain the other one. So, there exist $x, y\in P$ such that $x\in U\smallsetminus V$ and $y\in
V\smallsetminus U$. Then $U\in\pi(x)$, $V\notin\pi(x)$, $V\in\pi(y)$, $U\notin\pi(y)$. Thus,
$X\smallsetminus\pi(x)$, $X\smallsetminus\pi(y)$ are open sets in $(X,\tau)$ containing just one of
the points $U, V$. So the topological space $(X,\tau)$ satisfies the T$_1$ axiom.
\end{proof}

\medskip

Note that the topology, generated by a causal site by the way described in this section, we call
the {\it weakly causal topology}.

\bigskip

\section{Causal Sites from Lorentzian Manifolds}\label{Lorentz}

\medskip

In this section we will construct a causal site from an arbitrary, four-dimensional Lorentzian
manifold which does not admit of a closed non-spacelike curve. We will also show, that the weak
topology, generated from this causal site, is the de Groot dual of the manifold topology.

\medskip

For most notions and terminology used in this section we refer the reader to the book \cite{HE} by
S. W. Hawking and G. F. R. Ellis. Throughout this section, if not specified otherwise, by $M$ we
will denote a connected, four-dimensional, Hausdorff, $C^\infty$-differentiable manifold, equipped
with a Lorentz metric $g:M\times M\map \R$, a metric tensor of signature $+2$ on $M$. In addition,
we will suppose that $M$ satisfies the causality condition, that is, there are no closed
non-spacelike curves in $M$, or in other words, that $M$ is globally hyperbolic.

\medskip

The following two lemmas we will use as our starting point. They summarize some results which one
can find in \cite{HE} (especially in Section 4.5).

\begin{lemma}\label{local curves} Let $U\subseteq M$ be a convex normal coordinate neighborhood.
The following statements hold:

\begin{enumerate}

\item Every $p\in U$ can be reached from some $q\in U$ by a timelike, future-oriented curve and
from some $r\in U$ by a timelike, past-oriented curve.

\item If $p\in U$ can be reached from $q\in U$ by a non-spacelike curve but not by a timelike
curve, then $p$ lies on a null geodesic from $q$.
\end{enumerate}
\end{lemma}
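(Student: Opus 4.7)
The plan is to exploit two standard features of a convex normal coordinate neighborhood $U$: for every $x\in U$ the exponential map $\exp_x$ is a diffeomorphism of a star-shaped open subset of $T_xM$ onto $U$, and any two points of $U$ are joined by a unique geodesic segment lying inside $U$. The crucial additional fact is that the causal character of such a geodesic segment coincides with that of its initial velocity vector under the Lorentz form at the starting point. These facts together reduce the lemma to questions about vectors in a single tangent space.

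For (i), fix $p\in U$ and pick a future-directed timelike vector $v\in T_pM$, which exists because the Lorentz metric at $p$ has indefinite signature $+2$. For sufficiently small $\varepsilon>0$, the points $r:=\exp_p(\varepsilon v)$ and $q:=\exp_p(-\varepsilon v)$ both lie in $U$. The geodesic $s\mapsto\exp_p(sv)$ restricted to $s\in[-\varepsilon,0]$ is a future-directed timelike curve from $q$ to $p$, while its restriction to $s\in[0,\varepsilon]$, traversed backwards in parameter, is a past-directed timelike curve from $r$ to $p$.

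For (ii), the hypothesis provides a future-directed non-spacelike curve $\alpha$ from $q$ to $p$ and forbids the existence of any future-directed timelike curve with the same endpoints. Set $v=\exp_q^{-1}(p)$ and let $\gamma(t)=\exp_q(tv)$, $t\in[0,1]$, be the unique geodesic in $U$ from $q$ to $p$. If $v$ were future-directed timelike, $\gamma$ itself would be a future-directed timelike curve from $q$ to $p$, contradicting the assumption. If $v$ is null, then $\gamma$ is the desired null geodesic from $q$ to $p$. The conclusion therefore reduces to excluding the case where $v$ is spacelike, i.e.\ to showing that the given non-spacelike curve $\alpha$ forces $p$ into the image under $\exp_q$ of the closed future causal cone of $T_qM$.

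This last exclusion is the main obstacle. My plan is to invoke the standard variation-of-curves argument from Section 4.5 of \cite{HE}: whenever a future-directed non-spacelike curve fails to be a null geodesic, it can be deformed, keeping the endpoints fixed, into a future-directed timelike curve by pushing it slightly into the chronological future, and the convex-normal property of $U$ leaves enough room for the deformation to stay inside $U$. Since our hypothesis rules out any timelike competitor from $q$ to $p$, the given $\alpha$ must itself be a null geodesic. Because geodesics between two points of $U$ are unique, this null geodesic coincides with $\gamma$; hence $v$ is future null and $p$ lies on the null geodesic $\gamma$ issuing from $q$. The technically delicate ingredient is the local deformation, which in \cite{HE} is produced by combining the Gauss lemma in normal coordinates with an explicit corner-rounding construction that yields timelike curves arbitrarily close to $\alpha$.
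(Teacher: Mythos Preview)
The paper does not supply its own proof of this lemma; it merely refers the reader to Section~4.5 of \cite{HE}. Your sketch is precisely the standard argument given there---existence of timelike geodesics through $p$ for part (i), and the deformation result (a causal curve that is not a null geodesic can be varied with fixed endpoints into a timelike curve) combined with uniqueness of geodesics in a convex normal neighborhood for part (ii)---so your approach coincides with what the paper cites and is correct.
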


For the proof, the reader is referred to \cite{HE}.

\bigskip

\begin{lemma}\label{global curves} Let $p,q\in M$, $p\ne q$. Then the following statements are
fulfilled:

\begin{enumerate}
\item If $p$ and $q$  are joined by a non-spacelike curve $\gamma$, which is not a null geodesics,
they can also be joined by a timelike curve, say $\zeta$. \item Moreover, if $\gamma$ lies totally
in a normal convex coordinate neighborhood $U$, then $\zeta$ also can be chosen in such a way that
it totally lies in $U$.
\end{enumerate}
\end{lemma}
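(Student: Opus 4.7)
The plan is to prove the local statement (ii) first by a contradiction argument inside one convex normal neighborhood, and then derive the global statement (i) by a compactness and pasting argument that reduces everything back to (ii).

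For (ii), I assume $\gamma \subseteq U$ is a non-spacelike curve from $p$ to $q$ that is not a null geodesic, and suppose for contradiction that no timelike curve in $U$ connects $p$ and $q$. Lemma~\ref{local curves}(ii) then forces $p$ and $q$ to be joined by a null geodesic $\nu$ inside $U$. For any interior point $r = \gamma(t_0)$ I consider the two subarcs of $\gamma$ terminating at $r$; each is a non-spacelike curve in $U$. If either subarc admitted a timelike replacement in $U$, the standard local push-up property (a point strictly in the chronological future followed by a causal continuation lies in the chronological future) would yield a timelike curve from $p$ to $q$, contradicting the assumption. Hence, by Lemma~\ref{local curves}(ii) again, each subarc must itself sit on a null geodesic; uniqueness of geodesics in the convex normal neighborhood then forces these two geodesics to fit together through $r$. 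If they fit smoothly they coincide with $\nu$, so $r \in \nu$; if instead they meet with a genuine corner at $r$, then working inside a smaller convex normal subneighborhood that contains interior points on each branch, a further application of Lemma~\ref{local curves}(ii) rounds off the corner into a timelike arc, again a contradiction. Thus every interior $r$ of $\gamma$ lies on $\nu$, so $\gamma$ traces $\nu$ monotonically and is itself a null geodesic, contradicting the hypothesis.

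For (i), I would cover the compact image of $\gamma$ by finitely many convex normal neighborhoods $U_1, \dots, U_n$ and pick breakpoints $p = p_0, p_1, \dots, p_n = q$ on $\gamma$ such that each subarc $\gamma_i$ from $p_{i-1}$ to $p_i$ lies in some $U_{j_i}$. Applying (ii) to each $\gamma_i$ gives a dichotomy: either $\gamma_i$ can be replaced in its neighborhood by a timelike curve, or $\gamma_i$ is a null geodesic segment. If at least one $\gamma_i$ is of the first type, substituting it and smoothing the resulting timelike--causal junctions via local future-directed perturbations produces a timelike curve from $p$ to $q$. Otherwise every $\gamma_i$ is a null geodesic segment; since $\gamma$ as a whole is not a single null geodesic, two consecutive segments must meet with distinct null tangent directions at some breakpoint $p_i$. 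Choosing points on $\gamma_i$ and $\gamma_{i+1}$ sufficiently close to $p_i$ to lie together in a common convex normal neighborhood $V$, the broken piece between them is not a null geodesic in $V$, so (ii) provides a local timelike replacement that can be spliced in as before.

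The main obstacle I anticipate is the corner-rounding step, namely making rigorous that two null geodesic segments meeting with a break can be deformed into a strictly timelike curve, together with the companion bookkeeping required to ensure that each local timelike replacement is compatible with the surrounding causal pieces so that the concatenation remains timelike (and, in the context of (ii), remains inside $U$). Both issues rest ultimately on openness of the chronological relation and on Lemma~\ref{local curves}(ii) applied in sufficiently small convex normal neighborhoods, so the challenge lies in the careful choice of the finite cover and the breakpoints rather than in establishing any genuinely new local geometric fact.
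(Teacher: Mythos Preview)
The paper gives no proof of its own here: it simply refers to Hawking--Ellis, where the result is Proposition~4.5.10, proved by a direct variational argument. One exhibits an explicit variation vector field along $\gamma$ (built from the acceleration of $\gamma$ and, at a corner or non-geodesic point, from a vector transverse to the null direction) and checks that the varied curves have everywhere timelike tangent; part (ii) then falls out because the variation is $C^0$-small and $U$ is open, so nearby curves stay in $U$. Your approach is genuinely different: instead of producing the timelike curve constructively, you argue by contradiction, using Lemma~\ref{local curves}(ii) and the push-up property to force every point of $\gamma$ onto a single null geodesic. This causal-theoretic route is the one taken in, e.g., Penrose's \emph{Techniques of Differential Topology in Relativity}, and it has the advantage of needing almost no calculus of variations; the price is that the corner-rounding step has to be supplied separately.

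That step is exactly where your outline has a real gap, and you are right to flag it. Lemma~\ref{local curves}(ii) applied to nearby points $a,b$ on the two null branches only yields a dichotomy: either $a$ and $b$ are chronologically related in the small convex neighbourhood $V$, or they lie on a null geodesic in $V$. Nothing in Lemma~\ref{local curves}(ii) by itself excludes the second alternative, and invoking it ``once more'' does not break the loop --- you are in effect appealing to the very statement you are proving, since a broken null geodesic is precisely a non-spacelike curve that fails to be a null geodesic. What actually rules out the null option is the elementary linear-algebra fact that a sum of two future-directed null vectors with distinct directions is timelike, combined with a first-order comparison of the geodesic from $a$ to $b$ with the chord in normal coordinates at the corner; equivalently, one shows that for null-related $p,q$ in $U$ one has $J^+_U(p)\cap J^-_U(q)$ equal to the null segment. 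Either formulation is short, but it is an independent ingredient rather than a corollary of Lemma~\ref{local curves}(ii) or of openness of $I^+$, and it is essentially the local kernel of the Hawking--Ellis variational proof. Once you insert that one lemma, your scheme for both (ii) and the compactness-and-splicing argument for (i) goes through.
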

Note that for the proof, the reader is referred to \cite{HE}, where only (i) is explicitly stated.
The proof technique used, however, allows showing also the part (ii).

\medskip

\begin{lemma}\label{convexclosure} Let $V\subseteq M$ be convex. Then also $\cl V$ is convex.
\end{lemma}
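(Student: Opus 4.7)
The plan is to reduce the lemma to the classical fact that in an affine space the closure of a convex set is convex, by passing to a normal coordinate chart. Recall that in the sense of Hawking and Ellis a subset $V\subseteq M$ is convex when every two of its points are joined by a unique geodesic segment lying in $V$, and this is always meant to take place inside an ambient convex normal neighborhood $W$ in which the exponential map is a diffeomorphism at every point. My first step is therefore to fix a convex normal neighborhood $W$ with $\cl V\subseteq W$; the existence of such a $W$ is a standard local fact, since every point of a $C^\infty$ Lorentzian manifold admits a neighborhood base of convex normal neighborhoods.

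Having fixed $W$, I pick an arbitrary base point $o\in W$ and use the normal coordinate chart $\exp_o^{-1}\colon W\to T_oM$. Under this chart the geodesic joining any two points $r,s\in W$ corresponds to the smooth curve determined by the initial data $(r,\exp_r^{-1}(s))$, and the map $(r,s)\mapsto \exp_r^{-1}(s)$ is continuous on $W\times W$. Now take $p,q\in\cl V$ and sequences $p_n\to p$, $q_n\to q$ with $p_n,q_n\in V$. By the convexity of $V$ there is, for each $n$, a unique geodesic $\gamma_n\subseteq V$ from $p_n$ to $q_n$. The continuous dependence of the geodesic flow on its initial data, together with the continuity of $\exp^{-1}$ on $W$, implies that $\gamma_n$ converges uniformly to the unique geodesic $\gamma\subseteq W$ joining $p$ to $q$. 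Because each $\gamma_n$ is contained in $V$, the limit curve $\gamma$ lies in $\cl V$, so $p$ and $q$ are connected inside $\cl V$ by a geodesic segment.

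The final step is uniqueness. Any geodesic from $p$ to $q$ contained in $\cl V$ lies a fortiori in $W$, and in $W$ the connecting geodesic is unique, so it must coincide with $\gamma$. Hence $\cl V$ is convex.

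The principal obstacle in this plan is the passage to the limit: one must be sure that the geodesic joining two boundary points of $V$ is well-defined and that the approximating geodesics $\gamma_n$ do not escape the chart before they can converge. Both issues dissolve as soon as $W$ is chosen large enough to contain $\cl V$; once this is done, everything else is a routine continuity argument inside a single convex normal chart, where the statement reduces to the familiar Euclidean observation that the closure of a convex set is convex.
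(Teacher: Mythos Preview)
Your argument and the paper's share the same backbone---continuous dependence of the connecting geodesic on its endpoints---but your organization introduces an assumption that is not adequately justified. You assert the existence of a single convex normal neighborhood $W$ with $\cl V\subseteq W$, and defend this by observing that every point has a neighborhood base of convex normal neighborhoods. That inference is a non-sequitur: a base of small convex normal sets at each point does not produce one large convex normal set containing all of $\cl V$. For a general convex $V$ (for instance any $V$ of large diameter, or $V=M$ when $M$ happens to be geodesically convex) no such $W$ need exist. Your remark that convexity in Hawking--Ellis ``is always meant to take place inside an ambient convex normal neighborhood'' is an interpretive choice, not a theorem, and the lemma as stated in the paper carries no such hypothesis.

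The paper's proof sidesteps this difficulty by arguing by contradiction and working entirely locally. It supposes some geodesic $\zeta$ from $p_0,q_0\in\cl V$ leaves $\cl V$ at a parameter value $t_0$, and then only needs the map $(t,p,q)\mapsto f(t,p,q)$ (the geodesic from $p$ to $q$ evaluated at $t$) to be defined and continuous on a product of small neighborhoods of $p_0$ and $q_0$. Perturbing the endpoints slightly into $V$ yields a geodesic that, by convexity of $V$, lies in $V$, yet by continuity still passes through $M\smallsetminus\cl V$ near $t_0$---a contradiction. No global enveloping chart is required; only smooth dependence of the boundary-value geodesic near the given data, which follows from the convexity of $V$ itself together with smoothness of the exponential map. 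If you wish to repair your direct-limit version, either add the hypothesis $\cl V\subseteq W$ explicitly (this is in fact how the lemma is applied later, to the sets $V_\varepsilon$ sitting inside a fixed normal chart $U$), or reorganize your sequence argument so that convergence of $\gamma_n$ to $\gamma$ is established without appealing to a single global chart---which essentially reproduces the paper's local contradiction.
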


\begin{proof} Suppose for a moment that there exists a geodesic $\zeta:\left[0,1\right]\map M$ connecting
the points $p_0,q_0\in\cl V$ which does not lie totally in $\cl V$. The geodesics in $M$ depend
differentiably and hence continuously on their initial and terminal points. Therefore, there exists
a continuous function $f:\left[0,1\right] \times U\times W$, where $U$, $W$ are certain open
neighborhoods of the points $p_0$, $q_0$, such that $f(t,p_0,q_0)=\zeta(t)$, and $f(t,p,q)$ is the
geodesic connecting the general points $p\in U$, $q\in W$, defined on the interval
$\left[0,1\right]$. It means that $f(0,p,q)=p$ and $f(1,p,q)=q$.

By our initial assumption, there exists $t_0\in\left[0,1\right]$ such that
$f(t_0,p_0,q_0)=\zeta(t_0)\in M\smallsetminus \cl V$. From continuity of $f$ it follows that there
exist open $U_0\subseteq U$, $W_0\subseteq W$ such that $p_0\in U_0$, $q_0\in W_0$ and for every
$(p,q)\in U_0\times W_0$ it holds $f(t_0,p,q)\in M\smallsetminus \cl V$. However,  since
$p_0,q_0\in\cl V$, there exist $p_1\in U_0\cap V\ne\varnothing$, $q_1\in W_0\cap V\ne\varnothing$.
There is only one geodesic connecting the points $p_1$, $q_1$ -- $f(t,p_1,q_1)$, and because of
convexity of $V$, whole $f(t,p_1,q_1)$ is  contained in $V$. This contradicts to our previous
conclusion $f(t_0,p,q)\in M\smallsetminus \cl V$, following from the continuity of $f$.
\end{proof}

\bigskip

\begin{lemma}\label{boundarycross} Let $p\in M$ and let $\varphi:U\map \R^4$ be a normal coordinate system at $p$ with the coordinate functions $x^i$,
$\varphi(p)=0$. Then there exists arbitrarily small $\varepsilon>0$ such that the open set
$V_\varepsilon=\{q|\, q \in U,\sum_{i=1}^4(x^i(q))^2<\varepsilon\}$ has the following property: If
$\zeta:\left[0,1\right]\map M$ is a geodesic emanating from $r\in V_\varepsilon$ with
$\zeta(t_0)\in\fr V_\varepsilon$, then for every open neighborhood $W$ of $\zeta(t_0)$ there exists
some $t\in \left[0,t_0\right]$ such that $\zeta(t)\in V_\varepsilon\cap W$.
\end{lemma}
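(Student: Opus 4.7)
The plan is to exploit strict geodesic convexity of the coordinate norm-squared $f(q)=\sum_{i=1}^{4}(x^i(q))^2$ in a small normal ball at $p$. Using the geodesic equation $\ddot x^i=-\Gamma^i_{jk}(\zeta)\dot x^j\dot x^k$, differentiating $f\circ\zeta$ twice gives
\[
\frac{d^2}{dt^2}(f\circ\zeta)\;=\;2|\dot x|^2\;-\;2\,x^i\,\Gamma^i_{jk}(\zeta)\,\dot x^j\dot x^k.
\]
Because $\Gamma^i_{jk}(p)=0$ in normal coordinates at $p$, on any compact subneighborhood of $p$ inside $U$ there is a constant $C>0$ with $|x^i\Gamma^i_{jk}\dot x^j\dot x^k|\le C|x|^2|\dot x|^2$, hence $\frac{d^2}{dt^2}(f\circ\zeta)\ge 2|\dot x|^2(1-C|x|^2)$. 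I would choose $\varepsilon_0\in(0,1/C)$ small enough that $\cl V_{\varepsilon_0}$ lies in a convex normal coordinate neighborhood of $p$; then $f\circ\zeta$ is strictly convex along any non-trivial geodesic on every subinterval of $[0,1]$ on which $\zeta$ remains in $V_{\varepsilon_0}$.

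Fix any $\varepsilon\in(0,\varepsilon_0)$ and consider a geodesic $\zeta:[0,1]\to M$ with $\zeta(0)=r\in V_\varepsilon$ and $\zeta(t_0)\in\fr V_\varepsilon$. In the clean case $\zeta([0,t_0])\subseteq V_{\varepsilon_0}$, strict convexity of $f\circ\zeta$ on $[0,t_0]$ together with the boundary data $f(\zeta(0))<\varepsilon=f(\zeta(t_0))$ forces $f\circ\zeta(t)$ to lie strictly below the chord joining $(0,f(\zeta(0)))$ and $(t_0,\varepsilon)$ for every $t\in(0,t_0)$; since that chord is itself bounded above by $\varepsilon$, one obtains $\zeta(t)\in V_\varepsilon$ for all $t\in(0,t_0)$. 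Continuity of $\zeta$ then yields, for any open neighborhood $W$ of $\zeta(t_0)$, a left-neighborhood $(t_0-\delta,t_0)$ of $t_0$ with $\zeta(t)\in V_\varepsilon\cap W$, and any such $t$ witnesses the conclusion.

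The principal obstacle is the excursion case, in which $\zeta$ leaves $V_{\varepsilon_0}$ somewhere on $[0,t_0]$. My plan is to analyse the last time $t_c\in[0,t_0)$ at which $\zeta(t_c)\in\fr V_{\varepsilon_0}$; such $t_c$ exists by continuity whenever an excursion occurs, and on $[t_c,t_0]$ the geodesic lies in $V_{\varepsilon_0}$ so $f\circ\zeta$ is again strictly convex there. Since $f(\zeta(t_c))=\varepsilon_0>\varepsilon=f(\zeta(t_0))$, strict convexity combined with $\dot f(t_0)>0$ produces a strict local minimum $t_m\in(t_c,t_0)$ of $f\circ\zeta$ with $f(\zeta(t_m))<\varepsilon$, after which $f\circ\zeta$ is strictly increasing on $(t_m,t_0)$ from a value below $\varepsilon$ up to $\varepsilon$; this places $\zeta(t)$ in $V_\varepsilon$ throughout a left-neighborhood of $t_0$ and closes the argument as before. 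The remaining degenerate possibility $\dot f(t_0)\le 0$, geometrically the returning geodesic either grazing $\fr V_\varepsilon$ tangentially from outside or crossing it inward after a global excursion, is the hardest step, and I would rule it out by shrinking $\varepsilon$ further so that the $O(|x|)$ smallness of the Christoffel symbols makes near-$p$ geodesics small perturbations of Euclidean straight lines; a Euclidean line through a point strictly inside a ball cannot re-enter that ball in its forward direction, and I expect this transversality to persist in the perturbed regime for $\varepsilon$ sufficiently small.
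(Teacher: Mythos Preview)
Your second-derivative computation and your handling of the ``clean case'' are essentially the paper's argument: the paper also chooses $\varepsilon$ so that the tensor $\delta_{ij}-\Gamma^{k}_{ij}x^{k}$ is positive definite on $\cl V_{\varepsilon}$, obtains $\frac{d^{2}}{dt^{2}}(f\circ\zeta)>0$, and concludes from this. In the clean case you in fact prove more than the paper (you show $\zeta(t)\in V_{\varepsilon}$ for \emph{all} $t\in(0,t_{0})$, whereas the paper argues by contradiction that $f\circ\zeta$ cannot be constant on a subinterval of $[0,t_{0}]$).

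The genuine gap is your ``excursion case''. Your sketch for the subcase $\dot f(t_{0})\le 0$ is not a proof: the heuristic that geodesics near $p$ are perturbations of Euclidean lines says nothing about a geodesic that has already left $V_{\varepsilon_{0}}$ and may have wandered far from $p$ in $M$ before returning. Nothing in your hypotheses prevents $\zeta$ from re-entering $\cl V_{\varepsilon}$ from the outside at $t_{0}$, in which case $f(\zeta(t))>\varepsilon$ on a left neighbourhood of $t_{0}$ and no $t\in[0,t_{0}]$ with $\zeta(t)\in V_{\varepsilon}\cap W$ need exist near $t_{0}$.

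The paper closes this gap not by analysing excursions but by \emph{eliminating} them: it takes $\varepsilon$ small enough that $V_{\varepsilon}$ is geodesically convex (this is exactly Whitehead's theorem, whose hypothesis is the positive definiteness of $\delta_{ij}-\Gamma^{k}_{ij}x^{k}$ that you yourself established), then invokes the previous lemma that $\cl V_{\varepsilon}$ is also convex. Since $\zeta(0)\in V_{\varepsilon}$ and $\zeta(t_{0})\in\cl V_{\varepsilon}$, the geodesic segment $\zeta|_{[0,t_{0}]}$ is trapped in $\cl V_{\varepsilon}$, and your clean-case argument (or the paper's contradiction argument) finishes the proof. You already have the key inequality for this; you only need to recognise that it gives convexity of $V_{\varepsilon}$, making the excursion case vacuous.
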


\begin{proof} Let us choose $\varepsilon>0$ sufficiently small such that $V_\varepsilon$ is convex and the symmetric $(0,2)$-tensor having the
components $\delta_{ij}-\Gamma_{ij}^k x^k$ (where $\Gamma_{ij}^k$ are the Christoffel symbols) is
positively definite on $\cl V_\varepsilon$. In fact, the second assumption is standardly used for
the proof of convexity of $V_\varepsilon$ (for example, in \cite{ON}). However, we will slightly
modify the standard technique for our purposes.

Suppose conversely that there exists an open neighborhood $W$ of $\zeta(t_0)$ such that
$\zeta(\left[0,t_0\right])\cap V_\varepsilon\cap W=\varnothing$. Without loss of generality we may
assume that $W$ is convex. Since $\cl V_\varepsilon$ is convex by Lemma~\ref{convexclosure},
$\zeta(\left[0,t_0\right])\subseteq \cl V_\varepsilon$, which means that $\zeta(\left[0,t_0\right])
\cap W\subseteq\fr V_\varepsilon$. By continuity of $\zeta$, $\zeta^{-1}(W)$ is an open
neighborhood of $t_0$ in the topological space $\left[0,1\right]$, and so there exists some
$w\in\zeta^{-1}(W)\cap\left[0,t_0\right)$. Then $\zeta(w)\in \zeta(\left[0,t_0\right])\cap W$ and
since $W$ is convex, it follows $\zeta(\left[w, t_0\right])\subseteq \zeta(\left[0,t_0\right])\cap
W\subseteq\fr V_\varepsilon$. Consider the function
\begin{equation}
f(t)=\sum_{i=1}^4 (x^i\circ\zeta)^2(t).
\end{equation}
For every $t\in \left[w, t_0\right]$ it holds $f(t)=\varepsilon^2$. On the other hand,
\begin{equation}
{\d^2 f\over \d t^2}=2\sum_{i=1}^4\left(\left({{\d (x^i\circ\zeta)}\over{\d
t}}\right)^2+(x^i\circ\zeta)\cdot{{\d^2(x^i\circ\zeta)}\over{\d t^2}}\right),
\end{equation}
from which using the geodesic equation for $\zeta$
\begin{equation}
{{\d^2(x^i\circ\zeta)}\over{\d t^2}}+\Gamma_{jk}^i {{\d (x^j\circ\zeta)}\over{\d t}}\cdot {{\d
(x^k\circ\zeta)}\over{\d t}}=0
\end{equation}
it is easy to get
\begin{equation}
{\d^2 f\over \d t^2}=2\left(\delta_{ij}-\Gamma_{ij}^k x^k\right){{\d (x^i\circ\zeta)}\over{\d
t}}\cdot {{\d (x^j\circ\zeta)}\over{\d t}}>0,
\end{equation}
since the tensor $\delta_{ij}-\Gamma_{ij}^k x^k$ is positively definite on $\cl V_\varepsilon$. But
this is a contradiction with the constant value $\varepsilon^2$ of the function $f$ on the interval
$\left[w, t_0\right]$.
\end{proof}

\bigskip

\begin{lemma}\label{convexintersection} For every $p\in M$ there exists an open local base $\sigma_p\subseteq \tau$  at $p$, such that every $V\in\sigma_p$ has the
following properties:
\begin{enumerate}
\item $V$ is convex. \item $K=\cl V$ is compact. \item There exists a normal convex coordinate
neighborhood $U$ such that $K\subseteq U$. \item For every $r\in\Int K$, $K\cap J^-(r)$ is regular
closed.
\end{enumerate}
\end{lemma}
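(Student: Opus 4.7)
The plan is to take $\sigma_p$ to consist of the normal coordinate balls $V_\varepsilon$ supplied by Lemma~\ref{boundarycross}, with $\varepsilon$ varying over an interval $(0,\varepsilon_0)$ chosen small enough that all four properties hold simultaneously. First I would fix a normal coordinate chart $\varphi:U\to\R^4$ at $p$ on a convex normal coordinate neighborhood $U$ and shrink $\varepsilon_0$ so that, for every $\varepsilon<\varepsilon_0$, (a) $V_\varepsilon$ is convex (the standard argument already implicit in Lemma~\ref{boundarycross}), (b) $K=\cl V_\varepsilon$ is compact and contained in $U$, and (c) $K$ lies inside some open causally convex subset of $U$, available because $M$ is globally hyperbolic and hence strongly causal at $p$. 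Items (i)--(iii) then follow at once: convexity of $K$ is Lemma~\ref{convexclosure}, while (ii) and (iii) are clear from the choice of coordinates.

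All the real work is in (iv). Fix $r\in V:=\Int K=V_\varepsilon$. I would first identify the interior explicitly as $\Int(K\cap J^-(r))=V\cap I^-(r)$. The inclusion $\supseteq$ is trivial since both $V$ and $I^-(r)$ are open; the reverse uses $\Int K=V$ together with the standard Lorentzian fact $\Int J^-(r)=I^-(r)$. Regular closedness then reduces to the identity $\cl(V\cap I^-(r))=K\cap J^-(r)$. The inclusion $\subseteq$ is immediate since $V\cap I^-(r)\subseteq K\cap J^-(r)$ and the latter is closed (using $\cl V=K$ and that $J^-(r)$ is closed by global hyperbolicity).

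For the reverse inclusion I would approximate a general $q\in K\cap J^-(r)$ by elements of $V\cap I^-(r)$ via case analysis. If $q\in V$, choose a small convex coordinate subneighborhood $U_q\subseteq V$ around $q$ and pick $q'\in U_q$ with $q'\in I^-(q)$ by Lemma~\ref{local curves}(i); the push-up property $q'\ll q\preccurlyeq r$ yields $q'\in V\cap I^-(r)$ arbitrarily close to $q$. If $q\in\fr V\cap I^-(r)$, openness of $I^-(r)$ together with the fact that $V$ accumulates at every boundary point of the coordinate ball $V_\varepsilon$ produces such an approximating sequence immediately.

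The delicate case, and the main obstacle, is $q\in\fr V\cap(J^-(r)\setminus I^-(r))$. By causal convexity, every non-spacelike curve from $r$ to $q$ lies in $U$; Lemma~\ref{global curves}(ii) forces any such curve that is not a null geodesic to yield a timelike connection $r\to q$, contradicting $q\notin I^-(r)$. Hence the unique geodesic $\zeta:[0,1]\to U$ with $\zeta(0)=r$, $\zeta(1)=q$ is a past-directed null geodesic. Lemma~\ref{boundarycross} applied to $\zeta$ at $t_0=1$ then supplies $t_n<1$ with $q_n=\zeta(t_n)\in V$ and $q_n\to q$; since $q_n$ lies on the past-directed null geodesic from $r$, one has $q_n\in V\cap J^-(r)$, so the first subcase delivers $q_n'\in V\cap I^-(r)$ with $q_n'$ arbitrarily close to $q_n$, and a diagonal argument finishes the proof. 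The essential content is thus Lemma~\ref{boundarycross}, which forces boundary points of $V$ to be approachable from inside along geodesics, combined with the Lorentzian push-up mechanism that converts null accessibility into chronological accessibility.
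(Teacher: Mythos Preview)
Your proposal is correct and follows essentially the same strategy as the paper: both take $\sigma_p$ to be the family of coordinate balls $V_\varepsilon$ furnished by Lemma~\ref{boundarycross} (for sufficiently small $\varepsilon$), and both reduce the regular--closedness in (iv) to an application of that lemma at the boundary of $V$. The only real difference is bookkeeping. The paper organizes the verification of (iv) around the dichotomy $x\in\Int J^-(r)$ versus $x\in\fr J^-(r)$: in the first case any neighbourhood of $x$ already meets $V\cap\Int J^-(r)$ because $x\in\cl V$, and in the second case $x$ lies on a null geodesic from $r$, so Lemma~\ref{boundarycross} produces a nearby point $x'\in V$ on that geodesic, whose neighbourhood then meets $\Int J^-(r)$. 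You instead split along $q\in V$ versus $q\in\fr V$ and invoke the push--up property $I^-\circ J^-\subseteq I^-$ and the identity $\Int J^-(r)=I^-(r)$ explicitly, yielding three subcases rather than two. Your added causal--convexity condition (c), drawn from strong causality, makes explicit why the null geodesic from $r$ to $q$ stays in $U$; the paper obtains the same conclusion by quoting Hawking--Ellis directly. Both routes are valid; the paper's two--case split is a little more economical, while yours is more self--contained about the Lorentzian facts being used.
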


\begin{proof} Let us choose a normal coordinate system $\varphi:U\map \R^4$ at $p$. The set $\varphi(U)$ is open in the
Euclidean topology of $\R^4$, so $\varphi(p)$ has an open neighborhood, say $O$, such that $\cl O
\subseteq\varphi(U)$ and $\cl O$ is compact. Hence, $S=\varphi^{-1}(O)$ is open in $M$,
$H=\varphi^{-1}(\cl O)$ is compact and hence closed in the Hausdorff topological space $M$, and
$S\subseteq H$.  We put $\sigma_p= \{V_\varepsilon|\, \varepsilon>0, V_\varepsilon\subseteq S\}$,
where $V_\varepsilon$ is the convex open neighborhood of $p$ whose existence is ensured by
Lemma~\ref{boundarycross}. Since $V_\varepsilon\subseteq S\subseteq H\subseteq U$, $\cl
V_\varepsilon\subseteq H$ is compact and contained in $U$. It is also clear that $\sigma_p$ is a
local base at $p$, so the conditions (i), (ii) and (iii) are satisfied.

To show (iv), let us take some fixed $V\in\sigma_p$ and denote $K=\cl V$. We will show that $K\cap
J^-(r)=\cl (\Int (K\cap J^-(r)))$. Consider the set $U\cap J^-(r)$. It is shown in \cite{HE} that
the boundary of $J^-(r)$ in $U$ is formed by null geodesics, so $U\cap J^-(r)$ is a closed
topological subspace of $U$. Hence, there exists a closed set $H\subseteq M$ such that $U\cap
J^-(r)=U\cap H$. Then, $K\cap J^-(r)=K\cap U\cap  J^-(r)=K\cap U\cap H=K\cap H$, which is a closed
set in $M$. Therefore, $\cl (\Int (K\cap J^-(r)))\subseteq K\cap J^-(r)$.

Conversely, take any $x\in  K\cap J^-(r)$ and select an open neighborhood, say $W$, of $x$. We will
show that $W$ meets $\Int(K\cap J^-(r))$. In any case, $x$ can be an inner or a boundary point of
$J^-(r)$. If $x\in \Int J^-(r)$, then $W\cap \Int J^-(r)$ is also an open neighborhood of $x$ which
must meet $V$ since $x\in K=\cl V$. Then $\varnothing\ne W\cap \Int J^-(r)\cap V \subseteq W\cap
\Int J^-(r)\cap \, \Int K= W\cap (\Int(K\cap J^-(r)))$, so $W$ meets $\Int(K\cap J^-(r))$. Now,
suppose the other possibility, that $x\in \fr J^-(r)$. Then $x$ lies on a null geodesic, say
$\zeta$, emanating  from $r$. From Lemma~\ref{boundarycross} it follows that there exists
$x^\prime\in W\cap V\subseteq W\cap\Int K$, also lying on $\zeta$. However, $W\cap\,\Int K$ is an
open neighborhood of $x^\prime$, which must meet $\Int J^-(r)$ since $x$ is a boundary point of
$J^-(r)$. Then, $\varnothing\ne W\,\cap\Int K\cap\Int J^-(r)=W\cap\Int(K\cap J^-(r))$, so again $W$
meets $\Int(K\cap J^-(r))$. Hence $x\in \cl(\Int(K\cap J^-(r)))$. It follows $K\cap
J^-(r)\subseteq\cl(\Int(K\cap J^-(r)))$, which completes the proof.
\end{proof}

\bigskip

Let $x, y\in M$. We put $x\leqslant y$ if $x=y$ or there exists a non-spacelike future-oriented
curve $\psi:\left[0,1\right]\map M$ with $\psi(0)=x$, $\psi(1)=y$. It is clear that $\leqslant$ is
a reflexive and transitive relation, so it is a preorder on $M$. The relation need not necessarily
be antisymmetric.
\bigskip

\begin{definition}\label{multidiamond}
For every $p\in M$ we define $J^-(p)=\{x|, x\in M, x\leqslant p\}$, $J^+(p)=\{x|, x\in M,
p\leqslant x\}$. For  $F,G\subseteq M$ we put
$$F \Diamond G=\bigcap_{p\in F} J^+(p)\cap \bigcap_{q\in G} J^-(q).$$
We say that the set $F \Diamond G$ is a {\it multi-diamond} if the following conditions are
satisfied:

\begin{enumerate}
\item $F, G$ are non-empty and finite.
\item $F \Diamond G$ is compact. \item There exists a  normal convex coordinate neighborhood
$U_{F,G}\subseteq M$ and an open set $V_{F,G}$ with $K_{F,G}=\cl V_{F,G}\subseteq U_{F,G}$ compact
 such that $F\cup G\subseteq K_{F,G}$, $F\Diamond G\subseteq\Int K_{F,G}$.
\item For every $r\in\Int K_{F,G}$, the set $K_{F,G}\cap J^-(r)$ is regular closed.
\end{enumerate}

\end{definition}
\bigskip

\bigskip
\medskip

Let $\mathcal D$ be a family of all such multi-diamonds $F \Diamond G$  (with non-empty interior)
and let $\P=\mathcal D^F$ be the family of all finite unions of elements of $\mathcal D$ (including
the union of the empty collection, so we admit $\varnothing\in\P$) The family $\mathcal D$ is
closed under finite intersections and similarly, $\P$ is closed under finite unions by its
definition. Let $A, B\in \P$. Then there exist multi-diamonds $C_1, C_2, \dots C_n \in \mathcal D$
and $D_1, D_2,\dots D_m\in \mathcal D$ such that $A=\bigcup_{i=1}^n C_i$ and $B=\bigcup_{j=1}^m
D_j$. Then
$$A\cap B=(\bigcup_{i=1}^n C_i)\cap(\bigcup_{j=1}^m D_j)=\bigcup_{i=1}^n\bigcup_{j=1}^m (C_i\cap
D_j),$$ and since $C_i\cap D_j\in\mathcal D$, we also have $A\cap B\in\P$. Hence, $\P$ is also
closed under finite intersections.

\medskip

\begin{lemma}\label{interiors} Let $W\subseteq M$ be an open set. Then for every
$p\in W$ there exists $A\in \P$ with $p\in\Int A$ and $A\subseteq W$.
\end{lemma}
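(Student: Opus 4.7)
The plan is to exhibit a single multi-diamond $A=\{q\}\Diamond\{r\}=J^+(q)\cap J^-(r)$ in $\mathcal{D}\subseteq\P$ satisfying $p\in\Int A\subseteq A\subseteq W$. First, I use Lemma~\ref{convexintersection} to select $V\in\sigma_p$ with $V\subseteq W$, and set $K=\cl V$. Then $V$ is convex and open, $K$ is compact and contained in some convex normal coordinate neighborhood $U$, and for every $s\in\Int K$ the set $K\cap J^-(s)$ is regular closed. Since $V$ is open and $V\subseteq K$, we have $V\subseteq\Int K$.

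Second, I would choose $q,r\in V$ such that a timelike future-oriented curve runs from $q$ to $p$ and another from $p$ to $r$, and moreover close enough to $p$ that $J^+(q)\cap J^-(r)\subseteq V$. Lemma~\ref{local curves}(i) supplies timelike curves through $p$ in $U$, and by moving along these toward $p$ the points $q, r$ may be placed in any prescribed neighborhood of $p$. To confine the entire causal diamond inside $V$, I invoke strong causality of $M$ at $p$, a consequence of the standing global hyperbolicity: there exists an open neighborhood $N\subseteq V$ of $p$ such that every causal curve with both endpoints in $N$ lies wholly in $V$. With $q,r\in N$, any $x\in J^+(q)\cap J^-(r)$ lies on a causal curve from $q$ to $r$ obtained by concatenating the segments realizing $q\leqslant x$ and $x\leqslant r$; this curve has endpoints in $N$ and hence stays in $V$, so $x\in V$.

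Third, setting $F=\{q\}$, $G=\{r\}$, $A=F\Diamond G$, I verify the multi-diamond axioms of Definition~\ref{multidiamond} with $U_{F,G}=U$, $V_{F,G}=V$, $K_{F,G}=K$: condition (i) is immediate, (iv) is exactly Lemma~\ref{convexintersection}(iv), (ii) holds because $A$ is closed in $M$ and contained in the compact $K$, and (iii) follows from $\{q,r\}\subseteq V\subseteq K$ and $A\subseteq V\subseteq\Int K$. Since the timelike curves from $q$ to $p$ and from $p$ to $r$ place $p$ in the interiors of $J^+(q)$ and $J^-(r)$, I get $p\in\Int J^+(q)\cap\Int J^-(r)\subseteq\Int A$; in particular $A$ has non-empty interior and $A\in\mathcal{D}\subseteq\P$, while $A\subseteq V\subseteq W$ concludes the argument.

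The main obstacle is the second step: forcing the \emph{global} causal diamond $J^+(q)\cap J^-(r)$ inside $V$. This is delicate because causal curves from $q$ to $r$ need not stay in $U$, let alone in $V$. The cleanest route is to invoke strong causality at $p$ as a standing consequence of global hyperbolicity (see \cite{HE}); to remain within the tools already assembled in the paper, a substitute using Lemma~\ref{boundarycross} and the regular-closedness property of Lemma~\ref{convexintersection}(iv) to rule out re-entrant causal arcs could be developed, at the cost of substantially more technical work.
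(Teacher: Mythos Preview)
Your approach is essentially the same as the paper's: invoke Lemma~\ref{convexintersection} to obtain $V\subseteq W$ with $K=\cl V$, pick two points $q,r\in V$ so that the single diamond $A=\{q\}\Diamond\{r\}$ has $p\in\Int A\subseteq A\subseteq V$, and verify the multi-diamond axioms with $K_{F,G}=K$. The paper is terser---it simply asserts as ``clear'' that such $u,v\in V$ with $\{u\}\Diamond\{v\}\subseteq V$ exist---so your explicit appeal to strong causality (a consequence of the standing global hyperbolicity assumption) to confine the global causal diamond inside $V$ is a legitimate filling-in of the one step the paper leaves unargued.
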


\begin{proof} Take an open set $W\subseteq M$ and $p\in W$. By Lemma~\ref{convexintersection}, there exists a
convex open neighborhood $V\subseteq W$ of $p$ such that $K=\cl V$ is compact, contained in some
normal convex coordinate neighborhood $U$ and such that $K\cap J^-(r)$ is a regular closed set for
every $r\in\Int K$. It is also clear that there exist $u, v\in V$, $u\leqslant v$, with
$p\in\Int(\{u\}\Diamond\{v\})\subseteq \{u\}\Diamond\{v\}\subseteq V\subseteq W$. Furthermore, the
set $A=\{u\}\Diamond\{v\}$ is compact as a closed subspace of the compact set $K$. By
Definition~\ref{multidiamond}, $A$ is a multi-diamond, and so $A\in \P$.
\end{proof}

\begin{corollary} The family $\{\Int A|\, A\in \P\}$ is a base of the topology of $M$.
\end{corollary}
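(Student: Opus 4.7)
The plan is to apply the preceding Lemma~\ref{interiors} essentially verbatim, invoking the standard characterization of a base: a collection $\mathcal{B}$ of open sets is a base for a topology $\tau$ if and only if for every $W\in\tau$ and every $p\in W$ there exists $B\in\mathcal{B}$ with $p\in B\subseteq W$. So the work reduces to verifying that each $\Int A$ is open and that the local covering condition holds at every point of every open set.

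First I would note that the family $\{\Int A\mid A\in\P\}$ trivially consists of open subsets of $M$, since interiors are open by definition. Next, given an arbitrary open set $W\subseteq M$ and a point $p\in W$, Lemma~\ref{interiors} provides some $A\in\P$ with $p\in\Int A$ and $A\subseteq W$. Because $W$ is open, $A\subseteq W$ implies $\Int A\subseteq \Int W=W$, so $p\in\Int A\subseteq W$. This establishes the base criterion.

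There is no genuine obstacle to overcome: the corollary is essentially a repackaging of Lemma~\ref{interiors}, which already concentrated the substantive content, namely the existence of a multi-diamond $A=\{u\}\Diamond\{v\}\in\P$ whose interior contains a prescribed point $p$ and which is itself contained in a prescribed open neighborhood $W$. The passage from that lemma to the base statement is formal.
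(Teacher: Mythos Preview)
Your proposal is correct and matches the paper's approach exactly: the paper states this corollary without proof, treating it as an immediate consequence of Lemma~\ref{interiors}, and your argument spells out precisely that formal deduction via the standard base criterion.
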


\begin{lemma}\label{cocompactbase} The family $\P$ is a closed base for the co-compact topology on $M$.
\end{lemma}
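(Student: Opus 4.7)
The plan is to verify the two defining properties of a closed base for $\tau^G$: that every element of $\P$ is $\tau^G$-closed, and that every $\tau^G$-closed set can be expressed as an intersection of elements of $\P$.

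For the first property, each $A\in\P$ is by construction a finite union of multi-diamonds, and each multi-diamond is compact in $(M,\tau)$ by Definition~\ref{multidiamond}(ii); hence $A$ itself is compact. Since $M$ is Hausdorff, and hence $T_1$, every subset $S\subseteq M$ is saturated (one has $S=\bigcap_{x\notin S}(M\smallsetminus\{x\})$ with each $M\smallsetminus\{x\}$ open). In particular every $\tau$-compact set is compact and saturated, so the closed base of $\tau^G$ coincides with the family of all $\tau$-compact sets, and therefore $A$ is $\tau^G$-closed.

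For the second property, by the same observation it suffices to show that every $\tau$-compact set $C\subseteq M$ is an intersection of elements of $\P$; equivalently, that for each $p\in M\smallsetminus C$ there exists some $A\in\P$ with $C\subseteq A$ and $p\notin A$. Fix such $C$ and $p$. For every $q\in C$, Hausdorffness provides an open neighborhood $U_q\ni q$ with $p\notin U_q$. Lemma~\ref{interiors} applied to $U_q$ yields some $A_q\in\P$ with $q\in\Int A_q\subseteq A_q\subseteq U_q$, so that in particular $p\notin A_q$. The family $\{\Int A_q\}_{q\in C}$ is an open cover of the compact set $C$, and by compactness admits a finite subcover $\Int A_{q_1},\ldots,\Int A_{q_n}$. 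Setting $A=\bigcup_{i=1}^n A_{q_i}$, one has $A\in\P$ because $\P$ is closed under finite unions (as observed immediately after Definition~\ref{multidiamond}), together with $C\subseteq A$ and $p\notin A$, as required.

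The argument is essentially routine once Lemma~\ref{interiors} is available, since the substantial geometric content has already been absorbed into Lemmas~\ref{convexclosure}--\ref{convexintersection}. The only point that requires care is the interplay with the saturation condition in the definition of $\tau^G$; this evaporates on a Hausdorff space and permits the entire family of $\tau$-compact sets to serve as the closed base, so no separate verification of saturation for finite unions of multi-diamonds is needed. The main conceptual step is thus simply to recognize that Lemma~\ref{interiors} supplies enough elements of $\P$ to pick out, via Hausdorff separation and compactness of $C$, a distinguishing $A\in\P$ that contains $C$ but misses $p$.
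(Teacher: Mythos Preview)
Your proof is correct and follows essentially the same route as the paper: both use Lemma~\ref{interiors} to cover a given compact set by interiors of elements of $\P$ avoiding a fixed external point, extract a finite subcover, and take the union (which lies in $\P$). The only cosmetic differences are that you explicitly verify $\tau^G$-closedness of members of $\P$ (the paper leaves this implicit) and invoke Hausdorff separation where the paper uses only the $T_1$ property via the single open set $M\smallsetminus\{p\}$.
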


\begin{proof}
The co-compact topology $\tau_M^G$ on $M$ is generated by its open base, which is formed by the
complements of sets, compact in the manifold topology $\tau_M$. Let $K\subseteq M$ be compact.
Denote $U=M\smallsetminus K$.  Take a point $x\in U$. Since $M$ is a T$_1$ space, $M\smallsetminus
\{x\}$ is an open set with respect to the manifold topology. By Lemma~\ref{interiors}, for every
$y\in K$, there exists $A_y\in \P$ such that $y\in\Int A_y\subseteq A_y\subseteq M\smallsetminus
\{x\}$, which also means that $x\notin A_y$.

Since $K$ is compact, there exist $y_1, y_2, \dots, y_k\in K$ such that $$K\subseteq
\bigcup_{i=1}^k \Int A_y.$$ Then $$x\in \bigcap_{i=1}^k (M\smallsetminus A_{y_i})=M\smallsetminus\
\bigcup_{i=1}^k A_{y_i}\subseteq M\smallsetminus K=U,$$ and the closed set $\bigcup_{i=1}^k
A_{y_i}$ is an element of $\P$. Hence, every set $U$, which is open with respect to  $\tau_M^G$, is
a union of complements of elements of $\P$, which are closed in the same topology. Then $\P$ forms
a closed base for $\tau_M^G$.
\end{proof}

\medskip

\medskip

\begin{lemma}\label{maximize}
Let $A=F\Diamond G$ be a multi-diamond, $p\in M$, $A_p=(F\smallsetminus\{p\})\Diamond G$. Let
$K=K_{F,G}$ be the compact set introduced in Definition \ref{multidiamond}. Then for each $x\in
O_p$ where
$$O_p=\{x|\, x\in K, A_p\cap J^+(x)=A\}$$
there exists a maximal element $m\in O_p$ such that $x\leqslant m$.

\end{lemma}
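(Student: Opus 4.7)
This is a straightforward Zorn's lemma application. Consider the subposet $P_x=\{y\in O_p: x\leqslant y\}$ of $(M,\leqslant)$; the relation $\leqslant$ is antisymmetric because $M$ satisfies the causality condition, so this is genuinely a partial order, and $P_x$ is non-empty as it contains $x$. I will show every chain in $P_x$ has an upper bound in $P_x$, so Zorn gives a maximal element $m\in P_x$. Any $m'\in O_p$ with $m\leqslant m'$ satisfies $x\leqslant m\leqslant m'$, hence $m'\in P_x$, so maximality of $m$ in $P_x$ automatically promotes to maximality in $O_p$.

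Let $C\subseteq P_x$ be a non-empty chain. The candidate upper bound is any point of
$$L=K\cap\bigcap_{y\in C}J^+(y)\cap\bigcap_{z\in A}J^-(z).$$
Since $M$ is globally hyperbolic, each $J^+(y)$ and $J^-(z)$ is closed in $M$, so each of these sets intersected with $K$ is a closed subset of the compact set $K$. For the finite intersection property, given finitely many $y_1,\dots,y_n\in C$ and $z_1,\dots,z_k\in A$, pick the $\leqslant$-largest $y^*$ among $y_1,\dots,y_n$ (which exists since $C$ is a chain). Then $y^*\in K$, each $y_i\leqslant y^*$, and $y^*\leqslant z_j$ for all $j$ because $y^*\in O_p$ forces $A\subseteq J^+(y^*)$. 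Hence $y^*$ belongs to the corresponding finite subintersection, and by compactness of $K$ we conclude $L\neq\varnothing$. Pick any $m\in L$.

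It remains to check $m\in P_x$, i.e., $m\in K$, $x\leqslant m$, and $A_p\cap J^+(m)=A$. The first two are immediate from the definition of $L$ and the fact that $x\leqslant y\leqslant m$ for any $y\in C$ (or $m=x$ if $C$ happens to be empty, in which case $x$ itself serves as the upper bound). For the third, the inclusion $A\subseteq A_p\cap J^+(m)$ holds because $A\subseteq A_p$ trivially and $m\leqslant z$ for every $z\in A$ by construction of $L$; the reverse inclusion follows by fixing any $y\in C$, noting $y\leqslant m$ gives $J^+(m)\subseteq J^+(y)$, so $A_p\cap J^+(m)\subseteq A_p\cap J^+(y)=A$ since $y\in O_p$. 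The main obstacle is precisely this last verification: the $\supseteq$ direction is the reason for augmenting the chain intersection with the sets $J^-(z)$ for $z\in A$, since otherwise the limit point $m$ could drift strictly above some point of $A$ and destroy the equality defining $O_p$; the $\subseteq$ direction is inherited from any single $y\in C$ lying below $m$, and closedness of the causal past/future provided by global hyperbolicity is what makes the compactness argument go through.
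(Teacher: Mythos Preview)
Your proof is correct and takes a genuinely different---and considerably simpler---route than the paper's. Both arguments are Zorn-based, and both must produce, for an arbitrary chain in $O_p$, an upper bound that still lies in $O_p$. The paper does this by taking a cluster point $p_L$ of the chain (viewed as a net) in the compact set $K$, checking $p_L\in\bigcap_{l}J^+(l)$ via closedness of $K\cap J^+(l)$, and then establishing the delicate inclusion $A\subseteq J^+(p_L)$ through an intricate differential-geometric argument: it builds a geodesic-based map using parallel transport, invokes the Heine--Cantor theorem for uniform continuity on $K$, and exploits the regular-closedness condition (iv) from Definition~\ref{multidiamond} to derive a contradiction. You bypass all of this by encoding the requirement $A\subseteq J^+(m)$ directly into the candidate set via the additional factors $\bigcap_{z\in A}J^-(z)$; once closedness of the sets $K\cap J^\pm(\cdot)$ is granted (which you obtain from global hyperbolicity, and which the paper also uses freely since all points concerned lie in the convex normal neighborhood $U_{F,G}$), the finite-intersection-property argument is immediate and the verification that $m\in O_p$ becomes a two-line computation. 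Your approach yields a much cleaner proof and avoids the geometric machinery entirely; the paper's approach, by contrast, makes essential use of the regular-closedness hypothesis on $K\cap J^-(r)$ and thereby motivates why that condition was built into the definition of a multi-diamond---though for this particular lemma your argument shows it is not actually needed.
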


\begin{proof} At first, let us show that the set $O_p$ is nonempty. If $p\in F$, then $A_p\cap J^+(p)=A$ and so $p\in O_p$.
On the other hand, if $p\notin F$, it follows $A_p=A$, so the condition $A_p\cap J^+(x)=A$ is
equivalent to $A\subseteq J^+(x)$. But then $\varnothing\ne F\subseteq O_p$.

Now, let $L\subseteq O_p\subseteq K$ be a non-empty linearly ordered chain with respect to
$\leqslant$. We will show that $L$ has an upper bound in $O_p$. Consider the net $\id
L(L,\leqslant)$. Since $K$ is compact $\id L(L,\leqslant)$ has a cluster point, say $p_L\in K$.
Suppose that there is some $l\in L$ such that $p_L\notin J^+(l)$. Since the set $K\cap J^+(l)$ is
closed, there exists an open neighborhood $U$ of $p_L$ with $U\cap K\cap J^+(l)=\varnothing$. By
the definition of the cluster point, there exists $m\in L$, $l\leqslant m$, such that $m\in U$.
Then $m\in U\cap K \cap J^+(m)$, but this is not possible since $J^+(m)\subseteq J^+(l)$. Hence,
$p_L\in \bigcap_{l\in L} J^+(l)$, which means that $p_L$ is an upper bound of $L$ in $K$.

It remains to show that $p_L\in O_p$, which is equivalent to verify that $A_p\cap J^+(p_L)=A$. Let
$l\in L\subseteq O_p\subseteq K$. Then $A_p\cap J^+(l)=A$ and since $l\leqslant p_L$, we have
$J^+(p_L)\subseteq J^+(l)$. We will show that $A\subseteq J^+(p_L)$. Suppose conversely, that there
exists some $r\in A\smallsetminus J^+(p_L)$. Since $K\cap J^+(p_L)$ is closed (and $M$ certainly is
a metrizable topological space) there exists $\varepsilon >0$ such that $B_\varepsilon(r)\cap K\cap
J^+(p_L)=\varnothing$. Since by Definition~\ref{multidiamond} $A\subseteq \Int K$, without loss of
generality we may select $\varepsilon>0$ in such a way that $B_\varepsilon(r)\subseteq K$.

Since $K$ is contained in a normal convex coordinate neighborhood by Definition \ref{multidiamond},
for every $x\in K$ there exist a unique geodesic  $\zeta_x$ emanating from $x$ and terminating at
$r$ and a unique geodesic, say $\eta_x$, connecting $x$ with $p_L$. Moreover, if $x\in K\cap \Int
J^-(r)$, the geodesic $\zeta_x$ must be timelike and future-oriented.

Denote by $f(x)$ the terminal point of the  geodesic $\zeta_x^\prime$ emanating from $p_L$ of the
same length as $\zeta_x$, with the corresponding tangent vector at $p_L$ parallely transported
along $\eta_x$ from $x$. Then $f(x)$ depends differentiably and hence continuously on $x$,
moreover, on the compact set $K$ the mapping  $f(x)$ is uniformly continuous by the Heine-Cantor
theorem. Hence, there exists some $\delta>0$ such that if for some $p\in K$, $x\in B_\delta(p)$,
then $f(x)\in B_\varepsilon(f(p))$. Since $p_L$ is a cluster point of the net $\id L(L,\leqslant)$,
there exists $z\in B_{\delta}(p_L)\cap L$. Since $z\in L$, it follows $r\in A\subseteq J^+(z)$,
which means that $z\in J^-(r)$. Hence $B_{\delta}(p_L)\cap K\cap J^-(r)\ne\varnothing$ and since
$K\cap J^-(r)$ is a regular closed set, there exists some $x\in B_{\delta}(p_L)\cap \, \Int (K\cap
J^-(r))= B_{\delta}(p_L)\cap \, \Int K\cap \, \Int J^-(r)$. Now, $x$ is in the domain of $f$ and by
its uniform continuity, $f(x)\in B_\varepsilon( f(p_L))$. From the construction of the function $f$
and the uniqueness of the geodesics $\zeta_x$ and $\zeta_x^\prime$ it follows $f(p_L)=r$. Since
$f(x)$ is the terminal point of the non-spacelike future-oriented curve $\zeta_x^\prime$, emanating
from $p_L$, it follows that $f(x)\in B_\varepsilon(r)\cap J^+(p_L)\subseteq K$,  which is a
contradiction. Hence, $A\subseteq J^+(p_L)$. Then
$$A=A\cap J^+(p_L)\subseteq A_p\cap J^+(l)=A.$$
and so $p_L\in O_p$ is the upper bound of the chain $L$. Let $M_p$ be the set of all maximal
elements of $O_p$ (with respect to $\leqslant $). By Zorn's Lemma, for every $x\in O_p$ there
exists $m\in M_A$ such that $x\leqslant m$.
\end{proof}

Since $p_L\in O_p$, every element of $F$ can be replaced by an element, which is maximal among
those, yielding the same set $F\Diamond G$.

\medskip

\begin{lemma}\label{cones1} Let $U\subseteq M$ be a normal convex neighborhood, $p,q\in U$ points that are
connected by a null geodesic. Then the following conditions hold:
\begin{enumerate}
\item $U\cap\Int J^-(p)\cap J^+(q)=\varnothing$, \item $U\cap J^-(p)\cap \Int J^+(q)=\varnothing$,
\item The set $U\cap\,\fr J^-(p)\cap \fr J^+(q)$ is equal to the closed geodesical segment between
the points $p$, $q$.
\end{enumerate}
\end{lemma}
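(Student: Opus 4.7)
The plan is to orient the null geodesic $\gamma$ connecting $p$ and $q$ to be future-directed from $q$ to $p$ (relabeling if necessary, since otherwise $J^-(p)\cap J^+(q)$ is already empty and the claim is vacuous), and to combine two key ingredients: the uniqueness of the geodesic connecting two points inside the convex normal neighborhood $U$ (so the unique one between $q$ and $p$ is $\gamma$), and Lemma~\ref{global curves}(i), which upgrades any future-directed non-spacelike curve between two points that is not a null geodesic to a timelike one. Global hyperbolicity of $M$ delivers closedness of $J^\pm(r)$ and the identity $\Int J^\pm(r)=I^\pm(r)$, and strong causality (implied by global hyperbolicity) identifies, on the convex normal neighborhood $U$, the causal and chronological futures and pasts computed in $M$ with those computed intrinsically in $U$. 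Combined with Lemma~\ref{local curves}(ii), this places every $r$ reachable from $q$ by a null geodesic in $U$ on the boundary of $\Int J^+(q)$ rather than in its interior.

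For (i), I assume for contradiction that $x\in U\cap\Int J^-(p)\cap J^+(q)$. Then $x$ is connected to $p$ by a future-directed timelike curve and $q$ to $x$ by a future-directed non-spacelike curve; their concatenation is a future-directed non-spacelike curve from $q$ to $p$ that, having a timelike segment, is not a null geodesic. Lemma~\ref{global curves}(i) then supplies a future-directed timelike curve in $M$ from $q$ to $p$, so $p\in\Int J^+(q)$. But $p$ lies on $\gamma$, and by the preceding remark $p\notin\Int J^+(q)$, a contradiction. Part (ii) follows by the time-reversed dual argument with $p$ and $q$ swapped.

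For (iii) I handle the two inclusions separately. If $x$ lies on $\gamma$, then $x\in J^+(q)\cap J^-(p)\subseteq U$ trivially; parts (i) and (ii), applied to $x$ in place of a hypothetical element, forbid $x$ from lying in $\Int J^-(p)$ or $\Int J^+(q)$, so $x\in\fr J^-(p)\cap\fr J^+(q)$. Conversely, let $x\in U\cap\fr J^-(p)\cap\fr J^+(q)$. Closedness of $J^\pm$ puts $x$ in $J^+(q)\cap J^-(p)$. Using strong causality together with convexity of $U$, I realise the connecting causal curves from $q$ to $x$ and from $x$ to $p$ as unique geodesic segments inside $U$; neither can be timelike, or $x$ would fall in $\Int J^+(q)$ or $\Int J^-(p)$, contradicting its being on the frontier. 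Hence both segments are null. If $x$ were not on $\gamma$, their concatenation would have a corner at $x$ and therefore not be a null geodesic, so Lemma~\ref{global curves}(i) would again give $p\in\Int J^+(q)$ for a point $p$ lying on $\gamma$, contradicting the analysis in (i). Therefore $x\in\gamma$.

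The main obstacle is the technical justification, on the convex normal neighborhood $U$, that causal and chronological notions computed in $M$ coincide with those computed intrinsically in $U$ — equivalently, that no future-directed timelike curve in $M$ from $q$ to a point on $\gamma$ can leak out of $U$ and return. This rests on strong causality, inherited from global hyperbolicity, and may require $U$ to be chosen small enough, a smallness implicit in the paper's working use of normal convex coordinate neighborhoods (and in fact already invoked in Lemma~\ref{convexintersection} and subsequent arguments). Once this identification is in place, all three parts reduce to uniqueness of geodesics in $U$ combined with one application of Lemma~\ref{global curves}(i) in each direction.
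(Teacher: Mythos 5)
Your argument is correct and follows essentially the same route as the paper's: for (i) and (ii) you concatenate the two causal curves, invoke Lemma~\ref{global curves} to upgrade the resulting non-null-geodesic causal curve to a timelike one, and derive a contradiction from the uniqueness of geodesics in the convex normal neighborhood; for (iii) you use (i)--(ii) for one inclusion and the broken-null-geodesic argument for the other, exactly as in the paper. The only difference is cosmetic — the paper phrases the final contradiction in (i) as the coexistence of two distinct geodesics between $p$ and $q$, and relies on part (ii) of Lemma~\ref{global curves} to keep the timelike variation inside $U$, which is the same intrinsic-versus-ambient issue you flag explicitly at the end.
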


\begin{proof} (i) Suppose conversely that there is some $z\in U\cap\Int J^-(p)\cap J^+(q)$. Then there exists a future-oriented, non-spacelike path connecting $q$
with $z$ and a future-oriented, timelike path  connecting $z$ with $p$, so there exists a
non-spacelike path, say $\zeta$, from $q$ to $p$, which lies totally in $U$. Since $\zeta$ is not a
null geodesic, $p\notin \fr J^+(q)$, so $p\in \Int J^+(q)$. Hence, there exists a timelike
geodesics joining the points $p$, $q$. Then $p$, $q$ are connected by two different geodesics,
which is not possible in the normal convex neighborhood $U$.

(ii) is analogous to (i).

(iii) Since $U$ is geodesically convex, it is clear that the geodesical segment $\zeta$, connecting
$p$, $q$ is totally contained in $U$ and every its point is a boundary point of $J^-(p)$, $J^+(q)$.
Conversely, let $x\in U\cap\fr J^-(p)\cap \fr J^+(q)$. Then, there is a future-oriented null
geodesic emanating from $q$ to $x$ and a future-oriented null geodesic connecting $x$ with $p$.
Suppose that $x$ does not lie on $\zeta$. Then there exists a future-oriented, non-spacelike  path,
say $\gamma$, from $q$ to $p$, totally lying in $U$, which is not a null geodesic. By
Lemma~\ref{global curves}, $\gamma$ can be varied to a timelike geodesic, connecting the points
$p$, $q$ in $U$. Then $p$, $q$ are connected by two different geodesics, which certainly is not
possible.

\end{proof}

\begin{lemma}\label{maxfinite}  Let $A=F\Diamond G\in \mathcal D$ be a nonempty
multi-diamond, $K=K_{F,G}$ be the compact set introduced in Definition \ref{multidiamond}. Then
there exists a finite set $M_A$ containing all maximal elements of
$$O_A=\{x|\, x\in K, A\subseteq J^+(x)\}.$$
\end{lemma}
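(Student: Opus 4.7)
The plan is to identify the maximal elements of $O_A$ geometrically, as points at which a ``blocking'' null geodesic to some element of $A$ already exists, and then to bound their count using the four-dimensional null-cone structure in the convex normal neighborhood $U_{F,G}$.

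First I observe that $O_A = K \cap \bigcap_{a \in A} J^-(a)$ is compact as a closed subset of the compact $K$. For $x \in O_A \cap \Int K$, I claim $x$ is maximal with respect to $\leqslant$ if and only if $\fr J^+(x) \cap A \ne \varnothing$. If $A \subseteq \Int J^+(x)$, then by compactness of $A$ and continuity of the exponential map on $U_{F,G}$ a small future-timelike displacement $y$ of $x$ still satisfies $A \subseteq J^+(y)$, contradicting maximality. Conversely, if $a \in \fr J^+(x) \cap A$ so that $x$ and $a$ are joined by a null geodesic in $U_{F,G}$ (Lemma~\ref{cones1}), then a tangent calculation in normal coordinates---essentially the same as the one appearing in the proof of Lemma~\ref{maximize}---shows that for any future-timelike $v$ the vector $\exp_x^{-1}(a) - \epsilon v$ becomes spacelike, so $a \notin J^+(y)$ for $y = \exp_x(\epsilon v)$, forcing $x$ to be maximal.

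Second, I reduce the blocking data to the extremal structure of $A = F \Diamond G$. Any $a \in A$ lying in $\bigcap_{p \in F} \Int J^+(p)$ admits a short past-timelike perturbation still inside $A$, so the minimal elements of $A$ lie on $\bigcup_{p \in F} \fr J^+(p)$; since $\bigcap_{a \in A} J^-(a) = \bigcap_{a \in \min A} J^-(a)$, the witnessing $a$ for a maximal $x$ may be taken in $\min A$. Hence $x$ sits on the past null cone of a point on the future null cone of some $p \in F$; by the uniqueness of geodesics in $U_{F,G}$, this couples $x$ to a specific broken null configuration emanating from $F$. I would then associate to each maximal $x$ a combinatorial type indicating which facets of $A$---that is, which $p \in F$ and $q \in G$---contribute to the blocking configuration, and count using a dimension argument: in four dimensions each null hypersurface $\fr J^\pm(\cdot)$ is three-dimensional, so a transverse intersection of four such hypersurfaces is zero-dimensional. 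Since the number of combinatorial types is bounded by a function of $|F|$ and $|G|$, this yields the required finite set $M_A$.

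The main obstacle is the transversality/genericity needed for the dimension count: in a general Lorentzian manifold one may encounter non-transverse intersections of null cones that could in principle yield a continuum of candidate maximal points. Overcoming this will likely require the regular-closedness condition~(iv) of Definition~\ref{multidiamond} together with Lemma~\ref{convexintersection}(iv), applied to each blocking element, to rule out degenerate coincidences. Should a direct transversality argument fail, I would instead argue by contradiction: extract a convergent sequence of maximal elements $x_n \to x^* \in O_A$ and corresponding blocking points $a_n \to a^* \in A$, use continuity of null geodesics in $U_{F,G}$ to pass to a limiting blocking configuration at $x^*$, and derive a contradiction with the incomparability of the $x_n$'s enforced by maximality.
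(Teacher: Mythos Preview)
Your opening step is right and matches the paper: for a maximal $m\in O_A$ one cannot have $A\subseteq\Int J^+(m)$, and in a convex normal neighborhood the contact must occur on $\fr A$, not in $\Int A$. The paper proves this exactly by the compactness/cover argument you sketch.

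The gap is in the second step. Your dimension count is never made concrete, and as you yourself flag, transversality is not available in general. Saying that a maximal $m$ lies on the past null cone of some $a\in\min A$, which in turn lies on $\fr J^+(p)$ for some $p\in F$, still leaves a continuum of candidates: the ``blocking point'' $a$ can slide along the face, and the past null cone of a moving $a$ sweeps out a three-parameter family of points. Assigning a combinatorial type in $F\times G$ does not by itself cut this down to finitely many $m$'s. Your fallback---extract a convergent sequence of maximal $m_n$ and derive a contradiction from incomparability---does not work either: distinct maximal elements of a poset need not be comparable, so accumulation of maximal elements is not immediately contradictory.

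The paper's route is different in kind and avoids transversality entirely. It shows that every maximal $m$ is literally an element of $F$ (after first normalizing $F$ via Lemma~\ref{maximize} so that each $p\in F$ is already maximal among points giving the same $F\Diamond G$). Having established $\fr J^+(m)\cap C_i\ne\varnothing$ for some boundary piece $C_i\subseteq\fr J^\pm(q)$, $q\in F\cup G$, it runs a case analysis on $\dim(C_i\cap\fr J^+(m))$. The dimension-$3$ case forces the cones to coincide, so $m=q$. In lower dimensions the paper uses Synge's world function: the gradients $\partial_i\sigma(\cdot,z)$ at a contact point are (rescaled) tangents to the null geodesics to $z\in\{m,q,r\}$, and orthogonality to the tangent space of the intersection forces linear dependence among these null vectors. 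A short Lorentzian algebra (the Cauchy--Schwarz step with $g_{ik}m^iq^k=0$) then shows that $m$ lies on the same null geodesic as some $r\in F$, and the maximality normalization of $F$ gives $m=r\in F$. Thus $M_A\subseteq F$, which is finite. This is the missing mechanism: not a generic dimension count, but the rigidity of null geodesics in a convex normal neighborhood combined with the algebra of null vectors.
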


\begin{proof} Suppose $A$ having the form $A=F \Diamond G=\bigcap_{p\in F} J^+(p)\cap \bigcap_{q\in G}
J^-(q)$, where $F$, $G$ are proper finite subsets of $M$. By Lemma~\ref{maximize}, without loss of
generality we may assume, that the set $F$ contains only maximal elements among those, giving the
same value for $A=F \Diamond G$.

Let $m$ be a maximal element of $O_A$. The cone $J^+(m)$ contains $A$, which is a closed compact
set under the assumptions stated in the lemma. Then also $\fr A\subseteq J^+(m)$. Suppose for a
moment, that $\fr A\subseteq A\subseteq \Int J^+(m)$. Then for every $x\in A$ there exists a
timelike future-oriented curve $\zeta_x:[0,1]\map M$ with $\zeta_x(0)=m$, $\zeta_x(1)=x$. Denote
$V_x=\Int J^+(\zeta_x({1\over 2}))$. Then $x\in V_x$, so $\Omega=\{V_x|\, x\in A$ is an open cover
of the compact set $A$. Hence, $\Omega$ has a finite subcover, say $\{V_{x_1}, V_{x_2},\dots
V_{x_k}\}$. The set $\bigcap_{i=1}^k \Int J^-(\zeta_{x_i}({1\over 2}))$ is an open neighborhood of
$m$, so by Lemma~\ref{local curves}, there exists an open set $U\subseteq M$ such that $m\in
U\subseteq \bigcap_{i=1}^k \Int J^-(\zeta_{x_i}({1\over 2}))$ and some $r\in U$ which can be
reached from $m$ by a timelike, future-oriented curve. Let $x\in A$. There exists some
$j\in\{1,2,\dots,k\}$ such that  $x\in V_{x_j}=J^+(\zeta_{x_j})$ and since also $r\in
J^-(\zeta_{x_j})$, there exists a timelike future-oriented curve which joins $r$ with $x$. It means
that $A\subseteq J^+(r)$, which contradicts to the maximality of $m$. Hence, $\fr J^+(m)$ meets
$\fr A$. On the other hand, $\fr J^+(m)$ cannot meet the interior of $A$ since the opposite leads
to a nonempty intersection of $\Int A$ with the complement of $J^+(m)$, which then contradicts to
$A\subseteq J^+(m)$. The boundary of $A$ can be decomposed into the union $\fr A=C_1\cup
C_2\cup\dots\cup C_n$, where each $C_i$ is a closed compact (and smooth) piece of the boundary
$E_i$ of $J^+(q)$ or $J^-(q)$ for some $q\in F\cup G$.
Consider the set $C_i\cap \fr J^+(m)$ for some $i=1,2,\dots, n$, for which $C_i\cap\fr J^+(m)$ is
nonempty. It can contain only finitely many points or, otherwise, it constitutes a submanifold of
$M$ of the dimension equal to $1$, $2$ or $3$. We will distinguish several possibilities, and in
each of them we will show that $m\in F\cup G$.

\medskip

(i) At first, consider the most simple case, that $\dim (C_i\cap\fr J^+(m))=3$. Since both $\fr
J^+(m)$ and $C_i$ are submanifolds of $M$ of dimension $3$, on some non-empty open subset of $M$
their corresponding equations must coincide, so the cones $J^+(m)$, $J^\pm(q)$ are the same. But
then $m=q\in F\cup G$.

\medskip

Now, suppose that $\dim (C_i\cap \fr J^+(m))<3$. Then the submanifolds $\fr J^+(m)$ and $C_i$ can
either touch or cut one another.

\medskip

(ii) The first case means that $\dim (C_i\cap\fr J^+(m))=1$ and the corresponding three-dimensional
tangent vector spaces $T_p C_i$, $T_p \fr J^+(m)$ coincide at any point $p\in C_i\cap\fr J^+(m)$.
Take some fixed $p\in C_i\cap\fr J^+(m)$, $p\notin F\cup G\cup\{m\}$, and some coordinate system
$\varphi=(x^1, x^2, x^3, x^4)$, where we also denote $x^4=t$, defined on some open coordinate
neighborhood $U_p$ of $p$, such that $\varphi(p)=(x^1_p,x^2_p,x^3_p,t_p)$.

Recall that by $J^\pm(q)$, where $q\in F\cup G$ is some proper element, we previously denoted the
cone having the boundary $E_i=\fr J^\pm (q)$ containing $C_i$. Let $\sigma:M\times M:\map\R$ be the
Synge's world function (see \cite{Po}), defined by
\begin{equation}\label{Synge}
\sigma(x, x^\prime)={1\over 2}(\lambda_1-\lambda
_0)\integral_{\lambda_0}^{\lambda_1}(g_{ij}\circ\beta)(\lambda)\, {\partial (x^i\circ\beta)\over
\partial \lambda} {\partial (x^j\circ\beta)\over \partial \lambda}\  \d\lambda,
\end{equation}
where the integral is evaluated on the geodesic $\beta$ with the affine parameter $\lambda$,
linking $x$, $x^\prime$ such that $\beta(\lambda_0)=x^\prime$ and $\beta(\lambda_1)=x$. It is not
difficult to show that Synge's world function vanishes on a geodesic if and only if it is a null
geodesic, \cite{Po}. Then the equations of $C_i$, $\fr J^+(m)$ on $U_p$ are
\begin{equation}\label{Syngechi}
\sigma(x,q)=0, \quad \sigma(x,m)=0.
\end{equation}
The fact that $T_p C_i=T_p \fr J^+(m)$ means that the normal vectors to $T_p C_i$, $T_p \fr J^+(m)$
coincide, so
\begin{equation}\label{geo1}
{\partial\over\partial x^i}\sigma(\varphi^{-1}( (x^1(p),\dots,x^4(p)),q)={\partial\over\partial
x^i}\sigma(\varphi^{-1}( (x^1(p),\dots,x^4(p)),m).
\end{equation}
However, the quantities ${\partial(\sigma\circ\varphi^{-1})\over\partial x^i }$, apart from a
linear factor, express also the covariant components of the tangent vector of the geodesics
emanating from $z\in\{q,m\}$ to $p$. Hence, \ref{geo1} means that $p$, $q$, $m$ lie on the same
geodesic, say $\zeta$. From $p\in C_i$ it follows that $p$ is connected with $q$ by a null
geodesic, so by the standard theory of differential equations  it coincides with $\zeta$, which
hence is null. Therefore, $q\leqslant m$ or $m\leqslant q$.

(ii$_1$) Suppose that $E_i=\fr J^-(q)$. Since $\varnothing\ne A\subseteq J^-(q)\cap J^+(m)$, it
follows $m\leqslant q$. By the construction of $A$, there exists a normal convex neighborhood $U$,
containing $q, m$ and $A$. Suppose that there exist some $x\in A\cap \Int J^-(q)$. Then $U\cap\Int
J^-(q)$ is an open neighborhood of $x$. By (i) of Lemma \ref{cones1} it is not possible that $x\in
J^+(m)$, which is a contradiction with $A\subseteq J^+(m)$. Hence, $A\cap \Int J^-(q)=\varnothing$,
which means that $A\subseteq \fr J^-(q)$. By a similar way, from (ii) of Lemma~\ref{cones1} it
follows, that $A\subseteq \fr J^+(m)$, so $A\subseteq \fr J^-(q)\cap \fr J^+(m)$. By (iii) of
Lemma~\ref{cones1}, $A$ is a subset of the geodesical segment, connecting the points $q$, $m$.
Since $A$ is closed, from maximality of $m$ it follows $m\in A$, which implies that $m$ is an upper
bound of $F$. Then $J^+(m)\subseteq \bigcap_{a\in F} J^+(a)$, so
$$J^+(m)\cap\bigcap_{b\in G}J^-(b)\subseteq A=\bigcap_{a\in F} J^+(a)\cap \bigcap_{b\in G} J^-(b)\subseteq J^+(m)\cap\bigcap_{b\in G}J^-(b),$$
where the last inclusion follows from the fact that $A\subseteq J^+(m)$, and clearly $A\subseteq
\bigcap_{b\in G}J^-(b)$. Hence, $A=J^+(m)\cap\bigcap_{b\in G}J^-(b)$. By the assumption stated in
the first paragraph of this proof, $F=\left\{m\right\}$.

(ii$_2$) Now, suppose that $C_j\subseteq E_j=\fr J^+(q)$. Consequently, it is not possible that
$q\leqslant m$ and $q\ne m$ since $q$ would not be a maximal element among those giving the same
value for the set $A=F\Diamond G$. Indeed, then $A\subseteq J^+(m)\subseteq J^+(q)$, which implies
that $A=J^+(m)\cap A\subseteq J^+(m)\cap A_q\subseteq J^+(q)\cap A_q=A$, where we denoted
$A_q=(F\smallsetminus\{q\})\Diamond G$ as in Lemma~\ref{maximize}. Then $q$ could be replaced by
the greater element $m$, which is a contradiction. Hence, it holds $m\leqslant q$. Then $A\subseteq
J^+(q)\subseteq J^+(m)$ and from the assumption of maximality of $m$ it follows that $m=q\in F$.

\medskip


\bigskip

(iii) The second case means that the submanifolds $\fr J^+(m)$ and $C_i$ cut each other and $\dim
(C_i\cap\fr J^+(m))=2$. Let $p\in C_i\cap\fr J^+(m)$. The set $C_i\cap\fr J^+(m)$ is infinite, so
$p$ can be chosen in such a way that $p\notin F\cup G\cup\{m\}$.  We again select a coordinate
system $\varphi=(x^1, x^2, x^3, x^4)$, where we also denote $x^4=t$, defined on some open
coordinate neighborhood $U_p$ of $p$, such that $\varphi(p)=(x^1_p,x^2_p,x^3_p,t_p)$. There exists
a smooth function of two variables $\chi(u,v)$ parameterizing $C_i\cap\fr J^+(m)$ on some
neighborhood of $p$ and restricting the domain of $\varphi$ if necessary we may assume that this
neighborhood is $U_p$. Further, without loss of generality we may assume that $\chi(u,v)$ is
defined on an open set of the form $I\times I$, where $I$ is an open interval containing $0$, and
$\chi(0,0)=p$.

Since $\fr J^+(m)$ does not meet $\Int A$, the non-empty intersection $C_i\cap\fr J^+(m)$ must be
separated from the open set $U_p\cap\,\Int A\subseteq \Int A$ by another boundary piece $C_j$ on
some open neighborhood $V_p$ of $p$. Restricting the domains of $\varphi$ and $\chi$ if necessary
we may ensure that for simplicity $V_p=U_p$ and $\chi(I\times I)\subseteq C_i\cap C_j \cap\fr
J^+(m)$. Recall that in our previous notation, $C_i$, $C_i$ are a pieces of the boundaries $E_i$,
$E_j$ of some cones $J^\pm(q)$, $J^\pm(r)$, respectively, where $q,r\in F\cup G$. Hence, the points
$m,q,r$ lie on null geodesics emanating from the same point $\chi(u,v)$.

We will analyze the situation in the tangent vector space $T_p M$ of $M$. Let $\sigma:M\times
M:\map\R$ be the Synge's world function.  Then
\begin{equation}\label{Syngechi}
\sigma(\chi(u,v),z)=0
\end{equation}
for every fixed $z\in\{m,q,r\}$ and varying $(u,v)\in I\times I$. After differentiating
\ref{Syngechi} with respect to $u$ and $v$, using the chain rule we get
\begin{equation}\label{orto1}
{\partial(\sigma\circ\varphi^{-1})\over\partial x^i }\cdot{\partial (x^i\circ\chi)\over \partial
u}=0
\end{equation}
and
\begin{equation}\label{orto2}
{\partial(\sigma\circ\varphi^{-1})\over\partial x^i }\cdot{\partial (x^i\circ\chi)\over \partial
v}=0,
\end{equation}
where all derivatives are calculated at $p$ for each $z\in\{m,q,r\}$. The quantities
${\partial(\sigma\circ\varphi^{-1})\over\partial x^i}$, apart from a linear factor, are the
components of the rescaled tangent vector of the null geodesics emanating from $p$ to $z$ with
respect to the covariant basis in $T_p M$. On the other hand, ${\partial (x^i\circ\chi)\over
\partial u}$ and ${\partial (x^i\circ\chi)\over \partial u}$ are linearly independent generators of
the two-dimensional tangent vector space $T_p(C_i\cap C_j \cap\fr J^+(m))$.

As it follows from \ref{orto1} and \ref{orto2}, considered only numerically,  both sets of
arithmetic vectors $\left\{{\partial(\sigma\circ\varphi^{-1})\over\partial x^i}\right\}$,
$\left\{{\partial (x^i\circ\chi)\over
\partial u},{\partial (x^i\circ\chi)\over
\partial v}\right\}$ are mutually orthogonal with respect to the standard dot product. Since $g_{ik}$
works as the transition matrix between covariant and usual, contravariant  components of a vector
in $T_pM$,  the tangent vectors, say $m^i$, $q^i$, $r^i$, corresponding to the geodesics emanating
from $p$ to $m$, $q$, $r$, respectively, are linearly dependent. On the other hand, linearly
dependent $m^i$, $q^i$ would imply the points $p$, $q$, $m$ lying on the same geodesic, whose
segment would be contained in $\chi(I\times I)\subseteq C_i \cap\fr J^+(m)$. Then $m^i$, $q^i$
would be linear combinations of ${\partial (x^i\circ\chi)\over
\partial u}$ and ${\partial (x^i\circ\chi)\over \partial u}$, which then would imly a contradiction with
\ref{orto1} and \ref{orto2}. Beside this, it is not possible that $m=q$ since it would lead to
$\dim (C_i\cap \fr J^+(m))=3$.

Hence, $m^i$, $q^i$ are linearly independent and so there exist $a,b\in\R$, $(a,b)\ne (0,0)$, such
that $r^i=a m^i+ b q^i$. Further, from the fact that $m^i$, $q^i$, $r^i$ are (non-zero) null
vectors, it follows
\begin{multline}
0=g_{ik} r^i r^k=g_{ik} (a m^i+ b q^i)(a m^k+ b q^k)= \\ g_{ik} (a^2 m^i m^k + ab m^i q^k + ab m^k
q^i + b^2 q^i q^k)= \\ a^2 g_{ik}m^i m^k + 2 ab g_{ik} m^i q^k + b^2 g_{ik} q^i q^k = 2 ab g_{ik}
m^i q^k,
\end{multline}
which implies
\begin{equation}\label{metric1}
ab g_{ik} m^i q^k =0
\end{equation}

Now, let us choose the coordinate system $\varphi=(x^1, x^2, x^3, x^4)$ in such a way that the
non-diagonal elements of the metric $g_{ik}$ vanish  and $g_{11}=g_{22}=g_{33}=1$, $g_{44}=-1$ at
$p$. Then, for $a\ne 0\ne b$, the equation \ref{metric1} will have the form

\begin{equation}\label{vectors1}
\sum_{i=1}^3 m^i q^i=m^4 q^4
\end{equation}

and since $m^i$, $q^i$ are null vectors, also it holds

\begin{equation}\label{vectors2}
\sum_{i=1}^3 (m^i)^2=(m^4)^2, \quad  \sum_{i=1}^3 (q^i)^2=(q^4)^2.
\end{equation}

Then
\begin{equation}
\left(\sum_{i=1}^3 m^i q^i\right)^2=\left[\sum_{i=1}^3 (m^i)^2\right] \cdot \left[\sum_{i=1}^3
(q^i)^2\right],
\end{equation}
but this is possible only if there exists some $c\in\R$, $c\ne 0$, such that $q^i=c\, m^i$ for
$i=1,2,3$ (as it follows from the properties of the well-known Cauchy-Bunyakovsky-Schwarz
inequality). Further, \ref{vectors1} and \ref{vectors2}  imply
\begin{equation}
(q^4)^2=\sum_{i=1}^3 (q^i)^2=\sum_{i=1}^3 c\, m^i q^i= c\, \sum_{i=1}^3  m^i  q^i=c \, m^4\, q^4
\end{equation}
and since $q^4\ne 0$, it holds $q^4=c\, m^4$. But it means that the vectors $m^i$, $q^i$ are
collinear, which we already excluded. Hence, it holds $a=0$, $b\ne 0$ or $a\ne 0$, $b=0$.

The first case means that $r^i=b q^i$, so the points $q$, $r$ lie on the same null geodesic as $p$,
but $q=r$ is not possible, since by our previous considerations $C_i$, $C_j$ belong to the
boundaries of different cones $J^\pm(q)$, $J^\pm(r)$. Take any element $p^\prime\in C_i\cap C_j$.
Suppose, for certainty, that $r\leqslant q$, so the null geodesic segment emanating from $r$ and
terminating at $q$ is future-oriented. Then there exists a non-timelike path from $r$ to
$p^\prime$, say $\eta$, consisting of the null geodesic segment from $r$ to $q$ and the null
geodesic segment emanating from $q$ and terminating at $p^\prime$. Suppose that $\eta$ is not a
null geodesic path. Then, by Lemma \ref{global curves}, there exist a timelike curve $\zeta$
connecting $r$ with $p^\prime$. Since $p^\prime\in C_j$, the point $p^\prime$ also lies on a null
geodesic, say $\gamma$, emanating from $r$. But this is not possible on a normal convex
neighborhood $U=U_{F,G}$, containing $F\cup G$ and $F\Diamond G$ by Definition~\ref{multidiamond}.
Indeed, if so, then the timelike curve $\zeta$ could be replaced by the longest timelike curve,
connecting $p^\prime$ with $r$, which would be also a geodesic. Then there would exist two
different geodesics between $p^\prime$ and $r$, however, by the standard theory of differential
equations, it is not possible on $U$. Hence, $\eta=\gamma$ is the only geodesic path connecting
$p^\prime$ and $r$. But, that path contains totally the set $C_i\cap C_j$, so $\dim(C_i\cap
C_j)=1$, which contradicts to the assumption that $\dim(C_i\cap\fr J^+(m))=\dim \chi(I\times I)=2$.
The case of $q\leqslant r$ could be solved analogically, only by exchanging the roles of $q$, $r$.
Finally, it remains only the possibility that $a\ne 0$ and $b=0$, which means that the points $m$,
$r$ are on the same null geodesic as $p$, which implies that $r\leqslant m$ or $m\leqslant r$.

(iii$_1$) Using Lemma \ref{cones1},  we can similarly as in (ii$_1$)show that $E_j=\fr J^-(r)$,
$r\in G$. Then $A$ is contained in the geodesical segment between the points $m$, $r$. But this
contradicts to the main assumption of (iii), that $\dim (C_i\cap\fr J^+(m))=2$.

(iii$_2$) Hence, $C_j\subseteq E_j=\fr J^+(r)$. It is not possible that $r\leqslant m$ and $r\ne m$
since $r$ would not be a maximal element among those giving the same value for the set $A=F\Diamond
G$. Indeed, then $A\subseteq J^+(m)\subseteq J^+(r)$, which implies that $A=J^+(m)\cap A\subseteq
J^+(m)\cap A_r\subseteq J^+(r)\cap A_r=A$, where we denoted $A_r=(F\smallsetminus\{r\})\Diamond G$
(as in Lemma~\ref{maximize}). Then $r$ could be replaced by the greater element $m$, which is a
contradiction. Hence, it holds $m\leqslant r$. Then $A\subseteq J^+(r)\subseteq J^+(m)$ and from
the assumption of maximality of $m$ it follows that $m=r\in F$.
\end{proof}

\bigskip

Now, let $A,B\in\P$.   We put $A\prec B$ if $A\ne B$ and for every $a\in A$, $b\in B$, $a\leqslant
b$.

\begin{theorem}
$(\P, \subseteq,\prec)$ is a causal site.
\end{theorem}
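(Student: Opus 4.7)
I will verify the axioms in turn: $(\P,\subseteq)$ is a poset with least element $\bot=\emptyset$ and binary suprema given by set-theoretic union, which is immediate from the construction of $\P$ as the closure of the family $\mathcal D$ of multi-diamonds under finite unions. For $\prec$ on $\P\setminus\{\emptyset\}$, anti-reflexivity is built into the definition through the clause $A\neq B$, and transitivity follows by chaining $a\leqslant b\leqslant c$ through any point $b$ of the nonempty middle set, together with the antisymmetry of $\leqslant$ on the globally hyperbolic $M$ to rule out $A=C$. The key combinatorial fact used repeatedly is that every nonempty element of $\P$ contains a multi-diamond with nonempty interior and is therefore infinite: this makes the ``$\neq$'' clauses in axioms (i)--(iii) automatic, since any collapse such as $C\subseteq A$ with $A\prec C$ would, via antisymmetry applied to pairs inside $C\subseteq A$, force $C$ to be a single point.

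\textbf{Cutting axiom.} Set
\[
B_A \;=\; B\,\cap\,\bigcap_{a\in A}J^-(a),
\]
which is manifestly the largest subset of $B$ lying in the causal past of every element of $A$, so the maximality clause (b) of (iv) is automatic. The only thing to prove is $B_A\in\P$. Decompose $A=\bigcup_i A_i$ into multi-diamonds $A_i=F_i\Diamond G_i$; since $\bigcap_{a\in A}J^-(a)=\bigcap_i\bigcap_{a\in A_i}J^-(a)$, it is enough to treat one $A_i$ at a time. Fixing $A=F\Diamond G$, Lemma~\ref{maxfinite} provides a finite set $\{m_1,\dots,m_k\}$ containing all maximal elements of $O_A=\{x\in K_{F,G}:A\subseteq J^+(x)\}$, and the Zorn-plus-cluster-point argument of Lemma~\ref{maximize}, combined with the observation that any causal curve from a global lower bound of $A$ to a point of $A\subseteq\Int K_{F,G}$ must enter $K_{F,G}$, yields
\[
\bigcap_{a\in A}J^-(a) \;=\; \bigcup_{i=1}^k J^-(m_i).
\]
Writing $B=\bigcup_j F_j\Diamond G_j$, this gives
\[
B_A \;=\; \bigcup_{i,j}\bigl(F_j\Diamond G_j\,\cap\, J^-(m_i)\bigr) \;=\; \bigcup_{i,j}F_j\Diamond\bigl(G_j\cup\{m_i\}\bigr),
\]
a finite union of candidate multi-diamonds.

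\textbf{Main obstacle.} The technical heart of the argument is to show that each piece $F_j\Diamond(G_j\cup\{m_i\})$ satisfies all four conditions of Definition~\ref{multidiamond}. Finiteness, compactness, and the regular-closed condition (4) are inherited from $B_j=F_j\Diamond G_j$, but condition (3) demands that $F_j\cup G_j\cup\{m_i\}$ sit, together with their multi-diamond intersection, inside a single normal convex coordinate neighborhood, and in general $m_i$ may lie outside $U_{F_j,G_j}$. The resolution is to first refine each $B_j$ into finitely many smaller multi-diamonds by covering $B_j$ with the local base from Lemma~\ref{convexintersection} and extracting a finite subcover whose members are small enough that each can share a normal convex coordinate neighborhood with the relevant $m_i$, and then apply the construction above piecewise. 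Once this packaging is in place, the remaining assertions of axiom (iv) read off directly from the set-theoretic identity defining $B_A$.
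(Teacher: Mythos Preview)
Your overall strategy coincides with the paper's: define the cutting as $B_A=B\cap\bigcup_{m\in M_A}J^-(m)$ for a single multi-diamond $A$, invoke Lemmas~\ref{maximize} and~\ref{maxfinite} to make $M_A$ finite, and then pass to general $A$ by intersecting over its multi-diamond pieces. The paper simply asserts ``$M_A$ is finite, so $B_A\in\P$'' and moves on; you are more scrupulous and isolate exactly the point the paper glosses over, namely that writing $B_j\cap J^-(m_i)=F_j\Diamond(G_j\cup\{m_i\})$ does not obviously yield a multi-diamond because condition (iii) of Definition~\ref{multidiamond} demands that $F_j\cup G_j\cup\{m_i\}$ lie inside a single compact $K$ contained in one normal convex coordinate neighborhood. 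That diagnosis is correct and is a genuine service.

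However, your proposed remedy does not work. Refining $B_j$ by the local base of Lemma~\ref{convexintersection} produces arbitrarily small multi-diamonds, but smallness of a piece $B_j'$ is irrelevant to whether $B_j'$ and $m_i$ can be enclosed in a common normal convex neighborhood: such neighborhoods are intrinsically local, while $m_i\in K_{F,G}$ lives near $A$ and may sit at arbitrary Lorentzian separation from $B$. For any piece $B_j'$ that straddles $\partial J^-(m_i)$ you still need $m_i$ in the defining data of the intersection, and no refinement of $B$ brings $m_i$ closer. So the packaging step, as stated, fails. A workable repair would have to replace the distant vertex $m_i$ by local data near $B_j'$ (for instance, finitely many nearby points whose past cones cut out the same trace on $B_j'$ as $J^-(m_i)$ does), but that is a separate geometric argument you have not supplied. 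Your observation that a global lower bound of $A$ can be pushed into $K_{F,G}$ along a causal curve is also more delicate than you indicate: the entry point into $K_{F,G}$ along a curve to one $a\in A$ need not lie below every other point of $A$, so membership in $O_A$ is not immediate.
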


\begin{proof}
First of all, we need to show that $\prec$ is a transitive binary relation on the set
$\P\smallsetminus \{\varnothing\}$ (the anti-reflexivity of $\prec$ follows directly from the
definition). Suppose that $A\prec B$ and $B\prec C$, where $A, B,C $ are non-empty elements of
$\P$. Let $a\in A$, $c\in C$. Since $B\ne\varnothing$, there is some $b\in B$. By definition,
$a\leqslant b\leqslant c$, and since $\leqslant$ is a transitive relation, we have $a\leqslant c$.
Suppose that $A=C$. Then $A\prec B$ and $B\prec A$. It means that $x\leqslant y$ and $y\leqslant x$
for all $x\in A$ and $y\in B$. Since $A\ne B$, it is possible to choose some concrete $x\ne y$.
Then it follows that there exist two future-oriented non-spacelike curves, where the first one
emanates from x and terminates at y, and the second curve vice-versa. Then $M$ admits of a closed
non-spacelike  curve, which is not possible. It means that $A\ne C$ and so the relation $\prec$ is
transitive.

The conditions (i)-(iii) of the definition of the causal site are only immediate consequences of
the properties of the set inclusion and the fact that $M$ does not admit of the closed
non-spacelike curves. We leave their verification  to the reader. It remains to check the last
axiom (iv).

\medskip

At first, suppose that $A$ consists of only a single multi-diamond. Denote
$$O_A=\{x|\, x\in D, A\subseteq J^+(x)\}.$$
Let $M_A$ be the set of all maximal elements of $O_A$ (with respect to the order $\leqslant $).
Choosing $p\notin F$ in Lemma~\ref{maximize} we can conclude that for every $x\in O_A$ there exists
$m\in M_A$ such that $x\leqslant m$. We put
$$A_\bot=\bigcup_{m\in M_A} J^-(m),$$ and for $B\in \P$, $B\ne A$ we denote
$$B_A=B\cap A_\bot.$$
By Lemma~\ref{maxfinite}, $M_A$ is finite, so $B_A\in \P$. Let $b\in B_A$, $a\in A$. By the
definition of $B_A$, there exists some $m\in M_A$ with $b\in J^-(m)$, so $b\leqslant m$. We also
have $a\in A\subseteq J^+(m)$, so $m\leqslant a$. Then $b\leqslant a$, which implies $B_A\prec A$.
Suppose that $C\prec A$, $C\subseteq B$ for some $C\in \P$. Let $c\in C$. If $a\in A$, then
$c\leqslant a$, which gives $a\in J^+(c)$. Therefore, $A\subseteq J^+(c)$. Then $c\in O_A$, so
there exists $m\in M_A$, such that $c\leqslant m$. Then $c\in J^-(m)\subseteq A_\bot$. Hence,
$C\subseteq A_\bot$, which together with $C\subseteq B$ gives the requested inclusion $C\subseteq
B_A$. Then $B_A$ is the cutting of $B$ by $A$.

\medskip

Now, consider the general case that $A=\bigcup_{i=1}^n A_i$, where $A_i\in\mathcal D$ are non-empty
multi-diamonds for $i\in\{1,2\dots, n\}$. For $B\in \P$ we put

$$B_A=\bigcap_{i=1}^nB_{A_i}.$$

\medskip

Since $B_{A_i}\subseteq B$, it is clear that $B_A\subseteq B$. Also we have $B_{A_i}\prec A_i$ for
every $i\in\{1,2\dots, n\}$. Take $b\in B_A$ and $a\in A$. It follows that $b\leqslant a$, since
there exists $i\in\{1,2\dots, n\}$ such that  $a\in A_i$, and $b\in B_A\subseteq B_{A_i}$,
$B_{A_i}\prec A_i$. Suppose that $B_A=A$. Then, for $i\in\{1,2\dots, n\}$, $A_i\subseteq
A=B_A\subseteq B_{A_i}$, which implies $A_i\subseteq B_{A_i}\prec A_i$. By the already verified
condition (i), it follows $A_i\prec A_i$. Since the multi-diamonds $A_i$ are non-empty, this is not
possible.

Now, consider $C\in \P$ such that $C\prec A$, $C\subseteq B$. Then, certainly, $C\prec A_i$ for
$i\in\{1,2\dots, n\}$ by (ii).  But then $C\subseteq B_{A_i}$ for $i\in\{1,2\dots, n\}$ and so
$C\subseteq \bigcap_{i=1}^n B_{A_i}=B_A$. Hence,  $B_A$ is the cutting of $B$ by $A$.
\end{proof}

Let $\pi$ be the family of all maximal centered subsets of $\P$.

\begin{theorem} The topological space $(X,\tau)$ corresponding to the framework $(\P^d,\pi^d)$ is homeomorphic to $M$ equipped with the co-compact topology.
\end{theorem}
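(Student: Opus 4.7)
The plan is to construct a natural bijection $\Phi\colon X\to M$ sending each maximal centered family $U\in\pi$ to the unique point of $M$ that lies in every member of $U$, and then to show that under $\Phi$ the canonical closed subbase $\pi^d=\{\pi(A)\mid A\in\P\}$ of $\tau$ is mapped exactly onto the closed base $\P$ for $\tau_M^G$ supplied by Lemma~\ref{cocompactbase}.

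First I would verify that for every $U\in\pi$ the intersection $\bigcap U\subseteq M$ consists of exactly one point, so that $\Phi$ is well defined. Every $A\in\P$ is a finite union of multi-diamonds and hence, by Definition~\ref{multidiamond}, a compact subset of the Hausdorff manifold $M$; in particular it is closed. Fix any $A_0\in U$. Since $\P$ is closed under finite intersections, centeredness of $U$ forces $A_0\cap A_1\cap\dots\cap A_k\ne\varnothing$ for every finite collection from $U$, so $\{A_0\cap A\mid A\in U\}$ is a family of closed subsets of the compact set $A_0$ with the finite intersection property, giving $\bigcap U\ne\varnothing$. Conversely, if distinct points $p\ne q$ both lay in $\bigcap U$, then Hausdorffness provides an open neighborhood $W$ of $p$ missing $q$, and Lemma~\ref{interiors} supplies some $C\in\P$ with $p\in\Int C\subseteq C\subseteq W$; then $\{C\}\cup U$ would still be centered (each finite intersection remains an element of $\P$ containing $p$, hence nonempty), contradicting maximality of $U$. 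Define $\Phi(U)$ to be the unique point of $\bigcap U$.

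Next I would show that $\Phi$ is bijective. For surjectivity, given $p\in M$ put $U_p=\{A\in\P\mid p\in A\}$; any finite intersection of its members lies in $\P$ and contains $p$, so $U_p$ is centered. If some $B\in\P\setminus U_p$ made $\{B\}\cup U_p$ centered, Hausdorffness together with Lemma~\ref{interiors} would yield $C\in U_p$ with $C\subseteq M\setminus B$, forcing $B\cap C=\varnothing$ and contradicting centeredness; hence $U_p\in\pi$, and plainly $\Phi(U_p)=p$. For injectivity, if $\Phi(U)=\Phi(V)=p$ then every finite subcollection of $U\cup V$ has its intersection in $\P$ and contains $p$, so $U\cup V$ is centered and maximality of both $U$ and $V$ forces $U=V$.

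Finally, for every $A\in\P$ one computes
\[
\Phi(\pi(A))=\{\Phi(U)\mid U\in\pi,\ A\in U\}=\{p\in M\mid A\in U_p\}=A,
\]
so $\Phi$ carries the closed subbase $\pi^d$ of $(X,\tau)$ bijectively onto $\P$, which is a closed base (a fortiori a closed subbase) for the co-compact topology $\tau_M^G$ by Lemma~\ref{cocompactbase}. Since a bijection that transports a subbase for the closed sets of the source onto a subbase for the closed sets of the target is automatically a homeomorphism, this completes the proof. The main technical hurdle is the uniqueness half of the single-point claim for $\bigcap U$: existence is immediate from compactness, but excluding a second point in $\bigcap U$ is precisely where Hausdorffness of $M$ and Lemma~\ref{interiors} must be invoked to fabricate a separating element of $\P$; the same construction also drives the maximality verifications above.
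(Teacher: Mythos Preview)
Your proof is correct and follows essentially the same route as the paper: both construct mutually inverse maps between $M$ and $\pi$ via $p\mapsto\{A\in\P\mid p\in A\}$ and $U\mapsto$ the unique point of $\bigcap U$, then verify that the closed subbase $\pi^d$ corresponds exactly to $\P$, which is a closed base for $\tau_M^G$ by Lemma~\ref{cocompactbase}. The only cosmetic difference is that the paper obtains $\bigcap U\ne\varnothing$ from compactness of $(M,\tau_M^G)$ via Lemma~\ref{cocompactbase}, whereas you use compactness of a fixed $A_0\in U$ in the manifold topology; both arguments are equally valid.
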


\begin{proof}
 As we already defined before, $X=\P^d=\pi$. We will show that any point $p\in M$ defines a maximal centered subset of $\P$, say $f(p)=\{C|\,
C\in \P, p\in C\}$. The family $f(p)$ obviously is centered, since $\P$ is closed under finite
intersections and $f(p)$ contains those elements of $\P$, whose  contain $p$. Let $Q$ be another
centered family such that $f(p)\subseteq Q\subseteq \P$. Suppose that there is some $F\in Q$, such
that $p\notin F$. The set $M\smallsetminus F$ is an open neighborhood of $p$ with respect to the
manifold topology $\tau_M$. By Lemma~\ref{interiors}, there exists $A\in \P$ with $x\in \Int
A\subseteq A\subseteq M\smallsetminus F$. Then $A\cap F=\varnothing$, which contradicts to the
assumption that $Q$ is centered. Thus, all elements of $Q$ contain $p$, which means that $Q=f(p)$.
Now it is clear that $f(p)$ is a maximal centered subfamily of $\P$.

Conversely, let  $Q\in\pi$. By Lemma~\ref{cocompactbase}, the elements of $Q$ are closed with
respect to the co-compact topology on $M$. Since $M$ is compact with respect to its co-compact
topology, $\bigcap_{F\in Q}F\ne\varnothing$. Suppose that there exist $\{x,y\}\subseteq
\bigcap_{F\in Q}F$, where $x\ne y$. Since $M$ is a T$_1$ space, $M\smallsetminus\{y\}$ is an open
neighborhood of $x$. By Lemma~\ref{interiors}, there is some $B\in \P$ with $x\in \Int B\subseteq
B\subseteq M\smallsetminus \{y\}$. The collection $Q\cup\{B\}\subseteq \P$ is centered and since
$y\notin\ B$, it is an extension of $Q$. But this contradicts to the assumption that  $Q$ is
maximal. Thus, the intersection $\bigcap_{F\in Q}F$ contains only one element, say $g(Q)$.
Consequently we have $g(f(p))=p$ and $f(g(Q))=Q$. Hence, the mappings $f:M\map X$ and  $g:X\map M$
are bijections, inverse to each other.

Furthermore, for $A\in \P$ we have $g^{-1}(A)=\{Q|\, Q\in X, g(Q)\in A\}=\{Q|\, Q\in\pi, Q\in
f(A)\}=\{Q  |\, Q\in\pi, A\in Q \}=\pi(A)$, which is a subbasic closed set in $(X,\tau)$. Then
$g:X\map M$ is continuous.

Now, take a set $\pi(B)$, where $B\in \P$, from the closed base $\pi^d$ of $\tau$. Then
$f^{-1}(\pi(B))=\{p|\, p\in M, f(p)\in\pi(B)\}=\{p|\, p\in M, B\in f(p)\}$. For every $p\in
f^{-1}(\pi(B))$, $f(p)$ is a maximal centered subfamily of $\P$, containing the set $B$. As we have
shown above, its intersection contains the only element $g(f(p))=p$. So $f^{-1}(\pi(B))=\{p|\, p\in
M, p\in B\}=B$. Since $B$ is a compact set with respect to the manifold topology $\tau_M$ on $M$,
it is closed with respect to the co-compact topology on $M$ and so the map $f: M\map X$ is
continuous. Hence, the spaces $(X,\tau)$ and $M$, equipped with the co-compact topology, are
homeomorphic.
\end{proof}

\bigskip

\section{Finite approximations}\label{approximations}

\medskip

It is well-known that Albert Einstein asked the question if the Universe is finite in the sense of
volume, mass and energy (see, for example, \cite{Ei}). Whatever it is the correct answer, all our
experience with the Universe is finite, since in a finite time one can do only a finite number of
observations, measurements or experiments. Hence, everything what we know about the Universe, is a
result of extrapolation of that our finite experience. Thus, it may have some sense to study which
framework or topological structures may arise from a process of forming and completing of a family
of finite frameworks, representing our growing, but still finite experience with the Universe.

\medskip

\begin{definition}
 Let $(X,\alpha)$ be a framework, $Y\subseteq X$. Denote $\beta=\{U\cap Y|\, U\in\alpha\}$. Then $(Y,\beta)$ is called the {\it
induced subframework} of $(X,\alpha)$.
\end{definition}

 We put $\pi_K=\{U\cap K|\, U\in\pi \}$ for every finite $K\subseteq \P$.
Ovbiously, if $K$, $L$ are finite subsets of $\P$ and $K\subseteq L$, $(K,\pi_K)$ is an induced
subframework of $(L,\pi_L)$ and both are induced subframeworks of the original framework
$(\P,\pi)$. The collection of finite frameworks $(\P_K, \pi_K)$ is directed by the set inclusion.
Let
$$\sigma=\{W|\, W\subseteq \P, W\cap K \in \pi_K\text{ for every finite } K\subseteq \P\}.$$
Obviously, $\pi\subseteq \sigma$. Moreover, after a restriction to a finite family $K\subseteq \P$
of places in the framework $(\P,\pi)$ there is no way how to distinguish between $(\P,\pi)$  and
$(\P,\sigma)$, since $$\{U\cap K|\, U\in\pi\}=\pi_K=\{W\cap K|\, W\in\sigma\}.$$ It could seem that
it would be a good idea to approximate $(\P,\pi)$ by $(\P,\sigma)$. However, as we will show later,
$(\P,\sigma)$ may contain too many abstract points (that is, elements of $\sigma$) in comparison to
$(\P,\pi)$.

Let $\lambda\subseteq\sigma$ be a chain linearly ordered by the set inclusion. We put
$L=\bigcup\lambda$. Clearly, $L\subseteq \P$. If $K\subseteq \P$ is finite, then also the set
$L\cap K=\bigcup_{W\in\lambda} (W\cap K)$ is finite. Denote $L\cap K=\{x_1, x_2, \dots, x_k\}$.
Then for every $i\in\{1,2,\dots,k\}$, there is some $W_i\in\lambda$ with $x_i\in W_i$. But
$\lambda$ is a chain, so there is the greatest element, say $W_m\in\sigma$, among  all $W_1,
W_2,\dots, W_k$ with respect to $\subseteq$. Then $L\cap K=W_m\cap K\in\pi_K$. Thus, $L\in\sigma$,
so $L$ is the upper bound of the chain $\lambda$. By Zorn's Lemma, every element $W\in\sigma$ is
contained in some maximal element $M\in\sigma$. Let $\mu\subseteq\sigma$ be the set of all maximal
elements of $\sigma$. The framework $(\P,\mu)$ could be another candidate for an approximation of
$(\P,\pi)$.


\begin{example}
 Let $\P=\N$ and let $\pi$ be the set of all finite subsets of $\P$. Then, respecting the previous denotations, $\sigma=2^\P$, and
$\mu=\{\P\}$.
\end{example}

\begin{proof}
It is obvious, that $\sigma\subseteq 2^\P$. Let $W\in 2^\P$. For every finite $K\subseteq \P$,
$W\cap K\in\pi_K=2^K$, so $W\in\sigma$. However, the set $\sigma=2^\P$ has only one maximal element
with respect to the set inclusion, $\P$.
\end{proof}

The following theorem now describes the approximation properties of our construction under very
general topological conditions.

\begin{theorem}
 Let $(X,\tau)$ be a topological T$_1$ space, $\C$ the family of all closed
sets. Let $(\P,\pi)$ be the dual framework of $(X,\C)$. Then the dual of $(\P,\mu)$ generates the
Wallman compactification of $(X,\tau)$. More precisely, $\mu^d$ is a closed subbase of $\omega X$.
\end{theorem}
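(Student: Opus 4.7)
The plan is to identify $\mu$ with the collection of maximal centered (i.e., finite-intersection-property) families of closed sets in $X$, which is the underlying point set of the Wallman compactification $\omega X$; once this identification is made, recognizing $\mu^d$ as the standard closed base $\{C^{*}\mid C\in\C\}$ of $\omega X$ (where $C^{*}=\{\mathcal{F}\in\omega X\mid C\in\mathcal{F}\}$) becomes formal. The T$_1$ hypothesis is what guarantees that this Wallman space is a genuine compact T$_1$ extension of $X$ via $x\mapsto\C(x)$, but it plays no direct role in the combinatorial identification below.

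I would proceed in three substeps. First, every $W\in\sigma$ has the finite intersection property: applying the defining property of $\sigma$ to a finite $K\subseteq W$ yields $x\in X$ with $W\cap K=\C(x)\cap K=K$, so $x\in\bigcap K\ne\varnothing$. Second, every maximal centered family $\mathcal{F}\subseteq\C$ lies in $\sigma$. Here I rely on two standard properties of such $\mathcal{F}$: closure under finite intersection (if $A,B\in\mathcal{F}$ then $\mathcal{F}\cup\{A\cap B\}$ is still centered, so by maximality $A\cap B\in\mathcal{F}$), and the existence, for each $D\notin\mathcal{F}$, of some $E_D\in\mathcal{F}$ with $E_D\cap D=\varnothing$ (otherwise $\mathcal{F}\cup\{D\}$ would remain centered). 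Given any finite $K\subseteq\C$, partition it as $K=K_+\cup K_-$ according to membership in $\mathcal{F}$; then $\bigcap K_+\cap\bigcap_{D\in K_-}E_D$ is a nonempty finite intersection of elements of $\mathcal{F}$, and any point $x$ in it satisfies $\C(x)\cap K=\mathcal{F}\cap K$, verifying the $\sigma$-condition.

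Third, I combine these via Zorn's lemma. If $W\in\mu$, the first substep gives that $W$ is centered, so Zorn extends it to a maximal centered $\mathcal{F}\supseteq W$; the second substep then places $\mathcal{F}$ in $\sigma$, and maximality of $W$ in $\sigma$ forces $W=\mathcal{F}$, so $W$ is a Wallman point. Conversely, a maximal centered $\mathcal{F}$ lies in $\sigma$ by the second substep, and any $W\supseteq\mathcal{F}$ in $\sigma$ is centered by the first, so maximality of $\mathcal{F}$ yields $W=\mathcal{F}$ and hence $\mathcal{F}\in\mu$. Thus $\mu=\omega X$ as sets, and under this bijection $\mu(C)=\{W\in\mu\mid C\in W\}$ becomes $C^{*}$, so $\mu^d$ is precisely the standard closed base generating the topology of $\omega X$.

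The main obstacle is the second substep: the content of the theorem lies in the fact that a \emph{maximal} centered family of closed sets is rich enough to realize every finite pattern of membership and non-membership by an actual point of $X$. This uses closure under finite intersection together with disjointness witnesses for non-members, and it is the only place where genuine lattice-theoretic properties of closed sets enter the argument.
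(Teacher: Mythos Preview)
Your argument is correct and follows the same overall strategy as the paper: identify $\mu$ with the set of maximal centered families of closed sets (the underlying set of $\omega X$), after which $\mu(C)$ becomes the standard Wallman base set $C^{*}$ and the topological conclusion is immediate. The only substantive difference is in the key step showing that every maximal centered $\mathcal{F}$ lies in $\sigma$. You construct the witness point directly, intersecting the members $K_{+}=K\cap\mathcal{F}$ with individual disjointness witnesses $E_D\in\mathcal{F}$ for each $D\in K_{-}=K\smallsetminus\mathcal{F}$. The paper instead argues by contradiction: it sets $D=\bigcup(K\smallsetminus L)$ (your $\bigcup K_{-}$), supposes $\bigcap L\subseteq D$, checks that $\mathcal{F}\cup\{D\}$ would then have f.i.p., and invokes the prime-filter property (if a finite union of closed sets lies in $\mathcal{F}$ then one of them does) to derive a contradiction. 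Your direct construction is shorter and makes the role of the disjointness witnesses transparent; the paper's route is more indirect but goes on to verify compactness of the resulting space and the embedding $x\mapsto\C(x)$ explicitly, details you reasonably subsume under ``standard Wallman facts'' once $\mu=\omega X$ is established. Your remark that T$_1$ is used only for the embedding, not for the combinatorial identity $\mu=\omega X$, is also accurate.
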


\begin{proof}
We have $\P=\C$ and $\pi=\{\C(x)|\, x\in X\}$, where $\C(x)=\{C|\, C\in\C, x\in C\}$. We will show
that every element of $\sigma$ is a family of closed sets of the topological space $(X,\tau)$,
having f.i.p. Let $W\in\sigma$ and let $K\subseteq W$ be finite. Then $K=W\cap K\in\pi_K$, so there
exists $y\in X$ such that $K=\C(y)\cap K$. Then $K\subseteq\C(y)$, which gives $y\in\bigcap
K\ne\varnothing$. Hence, $W$ has f.i.p and its closedness follows from the fact that $W\subseteq
\P=\C$.

Let us show that $\pi\subseteq\mu$. Suppose that for some $W\in\sigma$ we have $\C(x)\subseteq W$.
Since $(X,\tau)$ is a  T$_1$ space, $\{x\}\in\C(x)\subseteq W$. But $W$ has f.i.p, so every its
element must contain $x$. Then $W\in\C(x)$, so $W=\C(x)$. Therefore, $\C(x)$ is a maximal element
of $\sigma$, that is, $\pi\subseteq \mu$.

Let $\eta$ be the family of all maximal collections of closed sets having f.i.p. Note that, in
other words, $\eta$ is the family of all ultra-closed filters on $(X,\tau)$. We will show that
$\eta=\mu$. As the first step, we will prove that $\eta\subseteq\mu$. Let $U\in\eta$. Take any
finite $K\subseteq \P=\C$ and denote $L=U\cap K$. The set $L$ contains only finitely many elements
of $U$. The family $U$ has f.i.p., so $\bigcap L\ne\varnothing$. Denote $D=\bigcup (K\smallsetminus
L)$. The set $D$ is closed (and could be possibly empty, if $K=L$). Suppose that $\bigcap L=\bigcap
(L\cup\{D\})$. Consider the family $U\cup\{D\}$. If $F\subseteq U\cup\{D\}$ is finite, then
$F\smallsetminus\{D\}$ and also $(F\smallsetminus\{D\})\cup L$ are  finite subsets of $U$, so
$\varnothing\ne\bigcap((F\smallsetminus \{D\})\cup L)=\bigcap(F\smallsetminus \{D\})\cap (\bigcap
L)=(\bigcap(F\smallsetminus \{D\}))\cap (\bigcap(L\cup\{D\}))=(\bigcap (F\smallsetminus\{D\}))\cap
D\cap (\bigcap L)=(\bigcap F)\cap (\bigcap L)\subseteq \bigcap F$. Then $U\cup\{D\}$ has f.i.p. In
particular,  $D\ne\varnothing$, which implies that $K\ne L$. Then $K\smallsetminus
L\ne\varnothing$, and $U\cap(K\smallsetminus L)=\varnothing$. It follows from the maximality of $U$
that $D=\bigcup(K\smallsetminus L)\notin U$. Then $U\cup\{D\}$ is a strictly greater family than
$U$. This contradicts to the maximality of $U$. Therefore, there is some $z\in X$ such that
$z\in\bigcap L$, $z\notin D$. Then $L\subseteq\C(z)$, but $(K\smallsetminus L)\cap\C(z)=
\varnothing$. That means $U\cap K= L= L\cap\C(z)=K\cap\C(z)\in\pi_K$. Then $U\in\sigma$. By
definition of the set $\mu$, there exists $W\in\mu$ such that $U\subseteq W$. But as we have shown
above, $W$ is a family of closed sets having f.i.p. By maximality of $U$, we have $U=W$, so
$U\in\mu$. Therefore, $\eta\subseteq\mu$.

Conversely, let $U\in\mu$. Because also $U\in\sigma$, the family $\sigma$ consists of closed sets
and has f.i.p. Then there exists some $W\in\eta$ with $U\subseteq W$. By the previous paragraph, we
have $W\in\mu\subseteq\sigma$. But $U$ is a maximal element of $\sigma$, so $U=W$ and  $U\in\eta$.
Together we finally have $\mu=\eta$.

Now, consider the framework $(\P^d, \mu^d)$. It holds $\P^d=\mu$, $\mu^d=\{\mu(C)|\, C\in\C\}$,
where $\mu(C)=\{U|U\in\mu, C\in U\}$. Consider the topological space $(Y,\vartheta)$, where
$Y=\P^d$ and its topology is generated by its closed subbase $\mu^d$. Consider $\Psi\subseteq\C$,
such that for every $C_1, C_2,\dots, C_k\in \Psi$ it holds
$$\mu(C_1)\cap\mu(C_2)\cap\dots\cap\mu(C_k)\ne\varnothing. $$ There exist $U\in\mu$ (depending on
the selection of $C_1$, $C_2$, \dots $C_k$), such that $C_1,C_2,\dots C_k\in U$, so
$\varnothing\ne\bigcap_{i=1}^k C_i\in U$. Then $\Psi$ has f.i.p., so there exists a maximal family
$W\subseteq\C=\P$, having f.i.p. and containing $\Psi$.  By the previous paragraph, $W\in\mu$. Now,
if $C\in\Psi$, then also $C\in W$, which gives $W\in\mu(C)$ and so
$$W\in\bigcap_{C\in\Psi}\mu(C)\ne\varnothing.$$ Therefore, $(Y,\vartheta)$ is compact.

\medskip

Finally, consider the mapping $f:X\map Y$ , where $f(x)=\C(x)$. Clearly, $f$ is an injection.
Indeed, for $x\ne y$  we have $\{x\}\in\C(x)$ but $\{x\}\notin\C(y)$, so $f(x)\ne f(y)$. Let
$C\in\C$. Then $f^{-1}(\mu(C))=\{x|\, x\in X, f(x)\in\mu(C)\}=\{x|\, x\in X,
\C(x)\in\mu(C)\}=\{x|\, x\in X, C\in\C(x)\}=\{x|\, x\in X, x\in C\}=C$, so $f$ is continuous.
Further, for any $D\in\C$, $f(D)=\{\C(x)|\, x\in D\}=\{\C(x)|\, x\in X,
\C(x)\in\mu(D)\}=f(X)\cap\mu(D)$, so f is also a closed mapping. Then $f$ is a homeomorphous
embedding of $(X,\tau)$ to the compact space $(Y,\vartheta)$. Moreover, $f(X)=\pi$, so the elements
of $X$ and the families $\C(x)$, which constitute the principal ultra-closed filters generated by
the elements of $X$, may be identified. Consider the set $\mu\smallsetminus \pi$. Then every its
element $W\in \mu\smallsetminus \pi$ is vanishing (that is, non-convergent) -- otherwise, because
of maximality,  $W=\C(z)$, where $z$ is the unique element from $\bigcap W$, which would imply
$W\in \pi$. But then $Y$ is the underlying set of the Wallman compactification of $(X,\tau)$ and
$\mu^d=\{\mu(C)|\, C\in\C\}$ is its closed base.
\end{proof}

\medskip

Among others the previous theorem means that for a compact $T_1$ topological space, its
approximation by a suitable family of finite frameworks may achieve arbitrary precision.

\bigskip

\section{Some Final Remarks in Historical Context}

\medskip

The progress in mathematical and theoretical physics witnesses that various applications of
topology in physics may be far-reaching and illuminating. It could be very difficult to track down
the origins of such applications, but one of the first attempts may be associated with the year
1914, when A. A. Robb came with his axiomatic system for  Minkowski space $\M$, analogous to the
well-known axioms of Euclidean plane geometry. In \cite{Rb} he essentially proved that the
geometrical and topological structure of $\M$ can be reconstructed from the underlying set and a
certain order relation among its points. As it is noted in \cite{Do},  some prominent
mathematicians and physicists criticized the use of locally Euclidean  topology in mathematical
models of the space-time. Perhaps as a reflection of these discussions, approximately at the same
time when de Groot wrote his papers on co-compactness duality, there appeared two interesting
papers \cite{Ze} and \cite{Ze2}, in which E. C. Zeeman studied an alternative topology for
Minkowski space. The Zeeman topology, also referred as the fine topology, is the finest topology on
Minkowski space, which induces the $3$-dimensional Euclidean topology on every space-axis and the
1-dimensional Euclidean topology on the time-axis. Among other interesting properties, it induces
the discrete topology on every light ray. A. Kartsaklis in \cite{Ka} studied connections between
topology and causality. He attempted to axiomatize causality relationships on a point set, equipped
with three binary relations, satisfying certain axioms, by a structure called a {\it causal space}.
He also introduced so called {\it chronological topology}, the coarsest topology, in which every
non-empty intersection of the chronological future and the chronological past of two distinct
points of a causal space is open.

\medskip

In the camp of quantum gravity, there appeared similar efforts and attempts to get some gain from
studying the underlying  structure of space-time -- topological, geometrical or discrete --
however, significantly later. The possible motivation is explained, for instance, in \cite{Ro} by
C.~Rovelli. It is pointed out that the loop quantum gravity leads to a view of the space-time
geometry extremely different from the smooth model at the shortest scale level. Also the topology
of space-time at Planck scales could be very different from that we meet in our everyday experience
and which has been originally extrapolated from the fundamental concepts of the continuous and
differentiable mathematics. Thus, the usual properties and attributes of the space-time, like its
Hausdorffness or metrizability may not be satisfied (for a groundbreaking paper, see \cite{HPS}).
The most important source of inspiration for our paper was the work \cite{CC} of  J. D. Christensen
and L. Crane. Motivated by certain requirements of their research in quantum gravity, these authors
developed a novel axiomatic system for the generalized space-time, called {\it causal site},
qualitatively different from the previous, similar attempts. The notion itself is a successful
synthesis of two other notions, a  Grothendieck site (which is a small category equipped with the
Groethen\-dieck topology) \cite{Ar} and a  causal set of R. Sorkin \cite{So}. One of the most
important merits of the new axiomatic system it is that the causal site is a pointless structure,
not unlike to certain well-known concepts of pointless topology and locale theory. We should also
mention the work of K. Martin and P. Panangaden who studied causality with help of the theoretical
background of domain theory (see, for example, \cite{MP1} and \cite{MP2} and our note in
Section~\ref{causal} after Corollary~\ref{NonContinuous}).

\bigskip

We close the paper by a summary of the main ideas and results:

\medskip

\begin{enumerate}

\item Every causal site generates, by a canonical way, an associated compact $T_1$ topology (and,
naturally, also its de Groot dual). We call this topology {\it weakly causal topology}.

\item The axioms of a causal site connect both of the relations $\sqsubseteq$, $\prec$ together to
one organic unit, so they are no longer independent. However, the strengths of their connection may
vary. Some candidates for a `causal' relation $\prec$ admits of more different appropriate
`inclusion' relations $\sqsubseteq$, and for some of them any corresponding `inclusion relation may
not exist.

\item For every globally hyperbolic Lorentzian manifold there exists a causal site, whose weakly
causal topology is the de Groot dual of the manifold topology and vice versa, that is, the manifold
topology is also the de Groot dual of the weakly causal topology. Thus, the weakly causal topology
and so also the causal site itself has the full information about the usual, manifold topology.

\item The weakly causal topology is weaker than the manifold topology, but in the sense of
mathematical analysis, it is capable of doing the same job as the manifold topology. Both of the
topologies coincide on compact subsets, and so on finite distances, so there is no physical way how
to distinguish between them. Both of these topologies are possible extrapolations of the human's
finite experience with the universe, but because it seems that nature follows a certain `principle
of minimality' in the natural laws, the weakly causal topology may be considered more natural or
fundamental.

As it had been remarked by de Groot in \cite{Gro}, from the philosophical point of view, the
co-compact topology is naturally related to the notion of potential infinity -- in a contrast to
the notion of actual infinity, which is mostly used in the traditional mathematical approach. To
illustrate the difference, consider a countably infinite sequence $x_1, x_2, \dots$ of points lying
on a straight line in space or space-time, with the constant distance between $x_i$ and its
successor $x_{i+1}$. In the usual, Euclidean topology, the sequence is divergent and it approaches
to an improper point at infinity. To make it convergent, one need to embed the space into its
compactification (for instance, the Alexandroff one-point compactification is a suitable one). The
points completed by the compactification then appear at the infinite distance from any other point
of the space. On the other hand, the co-compact topology, which locally coincides with the usual
topology, is already compact and superconnected, so the sequence $x_1, x_2, \dots$ lies eventually
in each neighborhood of every point.

\item In this sense, the universe is `co-compact': the usual, manifold topology is co-compact with
respect to the weakly causal topology and vice versa.

\item There exist causal sites, which cannot be generated by any Lorentzian manifold, and whose
causal relation or its reflexive closure is not a continuous poset. These causal sites still
generate their weakly causal topology and its de Groot dual, which is a corresponding counterpart
of the manifold topology of the relativistic space-time. In such a case, however, the weakly causal
topology need not satisfy the identity $\tau=\tau^{GG}$ and so up to four topologies may arise as
the de Groot dual iterations of the weakly causal topology.

\item Although it is (so far) unknown if non-Lorentzian causal sites have any physical sense, they
could be potentially used for investigation of various physical relationships on the particle
level, beyond the smooth space-time models, as in quantum gravity, in very short distances (below
the Planck length) or in cases in which the metric properties of the space-time collapse to
one-point singularity (and its topological properties still may have some continuation).

\item In Section~\ref{howto} we developed a general method suitable for topologizing various
situations, including physics and computer science. The method is also useful for working with
various scales using different and 'incompatible' topological  models ('smooth' vs. 'discrete') if
considered in the traditional point-set approach.

\item In Section~\ref{approximations} we proved that a compact T$_1$ topological space can be
expressed by a limit of finite framework approximations. Since the weakly causal topology,
naturally generated by every causal site, is compact T$_1$, this topology can be reached by the
process of finite approximations as a limit. According to Section~\ref{Lorentz}, the co-compact
topology on Lorentzian globally hyperbolic manifolds, for example, can be reached in this way.

\item It could be proven that open diamonds also form a causal site, which generates the manifold
topology of a globally hyperbolic Lorentzian manifold directly, without the necessity of using the
de Groot dual. However, this approach would give much weaker results. The topology arising from
regions used as a subbase for open sets does not have, in general, any interesting or significant
properties. In particular, it need not be compact T$_1$ even for a special causal site, generated
by a Lorentzian, globally hyperbolic manifold. As a consequence, the topology need not be
approximable by the family of finite frameworks, described in Section~\ref{approximations}. This
also indicates, that the perceived topology, generated from our extrapolated finite experience with
causal relationships in the Universe, is rather the co-compact topology, than its de Groot dual,
the usual, locally Euclidean manifold topology.

\end{enumerate}

\bigskip

\bigskip

\bibliographystyle{amsplain}

\end{document}